\newtheorem{thm}{Theorem}
\newtheorem{lem}{Lemma}
\newtheorem{prop}{Proposition}
\theoremstyle{remark}
\newtheorem{rem}{Remark}
\newtheorem{ass}{Assumption}
\DeclareMathOperator*{\argmin}{arg\,min}
\def\BibTeX{{\rm B\kern-.05em{\sc i\kern-.025em b}\kern-.08em
    T\kern-.1667em\lower.7ex\hbox{E}\kern-.125emX}}
\begin{document}
\begin{titlepage}
\title{Knowledge-aided Federated Learning for Energy-limited Wireless Networks\\
}
\author{Zhixiong~Chen,~\IEEEmembership{Student Member,~IEEE},
Wenqiang~Yi,~\IEEEmembership{Member,~IEEE},\\
Yuanwei~Liu,~\IEEEmembership{Senior Member,~IEEE},
and Arumugam~Nallanathan,~\IEEEmembership{Fellow,~IEEE}
\thanks{Zhixiong Chen, Wenqiang Yi, Yuanwei Liu, and Arumugam Nallanathan are with the School of Electronic Engineering and Computer Science, Queen Mary University of London, London, U.K. (emails: \{zhixiong.chen, w.yi, yuanwei.liu, a.nallanathan\}@qmul.ac.uk).}
\thanks{Part of this work has been accepted to IEEE International Conference on Communications (ICC) 2023 \cite{chen2022kfl}.}
}

\maketitle
\vspace{-1.9cm}
\begin{abstract}
The conventional model aggregation-based federated learning (FL) approach requires all local models to have the same architecture, which fails to support practical scenarios with heterogeneous local models. Moreover, the frequent model exchange is costly for resource-limited wireless networks since modern deep neural networks usually have over a million parameters.
To tackle these challenges, we first propose a novel knowledge-aided FL (KFL) framework, which aggregates light high-level data features, namely knowledge, in the per-round learning process.
This framework allows devices to design their machine-learning models independently and reduces the communication overhead in the training process.
We then theoretically analyze the convergence bound of the proposed framework under a non-convex loss function setting, revealing that scheduling more data volume in each round helps to improve the learning performance.
In addition, large data volume should be scheduled in early rounds if the total scheduled data volume during the entire learning course is fixed.
Inspired by this, we define a new objective function, i.e., the weighted scheduled data sample volume, to transform the inexplicit global loss minimization problem into a tractable one for device scheduling, bandwidth allocation, and power control.
To deal with unknown time-varying wireless channels, we transform the considered problem into a deterministic problem for each round with the assistance of the Lyapunov optimization framework. Then, we derive the optimal bandwidth allocation and power control solution by convex optimization techniques.
We also develop an efficient online device scheduling algorithm to achieve an energy-learning trade-off in the learning process.
Experimental results on two typical datasets (i.e., MNIST and CIFAR-10) under highly heterogeneous local data distributions show that the proposed KFL is capable of reducing over 99\% communication overhead while achieving better learning performance than the conventional model aggregation-based algorithms. In addition, the proposed device scheduling algorithm converges faster than the benchmark scheduling schemes.
\end{abstract}
\vspace{-0.6cm}
\begin{IEEEkeywords}
Device scheduling, Lyapunov optimization, personalized federated Learning, resource allocation
\end{IEEEkeywords}

\section{Introduction}
The increasing demands for intelligent services, such as augmented reality/virtual reality (AR/VR) and Internet-of-Things (IoT) applications, motivate the integration of machine learning in future wireless networks \cite{9687500}. Federated learning (FL) is one of the most promising distributed learning frameworks to reduce the communication traffic load of intelligent services, which enables devices to collaboratively train machine learning models by periodically exchanging model parameters between devices and the parameter server instead of raw user data \cite{9460016}. However, the model aggregation nature of conventional FL confronts the following two limitations for its implementation in wireless networks:
1) \emph{High Communication Overhead:} The uploading of model/gradient parameters is costly for devices since modern deep neural network (NN) architectures usually possess massive parameters. For instance, the widely used MobileNet \cite{Sandler_2018_CVPR}, a convolutional NN (CNN) for on-device image processing, has 6.9 million parameters, corresponding to 27.6 MB.
Training such a model requires devices to upload 27.6 MB of data per round. Considering hundreds of rounds and multiple devices, the communication overhead is heavy for wireless networks with limited spectrum and energy resources.
2) \emph{Heterogeneous Local Models:} In practical wireless networks, devices are usually equipped with different NNs in terms of architectures and model sizes due to their heterogeneous computing capabilities and storage resources \cite{9770094}. In this case, the traditional model aggregation-based FL approaches fail to coordinate devices to perform the learning process.
To break these two limitations, state-of-the-art studies focus on the designs of communication-efficient FL and heterogeneous FL.

\vspace{-0.5cm}
\subsection{Related Works}
To enable communication-efficient FL in resource-limited wireless networks, existing works mainly focused on device scheduling \cite{9237168, 9207871, 9760232, 9292468}, model quantization \cite{9277666, 9269459, 9712310, 9425020}, and model pruning \cite{9598845, 9409149, 9707474}.
Device scheduling methods select a small subset of devices to participate in the per-round training process, thus reducing the communication burden and mitigating the straggler effect when devices have random or heterogeneous computing speed.
The device selection and bandwidth allocation in \cite{9237168, 9207871} guaranteed long-term learning performance in bandwidth-limited wireless networks.
The probabilistic scheduling policy for FL proposed in \cite{9760232} effectively minimized the model uploading latency and improved convergence speed.
The co-design of learning and device selection in \cite{9292468} reduced convergence time in resource-constrained wireless networks. Although these device scheduling approaches efficiently alleviate communication burden, transmitting the entire model is arduous for devices with weak channels and limited energy.
To tackle this issue, the model quantization compresses devices' model updates before transmitting to the parameter server, thus reducing the transmitted data volume and communication overhead for devices \cite{9277666, 9269459}.
Specifically, the model quantization approach in \cite{9712310} enabled edge devices to adjust their quantization proportional according to their communication resources for balancing training accuracy and communication overhead.
The heterogeneous quantization method in \cite{9425020} allocated different aggregation weights to clients for efficiently improving convergence speed.
While the model quantization is demonstrably effective, it introduces additional noise during training, which ultimately degrading the trained model's performance.
The model pruning is able to simultaneously reduce communication and computation costs by removing less important weights from the original model.
The joint design of the pruning ratio and wireless resource allocation in \cite{9598845} significantly improved the convergence rate of FL.
In \cite{9409149}, the NN pruning was integrated into FL to improve learning speed and guarantee training latency.
A random model pruning approach was adopted in \cite{9707474} to generate several subnets from the global model to adapt the channel condition of different devices. It reduced both communication overhead and computation loads.
The above three approaches reduce communication overhead while degrading the final model's accuracy.
Besides these approaches, our previous work \cite{chen2022federated} enabled devices to train the feature extractor part of NNs collaboratively, while the predictor part for devices is localized for personalization. It reduced communication overhead and improved learning performance in heterogeneous data distribution scenarios.
However, these approaches still require heavy parameter transmission in the learning process.

To allow devices equipped with heterogeneous models in FL, knowledge distillation (KD)-based FL approaches were developed and attracted much attention. In practical wireless networks, devices usually possess different computation capabilities and communication resources. Thus, requiring all the local models to be of the same architecture in many application scenarios may be ineffective.
KD is a teacher-student paradigm which transfers the knowledge distilled from the teacher model to the student model \cite{temperaturesoftmax}.
Integrating KD into FL allows devices to independently design their models according to channel conditions and computation capabilities.
Specifically, the federated KD approach in \cite{9839214} effectively enabled federated training between heterogeneous models by aggregating local models' logits on a public dataset.
In \cite{9121290}, an auxiliary distillation dataset generated by mixing local training data was adopted to empower the FL process, effectively reducing convergence time.
In \cite{pmlr-v139-zhu21b}, a lightweight generator was deployed at the server to ensemble user information and broadcast to devices to regulate their local training process.
By deploying an unlabelled dataset on both the server and devices, a global model was trained using the averaged outputs of local models on this dataset as the supervision label \cite{8904164, NEURIPS2020_18df51b9}.
The adaptive mutual KD and dynamic gradient compression approach in \cite{wu2022communication} significantly reduced communication costs and achieved competitive results with centralized model learning.
The federated distillation method \cite{yu2020salvaging} regularized local models to mitigate overfitting during training by treating the global model as the teacher and the local models as the students.
Besides enabling devices to design their machine learning models independently, the KD-based FL substantially reduces the transmitted data volume in the wireless channels because output logits are required to upload in the learning process instead of heavy model/gradient parameters.
However, these KD-based FL approaches require an extra public dataset to align the student and teacher models' outputs, increasing the computation costs. Moreover, their performance may significantly degrade with the increase in the distribution divergence between the public and on-device datasets that are usually non-independent and identically distributed (non-IID).
\vspace{-0.4cm}
\subsection{Motivations and Contributions}
Although the communication-efficient FL in \cite{9237168, 9207871, 9760232, 9292468, 9277666, 9269459, 9712310, 9425020, 9598845, 9409149, 9707474} can reduce communication overhead, they degrade the final model accuracy and require heavy parameter transmission in the learning process. In addition, the KD-based FL in \cite{9839214, 9121290, pmlr-v139-zhu21b, 8904164, NEURIPS2020_18df51b9, wu2022communication, yu2020salvaging} allowed devices to ensemble heterogeneous local models. However, they rely on the public dataset, which may not be practical for many scenarios.
To break these limitations, this work aims to enable collaborative training for devices equipped with heterogeneous models in a communication-efficient way, avoiding the reliance on heavy model transmission and extra public datasets.
Inspired by the human experience in discriminating between different objects and its successful application in clustering analysis, the same class of objects or data usually have similar high-level features, while different objects have distinct features \cite{frades2010overview}. In addition, a general insight for modern deep learning models is that the lower layers (close to the input) are primarily responsible for feature extraction, while the upper layers (proximate to the output) focus on complex pattern recognition \cite{goodfellow2016deep}.
We aim to enable devices for collaborative training by aggregating their output of lower layers of the NNs, namely knowledge, in the per-round training process. This design effectively reduces the communication overhead since the dimensions of the knowledge are usually much smaller than that of the model.
The main contributions of this paper are summarized as follows:
\begin{itemize}
  \item We propose a novel KFL framework in which devices collaboratively train models by uploading their knowledge of different data classes to the edge server for aggregation. This design reduces the transmitted data volume in the wireless channels, allowing devices to design their machine-learning models  independently according to their computation capabilities and communication conditions.

  \item We theoretically analyze the convergence bound of the proposed KFL framework under the general non-convex loss function setting, which indicates that scheduling more data samples in each round is able to improve the learning performance. In addition, when the total number of scheduled data volume during the entire learning course is fixed, more data volume should be scheduled in the early rounds.
      Following the experimental investigation of temporal scheduling policies in \cite{9237168}, this work further theoretically analyzes how the temporal device scheduling patterns affect the final learning performance through the convergence analysis.

  \item We formulate a long-term device scheduling, bandwidth allocation, and power control problem under limited devices' energy budgets with the aid of the convergence bound. To deal with unpredicted time-varying wireless channels and enable online device scheduling, we first transform the original problem into a deterministic problem in each round with the assistance of the Lyapunov optimization framework. Then, we derive the optimal bandwidth allocation and power control through convex optimization techniques. Finally, we develop an efficient polynomial-time algorithm to solve the device scheduling policy with $\mathcal{O}(\sqrt{V}, 1/V)$ energy-learning trade-off guarantee, where $V$ is an algorithm-specific parameter.

  \item We experimentally verify the correctness of our theoretical results, i.e., more data samples should be scheduled in the early rounds when the total scheduled data volume in the entire learning course are fixed. Compared with benchmark FL algorithms, the proposed KFL framework saves 99\% communication overhead and boosts 2.1\% and 6.65\% accuracy on MNIST and CIFAR-10 datasets, respectively. In addition, The proposed online device scheduling algorithm achieves a faster convergence speed than benchmark scheduling approaches.
\end{itemize}
\vspace{-0.5cm}
\subsection{Organization and Notations}
The rest of this paper is organized as follows: In Section \ref{sec:system_model}, we introduce the proposed KFL system and learning cost, then formulate the global loss minimization problem.
The convergence analysis and problem transformation are illustrated in Section \ref{sec:convergence_ana_transform}.
The joint device scheduling, bandwidth allocation, and power control algorithm are developed in \ref{sec:alg_design}.
Section \ref{sec:simulation} verifies the effectiveness of the proposed scheme by simulation. The conclusion is drawn in Section \ref{sec:conclusion}.
For convenience, we use ``$\buildrel \Delta \over = $'' to denote ``is defined to be equal to'', $\left|\cdot\right|$ denote the size operation of a set, $\nabla(\cdot)$ denote gradient operator, $\left\langle {\cdot,\cdot} \right\rangle $ denote inner product operator, and ``$\left\|  \cdot  \right\|$'' denote the $\ell_2$ norm throughout this paper. The main notations used in this paper are summarized in Table \ref{tab:notation}.

\begin{table}[ht]\scriptsize
\vspace{-0.5cm}
\caption{Notation Summary}
\vspace{-0.2cm}
\label{tab:notation}
\begin{tabular}{p{1.8cm}|p{6.2cm}|p{1.8cm}|p{6.2cm}}
\hline
Notation & Definition &Notation & Definition \\
\hline
$\mathcal{K}$; $K$;   & Set of devices; size of $\mathcal{K}$
& $\mathcal{C}$; $C$; & Set of classes; size of $\mathcal{C}$ \\
$\mathcal{D}_k$; $D_k$ & Local dataset of device $k$; size of $\mathcal{D}_k$
& $\mathcal{D}$; $D$ & Overall dataset in the system; size of $\mathcal{D}$ \\
$\mathcal{D}_{k,c}$; $D_{k,c}$ & Local dataset of $c$ class; size of $\mathcal{D}_{k,c}$
& $\bm{w}_k$; $\bm{u}_k$; $\bm{v}_k$; $\bm{W}$ & Local model; local feature extractor; local predictor of device $k$; all local models \\
$F_k(\bm{u}_k,\bm{v}_k)$; $F(\bm{W})$ & Local empirical loss function of device $k$; global empirical loss function
& $\eta_u$; $\eta_v$ & Learning rate for feature extractor and predictor \\
$L_k(\bm{u}_k)$ & Local knowledge loss function
& $\lambda$ & Knowledge loss weight \\
$\bm{\Omega}_{k,c}$; $\bm{\Omega}_k$ & Device $k$'s knowledge about class $c$; device $k$'s knowledge for all classes
&$\bm{\Omega}_{c}$; $\bm{\Omega}$ & Global knowledge about class $c$; global knowledge about all classes \\
$\bm{S}_t$; $\tau$ & Scheduling policy in round $t$, i.e., the set of scheduled devices; local iteration number
& $f_k$; $C_{k}$ & CPU frequency of device $k$; Computation workload of one data sample at device $k$ \\
$p_{k,t}$; $p_{k,\max}$ & Transmit power of device $k$ in round $t$; maximum transmit power of device $k$
& $C_{k}$; $Q$ & Computation workload of one data sample at device $k$; Data size of local knowledge\\
$B$; $\bm{\theta}_t$ & Wireless bandwidth;  the proportion of $B$ allocated to devices in round $t$
& $E_k$; $\mathcal{T}_{\max}$ & Total energy budget of device $k$; Maximum completion time for each round\\
\hline
\end{tabular}
\vspace{-1.2cm}
\end{table}

\vspace{-0.2cm}
\begin{figure*}
\centering
\subfigure[]{\label{fig:sysmodel}
\includegraphics[width=0.45\linewidth]{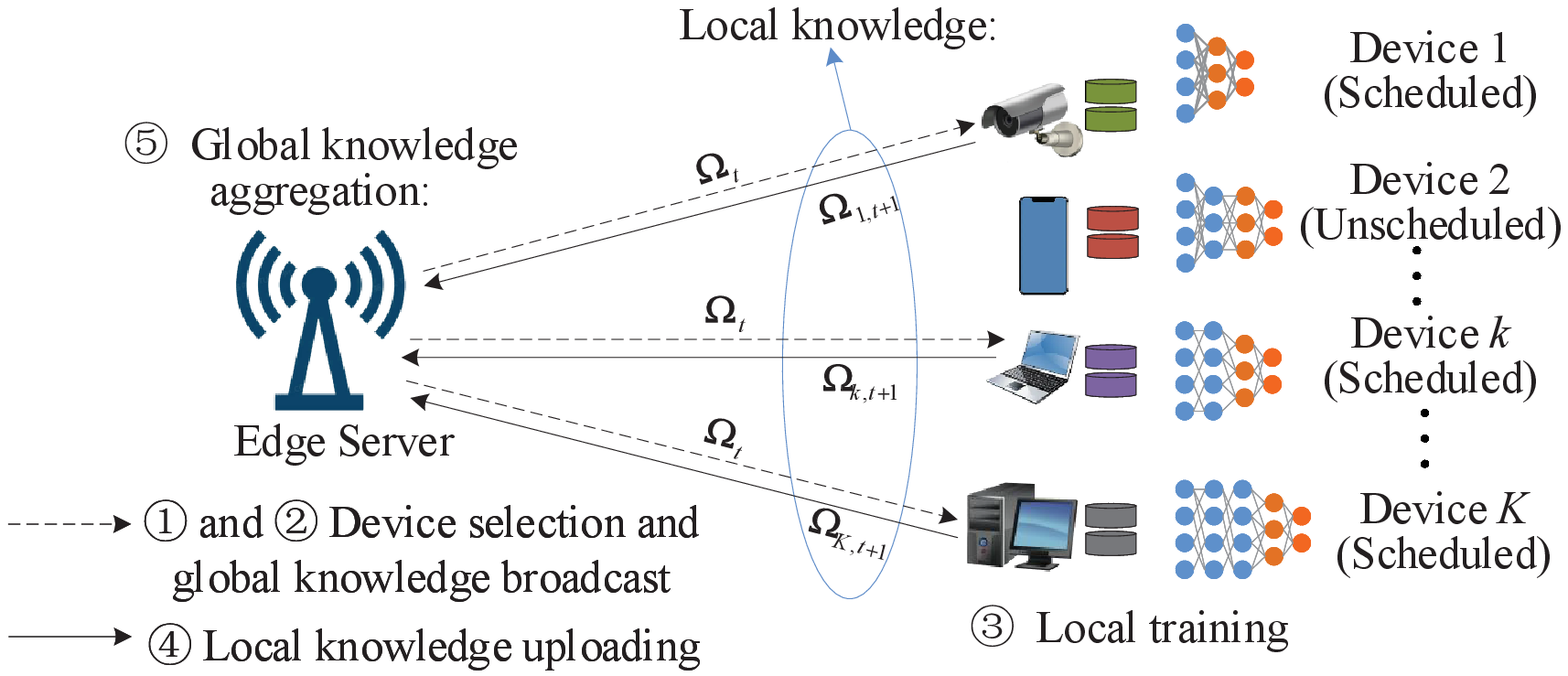}}
\hspace{0.01\linewidth}
\subfigure[]{\label{fig:local_training}
\includegraphics[width=0.3\linewidth]{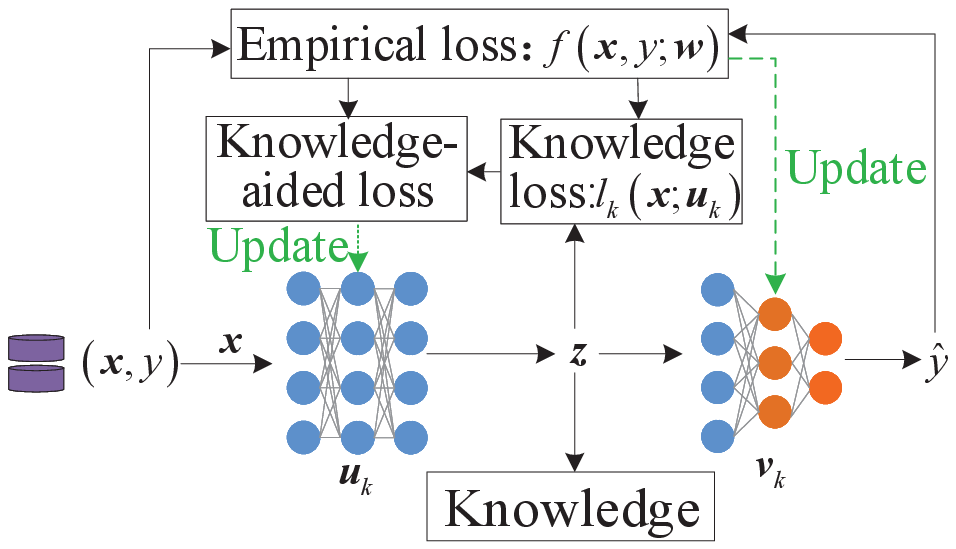}}
\vspace{-0.3cm}
\caption{The illustrated KFL over wireless networks: (a) Federated learning with knowledge aggregation mechanism, where devices have different local models; and (b) Local training process with the proposed knowledge-aided loss.}
\label{fig:sys_model}
\vspace{-1cm}
\end{figure*}

\section{System Model and Learning Mechanism}\label{sec:system_model}
In the considered KFL system, as shown in Fig. \ref{fig:sys_model}, an edge server coordinates $K$ different devices to train machine learning models for classification or recognition tasks. Unlike the conventional FL that requires all devices' models to be of the same architecture, the KFL in this work allows devices to be equipped with heterogeneous models. The devices are indexed by $\mathcal{K} = \left\{ 1,2, \cdots ,K \right\}$. For the dataset at devices, the number of data classes in the classification or recognition task is $C$, indexed by $\mathcal{C} = \left\{ 1,2, \cdots, C \right\}$.
Each device $k$ ($k \in \mathcal{K}$) has a local dataset $\mathcal{D}_k$ with $D_k = \left| \mathcal{D}_k \right|$ data samples, in which the data samples belong to $c$-th class ($c \in \mathcal{C}$) is denoted as $\mathcal{D}_{k,c}$ with $D_{k,c} = \left| \mathcal{D}_{k,c} \right|$ data samples. Thus, $\mathcal{D}_k =  \cup \left\{ \mathcal{D}_{k,c} \right\}_{c = 1}^C$. Without loss of generality, we assume there is no overlapping between datasets from different devices, i.e., $\mathcal{D}_k \cap \mathcal{D}_h = \emptyset$, $\forall k, h \in \mathcal{K}$. Thus, the entire dataset, $\mathcal{D} =  \cup \left\{ \mathcal{D}_k \right\}_{k = 1}^K$, is with total number of samples $D = \sum\nolimits_{k = 1}^K D_k$. For ease of presentation, we use $\mathcal{D}_c$ to represent all data samples belonging to class $c$ in $\mathcal{D}$. That is, $\mathcal{D}_c = \cup \left\{ \mathcal{D}_{k,c} \right\}_{k = 1}^K$ with $D_c = \sum\nolimits_{k = 1}^K D_{k,c}$ data samples.
\vspace{-0.5cm}
\subsection{Knowledge-aided Loss Function for Local Training}
Let $\zeta=(\bm{x}, y)$ denote a data sample in $\mathcal{D}$, where $\bm{x} \in \mathbb{R}^d$ is the $d$-dimensional input feature vector, $y \in \mathbb{R}$ is the corresponding ground-truth label.
Let $\bm{z} \in \mathbb{R}^p$ be the latent feature vector. As shown in Fig. \ref{fig:local_training}, the machine learning model parameterized by $\bm{w}=[\bm{u},\bm{v}]$ consists of two components: a feature extractor $h: \bm{x} \rightarrow \bm{z}$ parameterized by $\bm{u}$, and a label predictor $g: \bm{z} \rightarrow \hat{y}$ parameterized by $\bm{v}$.
Before discussing the knowledge-aided loss function, we introduce two fundamental loss functions, i.e., empirical loss and knowledge loss.
The empirical loss supervises the local models' training to minimize the prediction error, while the knowledge loss achieves knowledge sharing among devices.
\begin{enumerate}[fullwidth,itemindent=1em,label=\arabic*)]
  \item \textbf{Empirical loss function for local model update}: Let $f(\bm{x},y;\bm{w})$ denote the sample-wise empirical loss function, which quantifies the error between the ground-truth label, $y$, and the predicted output, $\hat{y}$, based on model $\bm{w}$. Thus, the local empirical loss function at device $k$, which measures the model error on its local dataset $\mathcal{D}_k$, is defined as
      \begin{align}
        F_k(\bm{w}_k) = F_k(\bm{u}_k,\bm{v}_k) \buildrel \Delta \over = \frac{1}{D_k}\sum\nolimits_{(\bm{x},y) \in \mathcal{D}_k} f(\bm{x},y;\bm{w}_k),
      \end{align}
      where $\bm{w}_k$ denotes the machine learning model of device $k$; $\bm{u}_k$ and $\bm{v}_k$ correspond to its feature extractor and label predictor parts, respectively. For ease of presentation, we use $\bm{W} = (\bm{w}_1,\bm{w}_2, \cdots,\bm{w}_K)$ to denote all the devices' models throughout this paper. The global loss function associated with all distributed local datasets is given by
      \begin{align}\label{eq:glo_EM_loss}
        F(\bm{W}) = F(\bm{w}_1,\bm{w}_2, \cdots,\bm{w}_K) \buildrel \Delta \over = \frac{1}{D}\sum\nolimits_{k = 1}^K  D_k F_k(\bm{w}_k).
      \end{align}
      The federated learning process is done by solving the following problem:
        \begin{align}\label{eq:FL_opt_prob}
        \min\limits_{\bm{W}=(\bm{w}_1,\bm{w}_2, \cdots,\bm{w}_K)} F(\bm{W}).
        \end{align}
      To preserve the data privacy of devices, the devices collaboratively learn $\bm{W}$ without transmitting the raw training data.
      Note that the conventional FL algorithms, e.g., FedAvg \cite{pmlr-v54-mcmahan17a}, aim to find an optimal shared global model $\bm{w}^*=\bm{w}_1^*=\cdots=\bm{w}_K^*$ to minimize the global loss $F(\bm{W})$.
      However, this work aims to develop a personalized FL algorithm which trains personalized models for each device to solve the problem \eqref{eq:FL_opt_prob}, where different local models are used to fit user-specific data and capture the common knowledge distilled from data of other devices.

  \item \textbf{Knowledge loss function for local feature extractor update}: When devices are equipped with heterogeneous models, the conventional FL algorithms fail to coordinate devices to train models collaboratively. To tackle this issue, we introduce the knowledge loss function to regularize devices' feature extractors in the training process, achieving knowledge sharing between devices. It is worth mentioning that the knowledge of different devices and classes has the same dimensionality that equals the dimension of feature extractors' output, i.e., $p$.
    Let $\bm{\Omega}_{k,c}$ denote device $k$'s knowledge about data class $c$, which is defined as the average output of its feature extractor based on the data samples in $\mathcal{D}_{k,c}$, that is
    \begin{align}
    \bm{\Omega}_{k,c} = \frac{1}{D_{k,c}}\sum\nolimits_{(\bm{x},y) \in \mathcal{D}_{k,c}} h_k(\bm{x};\bm{u}_{k,t}),
    \end{align}
    where $h_k(\cdot)$ denote the feature extractor of device $k$.
    Let $\bm{\Omega}_{c}$ denote the global knowledge about class $c$ that aggregates all devices' knowledge of class $c$, i.e.,
    \begin{align}\label{eq:c_class_know}
    \bm{\Omega}_c = \frac{1}{D_c}\sum\nolimits_{k = 1}^K D_{k,c}\bm{\Omega}_{k,c}.
    \end{align}
    We use $\bm{\Omega} = (\bm{\Omega}_{1},\bm{\Omega}_{2}, \cdots ,\bm{\Omega}_{C} )$ to denote the aggregated global knowledge.
    For each data sample $(\bm{x}, y) \in \mathcal{D}_{k,c}$ ($\forall k \in \mathcal{K}, c \in \mathcal{C}$), we define the knowledge loss of device $k$'s feature extractor as $l_k(\bm{x};\bm{u}_k) = \frac{1}{2}{\left\| h_k(\bm{x};\bm{u}_k) - \bm{\Omega}_c \right\|^2}$, which quantifies the difference between the extracted feature of device $k$ on data sample $(\bm{x}, y)$ and the global feature of class $c$. Thus, the knowledge loss of device $k$ is
    \begin{align}\label{eq:knowledge_loss}
    L_k(\bm{u}_k) = \frac{1}{D_k}\sum\nolimits_{c = 1}^C  \sum\nolimits_{(\bm{x},y) \in \mathcal{D}_{k,c}} \frac{1}{2}{\left\| h_k(\bm{x};\bm{u}_{k,t}) - \bm{\Omega}_c \right\|^2},
    \end{align}

    which measures the difference between local knowledge and global knowledge. According to \eqref{eq:knowledge_loss}, devices only learn the knowledge of their local data types instead of all the data types. However, it fits devices' local models to their specific data and improves the learning performance on heterogeneous local data scenarios. In addition, devices can use global knowledge to regularize the local training process when new data classes are generated and rapidly adapt their local models to these new class data.
\end{enumerate}

In this work, we define a \textbf{knowledge-aided loss function} based on the empirical and knowledge loss functions, i.e., $F_k(\bm{u}_k, \bm{v}_k) + \lambda L_k(\bm{u}_k)$, to guide the feature extractor training for device $k$ ($\forall k \in \mathcal{K}$), where $\lambda$ is a hyperparameter to balance the empirical loss and knowledge loss for device $k$. For the label predictor, we still use the conventional empirical loss function.

\subsection{Knowledge-aided Federated Learning Mechanism}
The conventional FL approaches rely on aggregating devices' model/gradient parameters in each round, which induces remarkable communication overhead for wireless networks and requires all the local models to be of the same architecture.
To tackle these issues, we propose a novel KFL algorithm to enable collaborative training between heterogeneous local models. Specifically, devices upload their lightweight \emph{knowledge} to the server for aggregation in the per-round training process instead of the heavy model/gradient parameters. The learning process repeats the following steps until the devices' models converge, as shown in Fig. \ref{fig:sysmodel}.
\begin{enumerate}
  \item \textbf{Device selection}: The edge server selects a subset of devices from $\mathcal{K}$ to participate in the training process in the current round. Let $\alpha_{k,t}\in \{0,1\}$ denote the scheduling indicator of device $k$ in round $t$, where $\alpha_{k,t}=1$ indicates that device $k$ is scheduled in round $t$, $\alpha_{k,t}=0$ otherwise. Thus, the scheduled device set in round $t$ is $\bm{S}_t = \{k: \alpha_{k,t}=1, \forall k \in \mathcal{K}\}$.

  \item \textbf{Knowledge broadcast}: In each round $t$, the edge server broadcasts the latest global knowledge, i.e., $\bm{\Omega}_t = (\bm{\Omega}_{1,t},\bm{\Omega}_{2,t}, \cdots ,\bm{\Omega}_{C,t} )$, to all scheduled devices to regularize their local training process, where $\bm{\Omega}_{c,t}$ is the $c$-th class knowledge in round $t$ that is computed in \eqref{eq:c_class_know}.

  \item \textbf{Local training}: All scheduled devices update their local models after receiving the global knowledge, $\bm{\Omega}_t$, by performing $\tau$ steps gradient descent on its local dataset, as shown in Fig. \ref{fig:local_training}. For device $k$, its local feature extractor in $t$-th round is updated as
      \begin{align}\label{eq:feature_update}
      \bm{u}_{k,t,l+1} = \bm{u}_{k,t,l} - \eta_u \Big{(}\nabla_u F_k(\bm{u}_{k,t,l},\bm{v}_{k,t,l}) + \lambda \nabla L_k(\bm{u}_{k,t,l})\Big{)}, \forall l \in \{0, 1, \cdots, \tau-1\},
      \end{align}
      and its predictor is updated by
      \begin{align}\label{predictor_update}
      \bm{v}_{k,t,l + 1} = \bm{v}_{k,t,l} - \eta_v \nabla_v F_k(\bm{u}_{k,t,l},\bm{v}_{k,t,l}), \forall l \in \{0, 1, \cdots, \tau-1\},
      \end{align}
      where $\eta_u$ and $\eta_v$ are the learning rate of feature extractor and predictor, respectively, $\lambda$ is a hyperparameter to balance the empirical loss and knowledge loss for devices $k$.

  \item \textbf{Knowledge computing}: After finishing the local iterations, all scheduled devices compute their knowledge for each class $c$ ($c \in \mathcal{C}$) as $\bm{\Omega}_{k,c,t+1} = \frac{1}{D_{k,c}}\sum\nolimits_{(\bm{x},y) \in \mathcal{D}_{k,c}} h_k(\bm{u}_{k,t+1};\bm{x})$.
      The knowledge of device $k$ for all classes is denoted by $\bm{\Omega}_{k,t+1} = (\bm{\Omega}_{k,1,t+1},\bm{\Omega}_{k,2,t+1}, \cdots,\bm{\Omega}_{k,C,t+1})$.

  \item \textbf{Knowledge aggregation}: After finishing the local knowledge computing, all scheduled devices upload their knowledge to the edge server through wireless channels for aggregation. Specifically, the edge server computes the global shared knowledge of $c$-th class as
      \begin{align}\label{eq:know_agg}
      \bm{\Omega}_{c,t+1} = \frac{\sum\nolimits_{k \in \bm{S}_t} D_{k,c} \bm{\Omega}_{k,c,t+1}}{\sum\nolimits_{k \in \bm{S}_t} D_{k,c}}.
      \end{align}
      The aggregated global knowledge in round $(t+1)$ is $\bm{\Omega}_{t+1} = (\bm{\Omega}_{1,t+1},\bm{\Omega}_{2,t+1}, \cdots ,\bm{\Omega}_{C,t+1} )$.
\end{enumerate}

To better illustrate the proposed KFL, we summarize the detailed steps of its training process in Algorithm \ref{alg:KFL}.
It is worth mentioning that the proposed KFL requires devices to upload the knowledge to the edge server for aggregation instead of the entire local models. Devices' knowledge is generated by averaging the output of their local feature extractor on the data samples from the same class, and the process is irreversible \cite{8835269}. Thus, KFL is more beneficial for privacy preservation than the model aggregation-based FL algorithms exchanging local models between devices and the edge server. The reason is that the local models are updated according to the devices' private data, whose pattern is encoded into the model parameters. Therefore, if a corresponding decoder could be constructed, the private data or statistics would be recovered inversely \cite{ hitaj2017deep}.
\begin{algorithm}
\floatname{algorithm}{Algorithm}
\algsetup{linenosize=} \small
\caption{Knowledge-aided Federated Learning Algorithm}
\label{alg:KFL}
\begin{algorithmic}[1]
\STATE \textbf{Initialization:} $t=0$, training round $T$, and each device initials its local model $\bm{w}_{k,t}$;
\STATE \textbf{Server side:}
\FOR{$t=0, 1, \cdots, T-1$}
    \STATE Select a subset of devices ($\bm{S}_t$) and broadcasts the latest global knowledge, i.e., $\bm{\Omega}_t$, to them.
    \IF{Receive the knowledge from the selected devices}
        \STATE Aggregate the global knowledge according to \eqref{eq:know_agg}.
    \ENDIF
\ENDFOR

\STATE \textbf{Device side:}
\IF{Device $k$ is scheduled}
    \STATE Receive the global knowledge, $\bm{\Omega}_t$, from the edge server;
    \FOR{$l=0, 1, \cdots, \tau-1$}
        \STATE Update the local feature extractor, $\bm{u}_{k,t,l+1}$, based on \eqref{eq:feature_update};
        \STATE Update the local predictor, $\bm{v}_{k,t,l+1}$, based on \eqref{predictor_update};
    \ENDFOR
    \STATE Compute their knowledge for each class $c$ ($c \in \mathcal{C}$) as $\bm{\Omega}_{k,c,t+1} = \frac{1}{D_{k,c}}\sum\nolimits_{(\bm{x},y) \in \mathcal{D}_{k,c}} h_k(\bm{u}_{k,t+1};\bm{x})$.
    \STATE Upload the local knowledge $\bm{\Omega}_{k,t+1} = (\bm{\Omega}_{k,1,t+1},\bm{\Omega}_{k,2,t+1}, \cdots,\bm{\Omega}_{k,C,t+1})$ to the edge server.
\ENDIF
\end{algorithmic}
\end{algorithm}

\subsection{Knowledge-aided Federated Learning Cost Model}
In the following, we characterize the learning cost model in each KFL round, including computation cost and communication cost.
\begin{enumerate}[fullwidth,itemindent=1em,label=\arabic*)]
  \item \textbf{Computation Cost}: We consider the central processing unit (CPU) adopted to perform training on each device. Denote the CPU clock frequency of device $k$ by $f_k$ (cycles per second). The number of float-point operations (FLOPs) per cycle is represented by $n_k$.
      Let $C_k$ denote the required number of FLOPs to process one data sample at device $k$. Consequently, the local training latency of device $k$ is given by
      \begin{align}
      \mathcal{T}_k^{\rm{L}} = {\tau D_k C_k}/{(f_k n_k)}.
      \end{align}
      The corresponding energy consumption of device $k$ is
      \begin{align}
      E_{k}^{\rm{L}} = \kappa \tau D_k C_k f_k^2 /n_k,
      \end{align}
      where $\kappa$ is the power coefficient, depending on the chip architecture.
  \item \textbf{Communication Cost}: We consider that the frequency division multiple access is employed for devices to upload their knowledge. The total available wireless bandwidth is $B$Hz. Let $p_{k,t}$ denote the transmit power of device $k$, its maximum value is $p_{k,\max}$. The channel gain between device $k$ and the edge server is represented by $h_{k,t}$, which considers the path loss and Rayleigh fading. In addition, the channel remains unchangeable within one round but varies independently over rounds. Let $\theta_{k,t}\in [0,1]$ denote the proportion of the overall bandwidth allocated to device $k$ in round $t$, and $\bm{\theta}_t=(\theta_{1,t},\theta_{2,t}, \cdots, \theta_{K,t})$. The uplink rate of device $k$ can be described as $r_{k,t}= \theta_{k,t}B\log_2(1 + \frac{p_{k,t}h_{k,t}}{\theta_{k,t}B N_0})$, where $N_0$ is the power density of noise.
      Note that the proposed KFL requires that the knowledge of different devices and classes has the same dimensionality. Thus, the number of parameters in the knowledge of different devices is the same, denoted as $Q$.
      Each parameter is quantized by $q$ bits.
      Thus, the local knowledge uploading latency of device $k$ is
      \begin{align}
      \mathcal{T}_{k,t}^{\rm{U}} = \frac{Qq}{{{r_{k,t}}}} = \frac{Qq}{\theta_{k,t}B\log_2 \big{(} 1 + \frac{p_{k,t}h_{k,t}}{\theta_{k,t}BN_0} \big{)}}.
      \end{align}
      The corresponding energy consumption is
      \begin{align}\label{eq:trans_energycon}
      E_{k,t}^{\rm{U}} = p_{k,t} \mathcal{T}_{k,t}^{\rm{U}} = \frac{\theta_{k,t} B \mathcal{T}_{k,t}^{\rm{U}}N_0}{h_{k,t}}\Big{(} 2^{\frac{Qq}{\theta_{k,t} B \mathcal{T}_{k,t}^{\rm{U}}}} - 1 \Big{)}.
      \end{align}
\end{enumerate}

According to above modes, the energy consumption of device $k$ in round $t$ is $E_{k,t}= E_{k,t}^{\rm{L}} + E_{k,t}^{\rm{U}}$.
Note that we ignore the global knowledge broadcasting and aggregation latency in the above discussion because the broadcasting process occupies the entire bandwidth. The edge server has large transmit power, so the broadcasting latency is negligible. Moreover, the edge server is usually computationally powerful, and the global knowledge aggregation latency can be ignored compared to the above computation and communication latencies.

\subsection{Problem Formulation}
In this work, we aim to improve the learning performance by minimizing the global loss after $T$ rounds, i.e., $F(\bm{W}_T)$, under the energy budget constraint of devices, where $\bm{W}_T$ denote the local models in $T$-th round. Towards this end, we jointly optimize the device scheduling, bandwidth allocation, and power control policies. The optimization problem is given by
\begin{align}
\mathcal{P}:~~~~~~~&\min_{\left\{ \bm{S}_t,\bm{\theta}_t, \bm{p}_t \right\}_{t = 0}^{T-1}}~ F(\bm{W}_T)\label{prob:P}\\
\text{s.~t.~~}& \sum\nolimits_{t = 0}^{T - 1} {{E_{k,t}}}  \le {E_k}, \forall k \in \mathcal{K}, \label{cons:P_1}\tag{\theequation a}\\
&\mathcal{T}_{k,t}^{\rm{L}} + \mathcal{T}_{k,t}^{\rm{U}} \le \mathcal{T}_{\max}, \forall k \in \mathcal{K}, \forall t, \label{cons:P_2}\tag{\theequation b}\\
&\sum\nolimits_{k = 1}^K \theta_{k,t} \le 1, \forall t,\label{cons:P_3}\tag{\theequation c}\\
&0 \le \theta_{k,t} \le 1, \forall k \in \mathcal{K}, \forall t, \label{cons:P_4}\tag{\theequation d}\\
&\alpha_{k,t} \in \left\{ {0,1} \right\},\forall k \in \mathcal{K}, \forall t, \label{cons:P_5}\tag{\theequation e} \\
&0 \le p_k \le p_{k,\max}, \forall k \in \mathcal{K}.\label{cons:P_6}\tag{\theequation f}
\end{align}
In problem $\mathcal{P}$, \eqref{cons:P_1} imposes restrictions on the energy consumption of each device $k$ cannot exceed its budget $E_k$. \eqref{cons:P_2} stipulates that the completion time of each round cannot exceed its maximum allowable delay.
\eqref{cons:P_3} indicates that the wireless bandwidth allocated to all devices cannot exceed the total available bandwidth resource. \eqref{cons:P_4} restricts the wireless bandwidth resource allocated to each device. \eqref{cons:P_5} indicates which devices are scheduled in each round.

Solving problem $\mathcal{P}$ requires the explicit form about how device scheduling policy affects the final global loss function. Since it is almost impossible to find an exact analytical expression of $F(\bm{W}_T)$ with respect to $\bm{S}_t$ ($t \in \{0, 1, \cdots, T-1\}$), we turn to find an upper bound of $F(\bm{W}_T)$ and minimize it for the global loss minimization in Section \ref{subsec:conver_ana}.
Moreover, the optimal solution to problem $\mathcal{P}$ requires the system state information of all rounds at the beginning of training. However, such information is unavailable in the practical systems due to the unpredictable time-varying channel condition. To enable online device scheduling, the device scheduling decision should be made at the beginning of each round with only the current state. To this end, we transform the long-term decision problem into a deterministic one with the assistance of the Lyapunov optimization approach in Section \ref{subsec:prob_trans}.

\section{Convergence Analysis and Problem Formulation}\label{sec:convergence_ana_transform}
In this section, we theoretically analyze the convergence bound of the proposed KFL under a non-convex loss function setting.
The convergence bound reveals that the scheduled data volume in each round and different learning rounds significantly affect the learning performance. Motivated by this, we define a new metric, i.e., the weighted scheduled data volume, to guide the device scheduling design.
Then, we transfer the original problem to maximize this metric for minimizing the gap between the global loss function and the optimal loss.
To enable the online dynamic device scheduling under long-term energy budgets constraint, we further transform the problem into a deterministic problem in each round with the assistance of the Lyapunov optimization approach.

\subsection{Convergence Analysis}\label{subsec:conver_ana}
In this subsection, we investigate the convergence behavior of the proposed KFL algorithm. To facilitate the analysis, we make the following assumptions on each local loss function $F_k(\cdot)$.
\begin{ass}\label{assump:one}
All empirical loss functions $F_k(\bm{u}_k,\bm{v}_k)$ ($k \in \mathcal{K}$) are continuously differentiable with respect to $\bm{u}_k$ and $\bm{v}_k$, and there exist constants $L_u$, $L_v$, $L_{uv}$, and $L_{vu}$ such that for each $F_k(\bm{u}_k,\bm{v}_k)$:
\begin{itemize}
  \item $\nabla_{\bm{u}}F_k(\bm{u}_k,{\bm{v}_k})$ is  $L_u$-Lipschitz continuous with $\bm{u}_k$ and $L_{uv}$-Lipschitz continuous with $\bm{v}_k$, that is,
    \begin{align}
    \left\| \nabla_{\bm{u}}F_k(\bm{u}_k, {\bm{v}_k}) \!-\! \nabla_{\bm{u}}F_k(\bm{u}_k', \bm{v}_k) \right\| \le L_u \left\|\bm{u}_k \!-\! \bm{u}_k'\right\|,
    \end{align}
    and
    \begin{align}
    \left\| {\nabla_{\bm{u}}F_k(\bm{u}_k,{\bm{v}_k}) \!-\! \nabla_{\bm{u}}F_k(\bm{u}_k,\bm{v'}_k)} \right\| \!\le\! L_{uv} \left\| \bm{v}_k \!-\! \bm{v'}_k \right\|.
    \end{align}
  \item $\nabla_{\bm{v}}F_k(\bm{u}_k,{\bm{v}_k})$ is  $L_v$-Lipschitz continuous with $\bm{v}_k$ and $L_{vu}$-Lipschitz continuous with $\bm{u}$.
\end{itemize}
\end{ass}

\begin{ass}\label{assump:two}
The squared norm of gradients is uniformly bounded, i.e., $\| \nabla_{\bm{u}}F_k(\bm{u}_{k,t},\bm{v}_{k,t}) \|^2 \le G_1^2$ and $\| \nabla_{\bm{v}}F_k(\bm{u}_{k,t},\bm{v}_{k,t}) \|^2 \le G_2^2$.
\end{ass}

\begin{ass}\label{assump:three}
For each local feature extractor $h_k(\cdot)$ ($\forall k \in \mathcal{K}$), its gradient norm is bounded by $\vartheta^2$, i.e., $\| \nabla h_k(\bm{u}_k) \|^2 \le \vartheta^2$, and the squared norm of its output vector is bounded by $\| h_k(\bm{u}_k;x) \|^2 \le \varsigma^2$.
\end{ass}

Assumption \ref{assump:one} is satisfied by most deep NNs.
The modern NNs are usually composed of multiple layers. Based on \cite{abbasnejad2018deep}, a deep NN defined by a composition of functions is a Lipschitz NN if the functions in all layers are Lipschitz. It has been proved in \cite{abbasnejad2018deep, NEURIPS2018_d54e99a6} that the convolution layer, linear layer, and some nonlinear activation functions (e.g., Sigmoid and tanh) are Lipschitz functions. Thus, most deep NNs have Lipschitz continuous gradients.
For a Lipschitz NN in which all layers are Lipschitz functions, both the feature extractor and predictor composed of Lipschitz layers are Lipschitz functions.
Thus, Assumption \ref{assump:one} is satisfied by assuming the whole NN to be Lipschitz continuous. In addition, according to Proposition 1 in \cite{abbasnejad2018deep}, one can derive that $F_k(\bm{u}_k,{\bm{v}_k})$ is $(L_u \times L_v)$-smooth based on Assumption \ref{assump:one}.
Assumption \ref{assump:two} is widely used in the existing convergence analysis works, e.g., \cite{9425020,9598845,9409149,chen2022federated}. Assumption \ref{assump:three} is inherently satisfied by Assumption \ref{assump:two} since the gradient of a NN is a function of its output vector.
To begin with, we first derive a key lemma to assist our analysis as follows:
\vspace{-0.6cm}
\begin{lem}\label{lem:lip_F}
Let Assumption $\rm{\ref{assump:one}}$ holds, we have
\begin{multline}
F_k(\bm{u}_k',\bm{v}_k') - F_k(\bm{u}_k,\bm{v}_k)
\le \left\langle {\nabla_{\bm{u}}F_k(\bm{u}_k,\bm{v}_k),\bm{u}_k' - \bm{u}_k} \right\rangle  + \frac{1+\chi}{2} L_u \left\| \bm{u}_k' - \bm{u}_k \right\|^2 \\
+ \left\langle {\nabla_{\bm{v}}F_k(\bm{u}_k,\bm{v}_k),\bm{v}_k' - \bm{v}_k} \right\rangle  + \frac{1+\chi}{2} L_v \left\| \bm{v}_k' - \bm{v}_k \right\|^2,
\end{multline}
where $\chi = \max \left\{ L_{uv}, L_{vu} \right\}/\sqrt{L_u L_v}$, which measures the relative cross-sensitivity of $\nabla_{\bm{u}}F_k(\bm{u}_k,\bm{v}_k)$ with respect to $\bm{v}_k$ and $\nabla_{\bm{v}}F_k(\bm{u}_k,\bm{v}_k)$ with respect to $\bm{u}_k$.
\end{lem}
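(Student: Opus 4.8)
The plan is to prove this as a block-separable descent lemma via the fundamental theorem of calculus along the straight-line segment connecting $(\bm{u}_k,\bm{v}_k)$ to $(\bm{u}_k',\bm{v}_k')$. Writing $\Delta\bm{u} = \bm{u}_k' - \bm{u}_k$ and $\Delta\bm{v} = \bm{v}_k' - \bm{v}_k$, I would parametrize the interpolation by $\bm{u}_k(t) = \bm{u}_k + t\Delta\bm{u}$ and $\bm{v}_k(t) = \bm{v}_k + t\Delta\bm{v}$ for $t\in[0,1]$, so that by the chain rule
\begin{equation}
F_k(\bm{u}_k',\bm{v}_k') - F_k(\bm{u}_k,\bm{v}_k) = \int_0^1 \Big( \big\langle \nabla_{\bm{u}}F_k(\bm{u}_k(t),\bm{v}_k(t)), \Delta\bm{u} \big\rangle + \big\langle \nabla_{\bm{v}}F_k(\bm{u}_k(t),\bm{v}_k(t)), \Delta\bm{v} \big\rangle \Big)\, dt. \nonumber
\end{equation}
Subtracting the first-order terms $\langle \nabla_{\bm{u}}F_k(\bm{u}_k,\bm{v}_k),\Delta\bm{u}\rangle + \langle \nabla_{\bm{v}}F_k(\bm{u}_k,\bm{v}_k),\Delta\bm{v}\rangle$ from both sides isolates a remainder integral involving only the gradient \emph{increments} $\nabla_{\bm{u}}F_k(\bm{u}_k(t),\bm{v}_k(t)) - \nabla_{\bm{u}}F_k(\bm{u}_k,\bm{v}_k)$ and its $\bm{v}$-counterpart, which is exactly what the Lipschitz hypotheses in Assumption~\ref{assump:one} control.

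Next I would bound each gradient increment by routing through an intermediate point so that only one block moves at a time. For the $\bm{u}$-gradient, inserting $\nabla_{\bm{u}}F_k(\bm{u}_k,\bm{v}_k(t))$ and applying the triangle inequality together with the $L_u$- and $L_{uv}$-Lipschitz bounds gives $\|\nabla_{\bm{u}}F_k(\bm{u}_k(t),\bm{v}_k(t)) - \nabla_{\bm{u}}F_k(\bm{u}_k,\bm{v}_k)\| \le t\big(L_u\|\Delta\bm{u}\| + L_{uv}\|\Delta\bm{v}\|\big)$, and symmetrically for the $\bm{v}$-gradient with $L_v$ and $L_{vu}$. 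Applying Cauchy--Schwarz to pair each increment against its direction and integrating $\int_0^1 t\,dt = \tfrac12$, I obtain
\begin{equation}
F_k(\bm{u}_k',\bm{v}_k') - F_k(\bm{u}_k,\bm{v}_k) - \big\langle \nabla_{\bm{u}}F_k, \Delta\bm{u}\big\rangle - \big\langle \nabla_{\bm{v}}F_k, \Delta\bm{v}\big\rangle \le \tfrac12\Big( L_u\|\Delta\bm{u}\|^2 + L_v\|\Delta\bm{v}\|^2 + (L_{uv}+L_{vu})\|\Delta\bm{u}\|\,\|\Delta\bm{v}\| \Big). \nonumber
\end{equation}

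The main obstacle is the cross term $(L_{uv}+L_{vu})\|\Delta\bm{u}\|\,\|\Delta\bm{v}\|$, which must be absorbed into the two square terms to recover the clean $\tfrac{1+\chi}{2}$ coefficient. The key step is a \emph{weighted} Young's inequality tuned to the geometry of the two blocks: writing $\sqrt{L_uL_v}\,\|\Delta\bm{u}\|\,\|\Delta\bm{v}\| = \big(\sqrt{L_u}\|\Delta\bm{u}\|\big)\big(\sqrt{L_v}\|\Delta\bm{v}\|\big) \le \tfrac12\big(L_u\|\Delta\bm{u}\|^2 + L_v\|\Delta\bm{v}\|^2\big)$. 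Since $L_{uv}+L_{vu} \le 2\max\{L_{uv},L_{vu}\} = 2\chi\sqrt{L_uL_v}$ by the very definition of $\chi$, the cross term is bounded by $\chi\big(L_u\|\Delta\bm{u}\|^2 + L_v\|\Delta\bm{v}\|^2\big)$. Substituting this back and regrouping yields $\tfrac{1+\chi}{2}L_u\|\Delta\bm{u}\|^2 + \tfrac{1+\chi}{2}L_v\|\Delta\bm{v}\|^2$, which is precisely the stated bound. The subtlety to get right is matching the symmetric $\sqrt{L_uL_v}$ normalization in $\chi$ to the symmetric splitting in Young's inequality; any other weighting would leave the coefficients unbalanced between the two blocks.
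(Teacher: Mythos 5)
Your proof is correct, but it follows a genuinely different route from the paper's. The paper proves the lemma by applying the standard one-block descent lemma twice in sequence --- first moving $\bm{v}_k \to \bm{v}_k'$ at fixed $\bm{u}_k$, then moving $\bm{u}_k \to \bm{u}_k'$ at fixed $\bm{v}_k'$ --- and then repairing the resulting inner product $\left\langle \nabla_{\bm{u}}F_k(\bm{u}_k,\bm{v}_k'), \bm{u}_k'-\bm{u}_k\right\rangle$, which is evaluated at the wrong second argument, via an add-and-subtract step, Cauchy--Schwarz, the $L_{uv}$-Lipschitz bound, and the same weighted Young's inequality you use. You instead work with the integral form of the Taylor remainder along the joint diagonal segment and decompose the gradient \emph{increments} (rather than the function values) through the intermediate point where only one block has moved. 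Both arguments converge on the identical final step --- absorbing the cross term with $\sqrt{L_u L_v}\,\|\Delta\bm{u}\|\,\|\Delta\bm{v}\| \le \tfrac12(L_u\|\Delta\bm{u}\|^2 + L_v\|\Delta\bm{v}\|^2)$ and the definition of $\chi$ --- and arrive at the same constants. Your version is slightly more symmetric: it naturally produces the averaged cross coefficient $\tfrac12(L_{uv}+L_{vu})$ and uses both cross-Lipschitz constants, whereas the paper's sequential routing only ever invokes $L_{uv}$; after bounding by $\max\{L_{uv},L_{vu}\} = \chi\sqrt{L_uL_v}$ the two give the same result, but your intermediate inequality is marginally tighter and avoids evaluating gradients at the corner point $(\bm{u}_k,\bm{v}_k')$. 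The integral representation does require continuous differentiability of $F_k$ along the segment, which is explicitly granted by Assumption 1, so no hypothesis is missing.
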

\begin{proof}
Please see Appendix \ref{app:one}.
\end{proof}

Lemma \ref{lem:lip_F} reveals the gradient relationships of a NN between its feature extractor and label predictor part. According to Lemma \ref{lem:lip_F}, we derive the one-round convergence bound of any device $k$ ($k\in\mathcal{K}$) in Lemma \ref{lem:oneR_conver}, in which devices utilize the proposed knowledge-aided loss to update their local models.
\vspace{-0.3cm}
\begin{lem}\label{lem:oneR_conver}
Let Assumption $\rm{\ref{assump:one}}$, $\rm{\ref{assump:two}}$, and $\rm{\ref{assump:three}}$ hold. The learning rates satisfy $\eta_u \le \frac{1}{4\tau(1 + \chi)L_u}$ and $\eta_v \le \frac{1}{2\tau(1 + \chi)L_v}$, the one-round convergence bound of device $k$ ($k\in \mathcal{K}$) is given by
\begin{multline}\label{eq:lemma_eq}
F_k(\bm{u}_{k,t + 1},\bm{v}_{k,t + 1}) - F_k(\bm{u}_{k,t},\bm{v}_{k,t})
\le \Big{(}2(1 + \chi)L_u \eta_u^2 \tau^2 - \frac{1}{2}\eta_u\tau \Big{)} \left\| \nabla_{\bm{u}}F_k(\bm{u}_{k,t},\bm{v}_{k,t}) \right\|^2 \\
+ \Big{(}(1 + \chi)L_v \eta_v^2{\tau^2} - \frac{1}{2}\eta_v\tau \Big{)}{\left\|\nabla_{\bm{v}}F_k(\bm{u}_{k,t},\bm{v}_{k,t}) \right\|^2}
+ A_1
+ \frac{5}{4}\eta_u \lambda^2\sum\nolimits_{l = 0}^{\tau - 1}\left\|\nabla L_k(\bm{u}_{k,t,l}) \right\|^2 \\
+ 2\eta_u^2 \lambda^2(3\eta_u L_u^2 + 2\eta_v \chi^2 L_u L_v)\sum\nolimits_{l = 0}^{\tau - 1} (\tau - l)\left\| \nabla L_k(\bm{u}_{k,t,l}) \right\|^2,
\end{multline}
where $A_1 = \tau (\tau + 1)(2\tau + 1)\left(\eta_u^3G_1^2L_u^2 + \frac{1}{3}\eta_v^3 G_2^2 L_v^2 + (\frac{2}{3}\eta_u G_1^2 + \frac{1}{2} \eta_v G_2^2)\eta_u \eta_v \chi^2 L_u L_v \right)$.
\end{lem}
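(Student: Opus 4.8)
The plan is to apply Lemma~\ref{lem:lip_F} to the two round endpoints $(\bm{u}_{k,t+1},\bm{v}_{k,t+1})$ and $(\bm{u}_{k,t},\bm{v}_{k,t})$, and then substitute the accumulated $\tau$-step local updates. Abbreviating $\nabla_{\bm{u}}F_k^{(l)}:=\nabla_{\bm{u}}F_k(\bm{u}_{k,t,l},\bm{v}_{k,t,l})$, $\nabla_{\bm{v}}F_k^{(l)}$ analogously, and $\nabla L_k^{(l)}:=\nabla L_k(\bm{u}_{k,t,l})$, iterating \eqref{eq:feature_update} and \eqref{predictor_update} gives $\bm{u}_{k,t+1}-\bm{u}_{k,t}=-\eta_u\sum_{l=0}^{\tau-1}(\nabla_{\bm{u}}F_k^{(l)}+\lambda\nabla L_k^{(l)})$ and $\bm{v}_{k,t+1}-\bm{v}_{k,t}=-\eta_v\sum_{l=0}^{\tau-1}\nabla_{\bm{v}}F_k^{(l)}$, with $\nabla_{\bm{u}}F_k^{(0)}=\nabla_{\bm{u}}F_k(\bm{u}_{k,t},\bm{v}_{k,t})$ since $\bm{u}_{k,t,0}=\bm{u}_{k,t}$. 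Substituting these into Lemma~\ref{lem:lip_F} splits the right-hand side into two linear (inner-product) terms and two quadratic ($\ell_2$-norm) terms, which I would bound separately.

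For the linear terms I would isolate the descent by decomposing each intermediate gradient as $\nabla_{\bm{u}}F_k^{(l)}=\nabla_{\bm{u}}F_k^{(0)}+(\nabla_{\bm{u}}F_k^{(l)}-\nabla_{\bm{u}}F_k^{(0)})$, and likewise for $\nabla_{\bm{v}}$. The diagonal part $-\eta_u\langle\nabla_{\bm{u}}F_k^{(0)},\sum_l\nabla_{\bm{u}}F_k^{(0)}\rangle=-\eta_u\tau\|\nabla_{\bm{u}}F_k^{(0)}\|^2$ is the main descent; applying Young's inequality with suitable weights to the gradient-drift cross term $-\eta_u\langle\nabla_{\bm{u}}F_k^{(0)},\nabla_{\bm{u}}F_k^{(l)}-\nabla_{\bm{u}}F_k^{(0)}\rangle$ and to the knowledge cross term $-\eta_u\lambda\langle\nabla_{\bm{u}}F_k^{(0)},\nabla L_k^{(l)}\rangle$ each returns a quarter-weight $\tfrac14\eta_u\tau\|\nabla_{\bm{u}}F_k^{(0)}\|^2$, so the net linear coefficient becomes $-\tfrac12\eta_u\tau$; the knowledge cross term simultaneously produces a first-order contribution $\eta_u\lambda^2\sum_l\|\nabla L_k^{(l)}\|^2$. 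The $\bm{v}$ linear term is analogous but simpler (no knowledge gradient), yielding $-\tfrac12\eta_v\tau\|\nabla_{\bm{v}}F_k^{(0)}\|^2$.

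For the two quadratic terms I would use $\|\sum_{l=0}^{\tau-1}a_l\|^2\le\tau\sum_l\|a_l\|^2$ together with $\|a+b\|^2\le 2\|a\|^2+2\|b\|^2$ to separate the empirical-gradient part from the knowledge-gradient part; reducing each $\nabla_{\bm{u}}F_k^{(l)}$ to $\nabla_{\bm{u}}F_k^{(0)}$ plus drift then yields exactly the coefficients $2(1+\chi)L_u\eta_u^2\tau^2$ on $\|\nabla_{\bm{u}}F_k^{(0)}\|^2$ and $(1+\chi)L_v\eta_v^2\tau^2$ on $\|\nabla_{\bm{v}}F_k^{(0)}\|^2$. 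The knowledge-gradient part of the $\bm{u}$-quadratic leaves a second-order remainder $(1+\chi)L_u\eta_u^2\tau\lambda^2\sum_l\|\nabla L_k^{(l)}\|^2$, which I would fold back into the first-order knowledge term: invoking the step-size condition $\eta_u\le\frac{1}{4\tau(1+\chi)L_u}$ bounds it by $\tfrac14\eta_u\lambda^2\sum_l\|\nabla L_k^{(l)}\|^2$, and adding this to the first-order $\eta_u\lambda^2$ produces precisely the stated $\tfrac54\eta_u\lambda^2\sum_l\|\nabla L_k^{(l)}\|^2$.

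The remaining work is to bound all residual drift quantities. The gradient drift $\|\nabla_{\bm{u}}F_k^{(l)}-\nabla_{\bm{u}}F_k^{(0)}\|$ is controlled by Assumption~\ref{assump:one}, splitting through an intermediate point into an $L_u\|\bm{u}_{k,t,l}-\bm{u}_{k,t}\|$ part and an $L_{uv}\|\bm{v}_{k,t,l}-\bm{v}_{k,t}\|$ part (and symmetrically $L_v,L_{vu}$ for the predictor), which is where the cross-sensitivity $\chi$ enters. The parameter drifts are then expanded through the update recursions and bounded via Assumption~\ref{assump:two} ($G_1,G_2$); after the $\sum_{l=0}^{\tau-1}$ and the nested $\sum_{j<l}$ summations, this produces the cubic factor $\tau(\tau+1)(2\tau+1)$ and assembles the constant $A_1$, with the $\max\{L_{uv},L_{vu}\}^2=\chi^2 L_u L_v$ combinations gathering the mixed terms. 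Keeping the knowledge gradient symbolic inside the parameter drift, and noting that a perturbation at local step $l$ propagates to all $\tau-l$ subsequent steps, yields the final term $2\eta_u^2\lambda^2(3\eta_u L_u^2+2\eta_v\chi^2 L_u L_v)\sum_l(\tau-l)\|\nabla L_k^{(l)}\|^2$. I expect the main obstacle to be precisely this bookkeeping: the $\bm{u}$- and $\bm{v}$-drift recursions are coupled through $L_{uv}$ and $L_{vu}$, and $\nabla L_k$ appears asymmetrically (only in the feature-extractor update), so the delicate part is verifying that all mixed contributions collapse exactly into the stated $\chi^2 L_u L_v$ groupings and the clean coefficients $\tfrac54$ and $A_1$, rather than any single inequality.
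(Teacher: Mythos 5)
Your proposal is correct and follows essentially the same route as the paper's proof: apply Lemma~\ref{lem:lip_F} at the round endpoints, split the linear terms into a descent part plus gradient-drift and knowledge cross terms, bound the quadratic terms via Jensen and the triangle inequality, control the drifts through Assumptions~\ref{assump:one}--\ref{assump:two} with the $\chi^2 L_u L_v$ groupings, and absorb the second-order knowledge remainder using $\eta_u \le \frac{1}{4\tau(1+\chi)L_u}$ to obtain the $\frac{5}{4}$ coefficient. The only cosmetic difference is that you handle the inner-product terms with weighted Young's inequality where the paper uses the equivalent bound $-\langle a,b\rangle \le -\frac{1}{2}\|a\|^2 + \frac{1}{2}\|a-b\|^2$ followed by the triangle inequality, which yields identical coefficients.
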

\begin{proof}
Please see Appendix \ref{app:two}.
\end{proof}

Based on Lemma \ref{lem:oneR_conver}, we further analyze the convergence behaviour of the proposed KFL algorithm after $T$ rounds in Theorem \ref{thm:one}, which takes into account the knowledge aggregation between devices.
\begin{thm}\label{thm:one}
Let Assumption $\rm{\ref{assump:one}}$, $\rm{\ref{assump:two}}$, and $\rm{\ref{assump:three}}$ hold, $\eta_u \le \frac{1}{4\tau(1 + \chi)L_{\bm{u}}}$ and $\eta_v \le \frac{1}{2\tau(1 + \chi)L_v}$, the gap between the global loss function after $T$ rounds and the optimal loss is bounded by
\begin{multline}\label{eq:conv_bound}
F(\bm{W}_T) - F(\bm{W}^*) \le {A_3^T}( F(\bm{W}_0) - F(\bm{W}^*) )\\
+ \frac{1 - A_3^T}{1 - A_3}(A_1 + A_2)
+{A_2}\frac{CK}{D}\sum\nolimits_{t = 0}^{T - 2} A_3^{T - 2 - t}\sum\nolimits_{k = 1}^K  \sum\nolimits_{c = 1}^C \frac{D_{k,c}^2}{D_c^2 D_k}\sum\nolimits_{k = 1}^K D_{k,c}^2 \\
- {A_2}\frac{1}{DK(T - 1)}\frac{1}{\max_{1\le k \le K} D_k}{\left(\sum\nolimits_{t = 0}^{T - 2} A_3^{T - 2 - t}\sum\nolimits_{k = 1}^K \alpha_{k,t} D_k \right)^2}
\end{multline}
where $A_2 = 10\eta_u\lambda^2\tau \vartheta^2\varsigma^2 + 8\eta_u^2\lambda^2 \vartheta^2\varsigma^2\left( 3\eta_u L_u^2 + 2\eta_v\chi^2L_uL_v \right)\tau (\tau + 1)$, $A_3 = 1 + (4L_u^2\eta_u^2 + 2L_v^2\eta_v^2)(1 + \chi)\tau^2 - (\eta_u L_u+\eta_v L_v)\tau $.
\end{thm}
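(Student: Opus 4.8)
The plan is to build the $T$-round bound by unrolling a one-round recursion in the global suboptimality gap $F(\bm{W}_t) - F(\bm{W}^*)$, starting from the per-device estimate of Lemma \ref{lem:oneR_conver}. First I would aggregate Lemma \ref{lem:oneR_conver} across devices: multiply its statement for each $k$ by the weight $D_k/D$, insert the scheduling indicator $\alpha_{k,t}$ (an unscheduled device keeps $F_k(\bm{u}_{k,t+1},\bm{v}_{k,t+1}) = F_k(\bm{u}_{k,t},\bm{v}_{k,t})$), and sum over $k$ using the definition \eqref{eq:glo_EM_loss} of $F(\bm{W})$. This yields a bound on $F(\bm{W}_{t+1}) - F(\bm{W}_t)$ consisting of (i) the gradient-norm terms in $\|\nabla_{\bm{u}}F_k\|^2$ and $\|\nabla_{\bm{v}}F_k\|^2$, which are non-positive under the stated step-size conditions, (ii) the constant $A_1$, and (iii) the knowledge-loss gradient terms $\sum_l \|\nabla L_k(\bm{u}_{k,t,l})\|^2$.

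Next I would turn this difference inequality into a contraction. Writing $F(\bm{W}_{t+1}) - F(\bm{W}^*) = [F(\bm{W}_{t+1}) - F(\bm{W}_t)] + [F(\bm{W}_t) - F(\bm{W}^*)]$ and relating the gradient-norm terms to the gap via the smoothness/gradient-domination relation for $L$-smooth losses, the quadratic step-size terms contribute the $+(4L_u^2\eta_u^2 + 2L_v^2\eta_v^2)(1+\chi)\tau^2$ part and the dominant linear terms the $-(\eta_u L_u + \eta_v L_v)\tau$ part, so that the gradient contributions collapse into the factor $A_3$. This produces the per-round recursion $F(\bm{W}_{t+1}) - F(\bm{W}^*) \le A_3 (F(\bm{W}_t) - F(\bm{W}^*)) + (A_1 + A_2) + b_t$, where $A_2$ and the round-dependent residual $b_t$ are supplied by the knowledge-loss analysis below. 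Closing the linear term with the favorable sign needed for $A_3<1$ is a point to watch, since it requires a lower bound on the gradient magnitude rather than the usual upper bound.

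The heart of the argument, and the step I expect to be hardest, is bounding the knowledge-loss gradients and exposing their scheduling dependence. I would differentiate \eqref{eq:knowledge_loss} to get $\nabla L_k(\bm{u}_k) = \frac{1}{D_k}\sum_c\sum_{(\bm{x},y)\in\mathcal{D}_{k,c}} \nabla h_k(\bm{x};\bm{u}_k)^{\top}(h_k(\bm{x};\bm{u}_k) - \bm{\Omega}_c)$, bound $\|\nabla h_k\|^2 \le \vartheta^2$ by Assumption \ref{assump:three}, and then control $\|h_k(\bm{x}) - \bm{\Omega}_c\|^2$ by splitting it into a within-device part (bounded through $\|h_k\|^2 \le \varsigma^2$, producing the $\vartheta^2\varsigma^2$ dependence in $A_2$) and an aggregation part $\bm{\Omega}_{k,c} - \bm{\Omega}_c$. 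Substituting the aggregation rule \eqref{eq:know_agg}, which writes $\bm{\Omega}_c$ as the $D_{k,c}$-weighted average over scheduled devices, is where the weights $D_{k,c}^2/D_c^2$ and the scheduled-volume quantities enter the squared deviation. The delicate point is that scheduling more data sharpens the aggregate, so this contribution must surface with a negative sign: I would lower-bound the per-round aggregation benefit by a quantity proportional to $\sum_k \alpha_{k,t}D_k$ and then combine these benefits across rounds with a Cauchy--Schwarz (quadratic-mean/arithmetic-mean) step of the form $\sum_t A_3^{T-2-t} x_t^2 \ge \frac{1}{T-1}\big(\sum_t A_3^{T-2-t}x_t\big)^2$ to generate the negative term $-\frac{1}{DK(T-1)}\frac{1}{\max_k D_k}\big(\sum_t A_3^{T-2-t}\sum_k \alpha_{k,t}D_k\big)^2$. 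Making the normalizing constants $\frac{CK}{D}$, $\frac{1}{D_k}$, and $\frac{1}{\max_k D_k}$ line up exactly is the main bookkeeping obstacle.

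Finally I would unroll the recursion from $t=0$ to $T-1$. The homogeneous part gives $A_3^T(F(\bm{W}_0) - F(\bm{W}^*))$, the constant forcing $(A_1+A_2)$ sums to the geometric factor $\frac{1-A_3^T}{1-A_3}$, and the round-dependent residuals $b_t$ accumulate with weights $A_3^{T-2-t}$ into the two knowledge-aggregation terms of \eqref{eq:conv_bound}. This closing step is routine once the per-round recursion and the constants are established.
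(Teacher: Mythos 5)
Your proposal follows the same overall architecture as the paper's proof: aggregate Lemma \ref{lem:oneR_conver} with weights $D_k/D$, convert the gradient-norm terms into the contraction factor $A_3$ via the smoothness relations $\|\nabla_{\bm{u}}F_k\|^2\le 2L_u(F_k-F_k^*)$ and $\|\nabla_{\bm{v}}F_k\|^2\le 2L_v(F_k-F_k^*)$, bound $\|\nabla L_k\|^2$ by comparing the scheduled knowledge aggregate against a reference aggregate, lower-bound the beneficial subtracted part with a Cauchy--Schwarz step across rounds, and unroll. Your inequality $\sum_t A_3^{T-2-t}x_t^2\ge\frac{1}{T-1}\bigl(\sum_t A_3^{T-2-t}x_t\bigr)^2$ is exactly the step the paper uses (there labelled as Jensen's inequality), and your worry about needing the favorable sign when the gradient coefficients are negative is legitimate --- the paper's own proof applies the smoothness upper bound to negatively weighted terms without comment.

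Two places where your plan diverges in ways that matter for reaching the stated bound. First, you insert $\alpha_{k,t}$ into the descent aggregation and freeze unscheduled devices at $F_k(\bm{u}_{k,t+1},\bm{v}_{k,t+1})=F_k(\bm{u}_{k,t},\bm{v}_{k,t})$. The paper does not: it applies the one-round descent bound to every device in every round, and scheduling enters \emph{only} through the partial aggregate $\bm{\Omega}_{c,t}$ inside the knowledge loss. Your (algorithmically more faithful) treatment gives a per-device factor $(1-\alpha_{k,t})+\alpha_{k,t}A_3$ rather than a uniform $A_3$, so the recursion does not telescope to the $A_3^{T}$ and $A_3^{T-2-t}$ weights appearing in \eqref{eq:conv_bound}; you would land on a different, scheduling-dependent homogeneous part. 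Second, for the knowledge-loss gradient the paper introduces the \emph{full-participation} aggregate $\bar{\bm{\Omega}}_{c,t}=\sum_{k}D_{k,c}\bm{\Omega}_{k,c,t}/D_c$ and splits $h_k(\bm{x})-\bm{\Omega}_{c,t}$ as $(h_k(\bm{x})-\bar{\bm{\Omega}}_{c,t})+(\bar{\bm{\Omega}}_{c,t}-\bm{\Omega}_{c,t})$; the second piece collapses, via Assumption \ref{assump:three}, to $4\varsigma^2\bigl((D_c-\sum_{k}\alpha_{k,t-1}D_{k,c})/D_c\bigr)^2$, and expanding $\sum_k(1-\alpha_{k,t})D_{k,c}^2$ then yields both the positive third term and the negative fourth term of the theorem simultaneously. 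Your split through $\bm{\Omega}_{k,c}-\bm{\Omega}_c$ does not isolate the scheduled-versus-full discrepancy as cleanly, and the ``aggregation benefit'' you want to lower-bound would have to be reconstructed as precisely this comparison. Neither point reflects a wrong idea, but as written the plan would not reproduce the stated constants or the $A_3$-weighted scheduling terms.
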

\begin{proof}
Please see Appendix \ref{app:three}.
\end{proof}

Theorem \ref{thm:one} reveals how the device scheduling policy affects the convergence bound of KFL without characterizing the impact of non-IID degrees on the convergence bound.
In general, the non-IID degree is characterized by the difference between the optimal global loss and the weighted summation of optimal local losses \cite{lion2020}. However, the proposed KFL is a personalized FL algorithm which trains a personalized model for each device. Thus, one cannot characterize the impacts of non-IID degree on the convergence bound in this way due to $F(\bm{W}^*)- \frac{1}{D}\sum_{k=1}^K D_k F_k(\bm{w}_k^*)=0$. However, how to characterize non-IID degrees' effects on the convergence bound of personalized FL algorithms is a promising research direction, which will be studied in our future works.

According to Theorem \ref{thm:one}, the gap between the global loss after $T$ rounds and the optimal loss is bounded by four terms, 1) the gap in the initial round, 2) two terms related to hyperparameters of the learning system, 3) the scheduled data volume in all rounds.
It is noted that $A_3\le 1$ due to $\eta_u \le \frac{1}{4\tau(1 + \chi)L_u}$ and $\eta_v \le \frac{1}{2\tau(1 + \chi)L_v}$. As $T$ increases, $A_3^T$ approaches to 0. Hence, the first term converges to 0, and the second and the third terms converge to a constant.
The first three terms decided by the system hyperparameters and initial models of devices are not related to the device scheduling policies.
The last term is an explicit form related to device scheduling. For the last term, we have the following remark:
\begin{rem}\label{rem:one}
Increasing the scheduled data samples in each round is able to narrow the gap between global loss and optimal loss.
In addition, as $t$ increases, $A_3^{T-1-t}$ also increases due to $A_3 < 1$. This indicates that more devices should be scheduled in early rounds when the total number of scheduled devices in the learning process is fixed.
\end{rem}

Note that, it has been experimentally observed in \cite{9237168} that scheduling more devices in the later rounds is beneficial for the learning performance of the federated averaging algorithm.
However, the proposed KFL that only aggregates devices' knowledge in each round achieves better learning performance when scheduling more devices in the earlier rounds, which is verified by the theoretical analysis in Remark \ref{rem:one} and experimental results in Section \ref{sec:simulation}.

\subsection{Problem Transformation via Lyapunov Optimization Framework}\label{subsec:prob_trans}
According to Theorem \ref{thm:one}, the gap between the global loss and the optimal loss can be narrowed by minimizing the last term on the right-hand-side (RHS) of \eqref{eq:conv_bound}. However, it is tractable to directly minimize this term since it involves some unknown parameters, e.g., the Lipschitz constant $L_u$ and $L_v$. Based on \cite{NEURIPS2018_d54e99a6}, the exact computation of the Lipschitz constant of deep learning architectures is intractable, even for two-layer NNs.
Inspired by Remark \ref{rem:one}, to enable tractable device scheduling design, we introduce a variable $\gamma_t$ ($t=0,1,\cdots, T-1$) as the weight of scheduled data samples in round $t$ to capture the varying significance of scheduling devices in different rounds. Based on this, we define the weighted scheduled data volume as $\sum\nolimits_{t = 0}^{T - 1} \gamma_t\sum\nolimits_{k = 1}^K \alpha_{k,t} D_k$ and maximize it for the global loss minimization.
Thus, we transform problem $\mathcal{P}$ as the following problem:
\begin{align}
\widehat{\mathcal{P}}:~~~~~~~&\max_{\left\{ \bm{S}_t,\bm{\theta}_t, \bm{p}_t \right\}_{t = 0}^{T-1}}~ \sum\nolimits_{t = 0}^{T - 1} \gamma_t\sum\nolimits_{k = 1}^K \alpha_{k,t} D_k\\
\text{s.~t.~~}& \eqref{cons:P_1}, \eqref{cons:P_2}, \eqref{cons:P_3}, \eqref{cons:P_4}, \eqref{cons:P_5}. \notag
\end{align}
Problem $\widehat{\mathcal{P}}$ involves multi-dimension discrete and continuous variables is a typical mixed-integer programming problem, which is generally NP-Hard.
In addition, solving the optimal solution of problem $\widehat{\mathcal{P}}$ offline requires optimally dividing the energy of all devices in each round due to the long-term energy constraints, which is intractable.
The most critical challenge of directly solving problem $\widehat{\mathcal{P}}$ is that it requires channel information of all devices over all rounds at the beginning of the FL process,  which may unfeasible in practical systems. To enable the online dynamic device scheduling, we utilize the Lyapunov optimization framework to deal with the correlations among rounds.
To this end, we construct a virtual queue $q_k(t)$ for each device $k$ ($k\in\mathcal{K}$), which evolves as

\begin{align}\label{eq:lyapunov_queue}
q_k(t + 1) = \max \Big\{ q_k(t) + \alpha_{k,t}E_{k,t} - \frac{E_k}{T},0 \Big\},
\end{align}
with the initial value $q_k(t)=0$ for all devices. Inspired by the drift-plus-penalty algorithm in \cite{neely2010stochastic}, we transform problem $\widehat{\mathcal{P}}$ as the following problem to enable online device scheduling
\begin{align}
\widetilde{\mathcal{P}}:~~~~~~~&\min_{\left\{ \bm{S}_t,\bm{\theta}_t, \bm{p}_t \right\}_{t = 0}^{T-1}}~ -V\gamma_t\sum\nolimits_{k = 1}^K \alpha_{k,t}D_k + \sum\nolimits_{k = 1}^K q_k(t)\alpha_{k,t}E_{k,t} \label{eq:lya_obj}\\
\text{s.~t.~~}& \eqref{cons:P_2}, \eqref{cons:P_3}, \eqref{cons:P_4}, \eqref{cons:P_5}. \notag
\end{align}
In problem $\widetilde{\mathcal{P}}$, $V \ge 0$ is a weight factor that balances the energy consumption of devices and learning performance. A large $V$ emphasises the learning performance improvement by sacrificing the devices' energy and vice versa.
In addition, from the objective function \eqref{eq:lya_obj}, the unscheduled devices in the former rounds have smaller $q_k(t)$. These devices are encouraged to participate in the current round of training for minimizing \eqref{eq:lya_obj}. Thus, problem $\widetilde{\mathcal{P}}$ contributes to a fair scheduling scheme between devices.

\section{Online Device Scheduling and Wireless Resource Allocation}\label{sec:alg_design}
In this section, we propose an energy-aware device scheduling, bandwidth allocation, and power control algorithm that solves problem $\widetilde{\mathcal{P}}$ in an online fashion. We first derive the optimal bandwidth allocation and power control policies using convex optimization techniques. Then, we propose a polynomial-time algorithm to solve the device scheduling decision with a $\mathcal{O}(\sqrt{V}, 1/V)$ energy-learning trade-off guarantee, where $V$ is an algorithm-related parameter.

\subsection{Optimal Power Control and Bandwidth Allocation}
For any given scheduled device set $\bm{S}_t \in \mathcal{K}$, we decompose the bandwidth allocation and power control problem from $\widetilde{\mathcal{P}}$ as follows:
\begin{align}
\mathcal{P}_1:~~~~~~~&\min_{\left\{\bm{\theta}_t, \bm{p}_t \right\}}~ \sum\nolimits_{k \in \bm{S}_t} q_k(t)E_{k,t} \label{eq:obj_power_band}\\
\text{s.~t.~~}& \eqref{cons:P_2}, \eqref{cons:P_3}, \eqref{cons:P_4}. \notag
\end{align}
For problem $\mathcal{P}_1$, we have the following proposition:
\begin{prop}\label{prop:one}
The optimal solution of problem $\mathcal{P}_1$ satisfies $\mathcal{T}_{k,t}^{\rm{U}} = \mathcal{T}_{\max}-\mathcal{T}_k^{\rm{L}}$, and the optimal transmit power of device $k$ satisfies
\begin{align}\label{eq:power_expression}
p_{k,t} = \frac{\theta_{k,t}B N_0}{h_{k,t}}\Big{(} 2^{\frac{Qq}{(\mathcal{T}_{\max} - \mathcal{T}_k^{\rm{L}})\theta_{k,t}B}} - 1 \Big{)}.
\end{align}

Sketch of proof: By proving that the first-order derivatives of the objective function \eqref{eq:obj_power_band} are great than 0, \eqref{eq:obj_power_band} is an non-increasing function
with respect to the communication time $\mathcal{T}_{k,t}^{\rm{U}}$. Thus, the optimal completion time of device $k$ is $\mathcal{T}_{k,t}^{\rm{U}} = \mathcal{T}_{\max}-\mathcal{T}_k^{\rm{L}}$. Based on \eqref{eq:trans_energycon}, the proposition is proved. We have the detailed proof in Section I of the technical report \cite{chen2022proof}.
\end{prop}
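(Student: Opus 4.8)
The plan is to solve problem $\mathcal{P}_1$ by showing that at optimality every scheduled device should consume its entire time budget for uploading, and then to back out the optimal power from the latency expression. First I would fix an arbitrary scheduled device $k\in\bm{S}_t$ and examine the objective $\sum_{k\in\bm{S}_t}q_k(t)E_{k,t}$. Since $q_k(t)\ge 0$ and the terms decouple across devices (the only coupling is through the bandwidth constraint \eqref{cons:P_3}), it suffices to understand how each summand $q_k(t)E_{k,t}^{\rm{U}}$ depends on the per-device communication time $\mathcal{T}_{k,t}^{\rm{U}}$, because the computation energy $E_{k,t}^{\rm{L}}$ is independent of $\bm{\theta}_t$ and $\bm{p}_t$. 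The key observation is that, with $\theta_{k,t}$ held fixed, the uploading energy $E_{k,t}^{\rm{U}}=\frac{\theta_{k,t}BN_0\mathcal{T}_{k,t}^{\rm{U}}}{h_{k,t}}\bigl(2^{Qq/(\theta_{k,t}B\mathcal{T}_{k,t}^{\rm{U}})}-1\bigr)$ is \emph{decreasing} in $\mathcal{T}_{k,t}^{\rm{U}}$: intuitively, a longer transmission window lets the device lower its rate requirement and hence its power, reducing energy. I would make this precise by differentiating $E_{k,t}^{\rm{U}}$ with respect to $\mathcal{T}_{k,t}^{\rm{U}}$ and showing the derivative is negative (equivalently, showing the relevant first-order derivative in the objective is positive so the objective is non-increasing in $\mathcal{T}_{k,t}^{\rm{U}}$, as the sketch indicates).

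Given monotonicity, the optimal strategy is to push $\mathcal{T}_{k,t}^{\rm{U}}$ as large as the constraints permit. The binding constraint is the per-round latency bound \eqref{cons:P_2}, $\mathcal{T}_k^{\rm{L}}+\mathcal{T}_{k,t}^{\rm{U}}\le \mathcal{T}_{\max}$, which forces $\mathcal{T}_{k,t}^{\rm{U}}\le \mathcal{T}_{\max}-\mathcal{T}_k^{\rm{L}}$; the energy budget enters only through the queue weight $q_k(t)$ and not as a hard per-round cap here, so it does not further restrict the upload time. Hence the optimum is attained at $\mathcal{T}_{k,t}^{\rm{U}}=\mathcal{T}_{\max}-\mathcal{T}_k^{\rm{L}}$ for every scheduled device, establishing the first claim. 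Substituting this optimal completion time into the latency definition $\mathcal{T}_{k,t}^{\rm{U}}=\frac{Qq}{\theta_{k,t}B\log_2(1+p_{k,t}h_{k,t}/(\theta_{k,t}BN_0))}$ and solving for $p_{k,t}$ yields $p_{k,t}=\frac{\theta_{k,t}BN_0}{h_{k,t}}\bigl(2^{Qq/((\mathcal{T}_{\max}-\mathcal{T}_k^{\rm{L}})\theta_{k,t}B)}-1\bigr)$, which is exactly \eqref{eq:power_expression}.

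The main obstacle I anticipate is rigorously verifying the monotonicity of $E_{k,t}^{\rm{U}}$ in $\mathcal{T}_{k,t}^{\rm{U}}$, since the expression has the structure $t\mapsto c\,t\,(2^{a/t}-1)$ with $a,c>0$; its derivative is $c\bigl(2^{a/t}-1-\tfrac{a\ln 2}{t}2^{a/t}\bigr)$, and showing this is negative for all $t>0$ reduces to the elementary inequality $2^{s}-1<s\ln 2\cdot 2^{s}$ for $s=a/t>0$, i.e. $1-2^{-s}<s\ln 2$, which follows from convexity of the exponential. Once this scalar inequality is in hand, the rest is bookkeeping: confirming that tightening \eqref{cons:P_2} respects the feasible region \eqref{cons:P_3}--\eqref{cons:P_4} and that the power budget \eqref{cons:P_6} is not violated by the resulting $p_{k,t}$ (the latter being a feasibility condition that the scheduling stage must enforce). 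I would relegate the full differentiation and the scalar-inequality argument to the technical report \cite{chen2022proof}, keeping the main-text proof at the level of the sketch already provided.
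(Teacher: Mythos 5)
Your proposal is correct and follows essentially the same route as the paper's proof in the technical report: establish that the upload energy is non-increasing in $\mathcal{T}_{k,t}^{\rm{U}}$, conclude that the latency constraint \eqref{cons:P_2} binds at the optimum, and invert the rate equation to obtain \eqref{eq:power_expression}. The only difference is how the monotonicity is verified --- you reduce it to the elementary scalar inequality $1-2^{-s}<s\ln 2$ for $s>0$, whereas the paper shows the second derivative of the objective with respect to $\mathcal{T}_{k,t}^{\rm{U}}$ is nonnegative and combines this with the fact that the first derivative vanishes as $\mathcal{T}_{k,t}^{\rm{U}}\to\infty$; both arguments are valid and your direct inequality is arguably the more self-contained of the two.
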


According to Proposition \ref{prop:one}, we substitute \eqref{eq:power_expression} into problem $\mathcal{P}_1$, the optimal bandwidth allocation problem can be formulated as
\begin{align}
\mathcal{P}_2:~~~~~~~&\min_{\bm{\theta}_t}~ \sum\nolimits_{k \in \bm{S}_t} \frac{\theta_{k,t}B  N_0q_k(t)(\mathcal{T}_{\max} - \mathcal{T}_k^{\rm{L}})}{h_{k,t}} \mathcal{I}(\theta_{k,t}) \label{eq:band_obj} \\[-0.2cm]
\text{s.~t.~~}& \eqref{cons:P_3}, \eqref{cons:P_4}, \notag \\[-0.2cm]
& \theta_{k,t}B\log \Big{(}1 + \frac{p_{k,\max}h_{k,t}}{\theta_{k,t}BN_0} \Big{)} \ge \frac{Qq}{(\mathcal{T}_{\max} - \mathcal{T}_k^{\rm{L}})}, \label{eq:minimal_band_cons}\tag{\theequation a}
\end{align}
where
\begin{align}
\mathcal{I}(\theta_{k,t}) = \exp \Big{(} \frac{Qq\ln 2}{(\mathcal{T}_{\max} - \mathcal{T}_k^{\rm{L}}){\theta_{k,t}}B} \Big{)} - 1.
\end{align}
For problem $\mathcal{P}_2$, we obtain its optimal solution by using the following lemma.
\begin{lem}\label{lem:band}
The optimal bandwidth allocation of problem $\mathcal{P}_2$ satisfies
\begin{align}\label{eq:optimal_theta_max}
\theta_{k,t}^* = \max \left\{ \theta_{k,t}(\mu),\theta_{k,t}^{\min} \right\},
\end{align}
where
\begin{align}\label{eq:optimal_theta}
\theta_{k,t}(\mu) = \frac{Qq\ln 2}{(\mathcal{T}_{\max} - \mathcal{T}_k^{\rm{L}})B\left(\mathcal{W}\left( \frac{\mu h_{k,t}}{eBN_0q_k(t)(\mathcal{T}_{\max} - \mathcal{T}_k^{\rm{L}})} - \frac{1}{e} \right) + 1 \right)},
\end{align}
and $\theta_{k,t}^{\min}$ satisfies constraint \eqref{cons:P_3}, $\mu$ is the Lagrange multiplier which satisfies $\sum\nolimits_{k = 1}^K \theta_{k,t}(\mu^*) = 1$. $\mathcal{W}(\cdot)$ is the principal branch of the Lambert function, defined as the solution of $\mathcal{W}(x) e^{\mathcal{W}(x)}=x$, in which $e$ is the Euler's number.

Sketch of proof: The problem $\mathcal{P}_2$ is a convex problem. By solving the KKT conditions of $\mathcal{P}_2$, the lemma is proved. We have the detailed proof in Section II of the technical report \cite{chen2022proof}.
\end{lem}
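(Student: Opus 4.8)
The plan is to exploit the convexity of $\mathcal{P}_2$ so that the Karush--Kuhn--Tucker (KKT) conditions are necessary and sufficient, and then to invert the resulting stationarity equation in closed form via the Lambert $\mathcal{W}$ function. To streamline notation I would write each device's objective coefficient as $a_k = \frac{BN_0 q_k(t)(\mathcal{T}_{\max}-\mathcal{T}_k^{\rm L})}{h_{k,t}} > 0$ and the exponent constant as $b_k = \frac{Qq\ln 2}{(\mathcal{T}_{\max}-\mathcal{T}_k^{\rm L})B} > 0$, so that the objective \eqref{eq:band_obj} reads $\sum_{k\in\bm{S}_t} a_k\theta_{k,t}\big(e^{b_k/\theta_{k,t}}-1\big)$ and the minimal-rate constraint \eqref{eq:minimal_band_cons} translates into a per-device lower bound $\theta_{k,t}\ge\theta_{k,t}^{\min}$.

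First I would establish convexity. Writing $g(\theta)=\theta e^{b_k/\theta}$, a direct computation gives $g''(\theta)=e^{b_k/\theta}b_k^2/\theta^3>0$ for $\theta>0$, so each summand (the convex $g$ minus the affine term $a_k\theta_{k,t}$) is convex; the coupling constraint \eqref{cons:P_3} and the box constraint \eqref{cons:P_4} are affine, and \eqref{eq:minimal_band_cons} is a superlevel set of $\theta\mapsto \theta B\log\big(1+\frac{p_{k,\max}h_{k,t}}{\theta B N_0}\big)$, which is concave as a perspective of $\log(1+\cdot)$ and hence convex. Therefore $\mathcal{P}_2$ is a convex program and KKT stationarity characterizes the global optimum.

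Next I would form the Lagrangian with multiplier $\mu\ge 0$ for \eqref{cons:P_3} — which is active at the optimum since $J_k(\theta):=a_k\theta(e^{b_k/\theta}-1)$ satisfies $J_k'(\theta)=a_k\big(e^{b_k/\theta}(1-b_k/\theta)-1\big)<0$ for all $\theta>0$, so the objective is strictly decreasing and all bandwidth is used — together with multipliers for the lower bounds $\theta_{k,t}\ge\theta_{k,t}^{\min}$. Where the lower bound is inactive, stationarity $\partial\mathcal{L}/\partial\theta_{k,t}=0$ yields, after dividing by $a_k$, the transcendental equation $e^{x}(1-x)=1-\mu/a_k$ with $x:=b_k/\theta_{k,t}$. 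The key algebraic step is to cast this into canonical Lambert form: rewriting as $(x-1)e^{x-1}=(\mu/a_k-1)/e$ and setting $w=x-1$ gives $we^w=(\mu/a_k-1)/e$, whence $w=\mathcal{W}\big((\mu/a_k-1)/e\big)$ on the principal branch (valid since $\mu\ge 0$ keeps the argument $\ge -1/e$). Substituting back the definitions of $a_k,b_k$ and recalling $\theta_{k,t}=b_k/x=b_k/(1+w)$ reproduces \eqref{eq:optimal_theta} exactly, with the stated argument $\frac{\mu h_{k,t}}{eBN_0q_k(t)(\mathcal{T}_{\max}-\mathcal{T}_k^{\rm L})}-\frac{1}{e}$.

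Finally I would fold in the lower bound by complementary slackness: if the interior value $\theta_{k,t}(\mu)$ falls below $\theta_{k,t}^{\min}$, the minimal-rate constraint binds and the optimum is $\theta_{k,t}^{\min}$, yielding the projection $\theta_{k,t}^*=\max\{\theta_{k,t}(\mu),\theta_{k,t}^{\min}\}$ of \eqref{eq:optimal_theta_max}; the multiplier $\mu^*$ is then pinned down by the active coupling constraint $\sum_k\theta_{k,t}^*=1$. Because $\mathcal{W}$ is increasing on its principal branch, $\theta_{k,t}(\mu)$ is strictly decreasing in $\mu$, so the left-hand sum is monotone and $\mu^*$ is unique and computable by bisection. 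I expect the main obstacle to be the algebraic recognition step that casts stationarity into the form $we^w=\text{const}$ — in particular choosing the substitutions $x=b_k/\theta_{k,t}$ then $w=x-1$ and verifying the principal-branch domain — together with the bookkeeping that separates devices whose rate constraint binds from those governed by the Lambert-$\mathcal{W}$ expression.
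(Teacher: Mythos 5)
Your proposal is correct and follows essentially the same route as the paper's proof: convexity of $\mathcal{P}_2$, KKT stationarity with multiplier $\mu$ on the sum-bandwidth constraint, inversion of the stationarity condition via the principal branch of the Lambert $\mathcal{W}$ function, and monotonicity of the objective to justify that $\sum_k\theta_{k,t}^*=1$ is active and that $\mu^*$ is found by bisection. The only difference is one of detail: the paper merely asserts that the stationarity equation ``inverts'' to \eqref{eq:optimal_theta}, whereas you carry out that inversion explicitly with the substitutions $x=b_k/\theta_{k,t}$ and $w=x-1$ and verify the principal-branch domain, which is a welcome elaboration rather than a different argument.
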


Although Lemma \ref{lem:band} provides the optimal condition of bandwidth allocation, there is still an unknown variable $\mu$. Below we develop a binary search method to solve the optimal $\mu$.
Since the Lagrange multiplier $\mu \ge 0$, we have $\frac{\mu h_{k,t}}{ e B N_0 q_k(t)\left(\mathcal{T}_{\max} - \mathcal{T}_k^{\rm{L}} \right)} - \frac{1}{e} \ge  - \frac{1}{e}$.
Moreover, $\mathcal{W}(x)$ is a monotonically increasing function when $x \ge -\frac{1}{e}$. Thus, $\theta_{k,t}(\mu)$ is a monotonically decreasing function with respect to $\mu$. To deploy the binary search method, we derive the lower and upper bound of $\mu$. Since $\mu \ge 0$, the lower bound of $\mu$ is $\mu_{\rm{lb}}=0$. For the upper bound, we have $\mathop {\max}\nolimits_{\bm{S}_t} \left\{ \theta _{k,t}(\mu) \right\} \ge \frac{1}{\left| \bm{S}_t \right|}$.
Let $\varphi_k = \frac{Qq\left| \bm{S}_t \right|\ln 2}{\left(\mathcal{T}_{\max} - \mathcal{T}_k^{\rm{L}}\right)B}$. Based on the definition of Lambert function, we have
\begin{align}\label{eq:lagrange_up}
\mu \le \mu_{\rm{ub}}  = \mathop {\max }\limits_{k \in \bm{S}_t} \left\{ \frac{B N_0 q_k(t)(\mathcal{T}_{\max} - \mathcal{T}_k^{\rm{L}})\left((\varphi_k - 1)e^{\varphi_k} + 1 \right)}{h_{k,t}} \right\}.
\end{align}
Based on the lower bound $\mu_{\rm{lb}}$ and upper bound $\mu_{\rm{ub}}$, the optimal Lagrange multiplier can be obtained by the binary search method.
For clarity, we summarize the detailed steps for solving the optimal bandwidth allocation policy in Algorithm \ref{alg:band_allo}.
The binary search method halves the search region at every iteration and terminate when the given precision (i.e., $\varepsilon$) requirement is satisfied. Thus, the time complexity of this method is $\mathcal{O}\left( \log_2\frac{\mu_{\rm{ub}} - \mu_{\rm{lb}}} \varepsilon \right)$.
\begin{algorithm}
\floatname{algorithm}{Algorithm}
\algsetup{linenosize=} \small
\caption{Optimal Wireless Bandwidth Allocation}
\label{alg:band_allo}
\begin{algorithmic}[1]
\STATE Initialize $\bm{S}_t$, the precision requirement $\varepsilon >0$.
\STATE Initialize the upper bound of Lagrange multiplier $\mu_{\rm{ub}}$ based on \eqref{eq:lagrange_up}, set the lower bound to $\mu_{\rm{lb}}=0$.

\REPEAT
    \STATE Set $\mu = (\mu_{\rm{lb}}+ \mu_{\rm{ub}})/2$.
    \STATE For each device $k \in \bm{S}_t$, compute the required bandwidth allocation ratio $\theta_{k,t}(\mu)$ based on \eqref{eq:optimal_theta}.
    \STATE Compute the summation of required bandwidth allocation ratio $\sum_{k\in \bm{S}_t} \theta_{k,t}(\mu)$.
    \IF{$\sum_{k\in \bm{S}_t} \theta_{k,t}(\mu)>1$}
        \STATE Halve the searching region by setting $\mu_{\rm{lb}}=\mu$ and $\mu_{\rm{ub}}= \mu_{\rm{ub}}$.
    \ELSIF{$0<\sum_{k\in \bm{S}_t} \theta_{k,t}(\mu)<1-\varepsilon$}
        \STATE Halve the searching region by setting $\mu_{\rm{lb}}=\mu_{\rm{lb}}$ and $\mu_{\rm{ub}}= \mu$.
    \ELSE
        \STATE Break the circulation.
    \ENDIF
\UNTIL{$|\mu_{\rm{ub}} - \mu_{\rm{lb}} | < \varepsilon $}
\STATE Substituting $\mu$ into \eqref{eq:optimal_theta} for get $\theta_{k,t}(\mu)$, then compute the optimal bandwidth allocation policy based on \eqref{eq:optimal_theta_max}.
\STATE Substitute the optimal bandwidth allocation policy into \eqref{eq:power_expression} for obtaining the optimal power control policy.
\RETURN The optimal bandwith allocation policy $\bm{\theta}_t$, the optimal power control policy $\bm{p}_t$.
\end{algorithmic}
\end{algorithm}

\subsection{Device Scheduling}
Based on the above analysis, the optimal bandwidth allocation and power control policy for any device scheduling set $\bm{S}_t$ can be obtained by using Algorithm \ref{alg:band_allo}. For device scheduling design, an intuitive method is to compute the objective function value for all possible device scheduling decisions, and select the one with minimal objective function as the final scheduling decision.
However, this intuitive method is infeasible in its implementation since there are $\sum\nolimits_{n = 0}^K {C_K^n}  = 2^K$ possible scheduling decisions, inducing an exponential time complexity with $\mathcal{O}\left(2^K  \log_2\frac{\mu_{\rm{ub}} - \mu_{\rm{lb}}} \varepsilon \right)$. In the following part, we develop an efficient algorithm to solve the device scheduling policy.

According to the objective function \eqref{eq:lya_obj}, it is desirable to select devices with small $q_k(t)$ and $E_{k,t}$, as well as large data samples. The small $E_{k,t}$ is achieved by strong channels and low computation energy consumption.
To identify these devices, we first allocate equal bandwidth to all devices (i.e., $\theta=1/K$), and then substitute $\mathcal{T}_{k,t}^{\rm{U}}=\mathcal{T}_{\max}-\mathcal{T}_k^{\rm{L}}$ into \eqref{eq:trans_energycon} to compute the estimated energy consumption $\bar E_{k,t} = E_{k,t}^{\rm{L}} + E_{k,t}^{\rm{U}}$.
Based on the estimated energy consumption for all devices, we sort devices based on $\Delta_{k,t} = - V\gamma_t D_{k,c} + q_k(t)\bar{E}_{k,t}$ ($\forall k \in \mathcal{K}$) in the ascending order. Denote $\widetilde{\mathcal{K}}$ as the sorted device set.
Many sorting algorithms, such as Heapsort or Mergesort, can be used, with a worst-case complexity $\mathcal{O}(K\log K)$.
Then, we solve the device scheduling policy by incrementally adding devices into the selection set $\bm{S}$ from the sorted device set $\widetilde{\mathcal{K}}$.
For each possible device scheduling set $\bm{S}$, we perform Algorithm \ref{alg:band_allo} to obtain the optimal wireless bandwidth allocation $\bm{\theta}_t^*(\bm{S})$, power control decisions $\bm{p}_t^*(\bm{S})$, as well as the optimal energy consumption $E_t^*(\bm{S})$.
Substituting $E_t^*(\bm{S})$ into \eqref{eq:lya_obj}, the drift-plus-penalty value of device scheduling set $\bm{S}$ can be obtained, denoted as $\mathcal{Y}(\bm{S})$. Let $\mathcal{H}$ denote the set of all possible device scheduling set $\bm{S}$. Finally, we obtain the optimal device scheduling policy through comparing the drift-plus-penalty value of all possible device scheduling set $\bm{S} \in \mathcal{H}$, i.e., $\bm{S}_t^* = \mathop {\arg \min }\nolimits_{\bm{S} \in \mathcal{H}} \mathcal{Y}(\bm{S})$.
Note that, the energy consumption of devices with $q_k(t)=0$ does not affect the objective function value, the minimal required bandwidth should be allocated to them for saving more bandwidth for other users with $q_k(t)>0$.
For clarity, we summarize the detail steps of device scheduling algorithm in Algorithm \ref{alg:two}, which obtains the device scheduling policy by performing at most $K$ times Algorithm \ref{alg:band_allo} and has a polynomial time complexity $\mathcal{O}\left( K\log_2\frac{\mu_{\rm{ub}} - \mu _{\rm{lb}}}{\varepsilon }\right)$.

For Algorithm \ref{alg:two}, we analyze its performance by comparing it with its optimal offline counterpart which is in fact problem $\widehat{\mathcal{P}}$. The offline algorithm has the channel information of all rounds. Let $\alpha_{k,t}^*$ be the offline optimal device scheduling decision obtained by solving the problem $\widehat{\mathcal{P}}$ with pre-known device information. The performance guarantee of the proposed device scheduling algorithm is shown in Proposition \ref{prop:ly_performance}.

\begin{algorithm}
\floatname{algorithm}{Algorithm}
\small
\caption{Energy-aware online Device scheduling}
\label{alg:two}
\begin{spacing}{1}
\begin{algorithmic}[1]
\STATE Input the virtual queue length $q_k(t)$ ($k \in \mathcal{K}$) and $\gamma_t$, initialize $V$.
\STATE Substituting $\theta_{k,t}=1/K$ and $\mathcal{T}_{k,t}^{\rm{U}}=\mathcal{T}_{\max}-\mathcal{T}_k^{\rm{L}}$ into \eqref{eq:trans_energycon} to compute the estimated energy consumption of device $k$ ($\forall k \in \mathcal{K}$), i.e., $\bar E_{k,t} = E_{k,t}^{\rm{L}} + E_{k,t}^{\rm{U}}$.
\STATE Sort devices based on $\Delta_{k,t} \!=\! - V \gamma_t D_{k,c} \!+\! q_k(t) \bar{E}_{k,t}$ in the ascending order to obtain the sorted device set $\widetilde{\mathcal{K}}$.
\FOR{$k = \widetilde{\mathcal{K}}(1), \widetilde{\mathcal{K}}(2), \cdots, \widetilde{\mathcal{K}}(K) $}
    \STATE Update $\bm{S}=\bm{S} \cup \{k\}$
    \STATE Solve the optimal bandwidth allocation and power control policy by Algorithm \ref{alg:band_allo}, i.e., $\bm{\theta}_t^*(\bm{S})$ and $\bm{p}_t^*(\bm{S})$.
    \STATE Compute the drift-plus-penalty of $\bm{S}$, i.e., $\mathcal{Y}(\bm{S})= -V\gamma_t\sum\nolimits_{k \in \bm{S}} D_k + \sum\nolimits_{k \in \bm{S}} q_k(t)E_{k,t} $
    \IF{$-V D_k + q_k(t) E_{k,t}>0$}
        \STATE Break the circulation
    \ELSE
        \STATE Add $\bm{S}$ into $\mathcal{H}$, i.e., $\mathcal{H}=\mathcal{H} \cup \bm{S}$
    \ENDIF
\ENDFOR
\STATE Find the optimal device scheduling set $\bm{S}_t^* = \argmin_{\bm{S} \in \mathcal{H}}\mathcal{Y}(\bm{S})$.
\RETURN The device scheduling set $\bm{S}_t^*$, wireless bandwidth allocation $\bm{\theta}_t^*(\bm{S}_t^*)$, and power control policy $\bm{p}_t^*(\bm{S}_t^*)$.
\end{algorithmic}
\end{spacing}
\end{algorithm}

\begin{prop}\label{prop:ly_performance}
Compared to the offline optimal solution, the cumulative loss of Algorithm \ref{alg:two} is bounded by
\begin{align}\label{eq:datavolume_bound}
\sum\nolimits_{t = 0}^{T - 1}  {\gamma_t}\sum\nolimits_{k = 1}^K \alpha_{k,t} {D_{k,c}} \ge  - \frac{T\zeta_0}{V} - \frac{T(T - 1)}{2V}\sum\nolimits_{k = 1}^K \zeta_k^2 + \sum\nolimits_{t = 0}^{T - 1}  {\gamma_t}\sum\nolimits_{k = 1}^K \alpha_{k,t}^*{D_{k,c}},
\end{align}
and the total energy consumption of Algorithm \ref{alg:two} is bounded by
\begin{align}\label{eq:energy_bound}
\sum\nolimits_{t = 0}^{T - 1} \sum\nolimits_{k = 1}^K \alpha_{k,t}E_{k,t} \le \sum\nolimits_{k = 1}^K  E_k + \sqrt{2K\left( T \zeta_0 + V\sum\nolimits_{t = 0}^{T - 1} \gamma_t D \right)},
\end{align}
where $\zeta_0 = \frac{1}{2}\sum\nolimits_{k = 1}^K  \zeta_k^2$ and $\zeta_k = {\max_t}\left\{ {\left| \alpha_{k,t}E_{k,t} - \frac{E_k}{T} \right|} \right\}$.
\end{prop}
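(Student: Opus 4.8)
The plan is to carry out a Lyapunov drift-plus-penalty analysis anchored on the virtual queues in \eqref{eq:lyapunov_queue}. I would introduce the quadratic Lyapunov function $\Phi(t) = \frac{1}{2}\sum_{k=1}^K q_k(t)^2$ and its one-slot drift $\Delta(t) = \Phi(t+1) - \Phi(t)$. Squaring the queue recursion and using $(\max\{a,0\})^2 \le a^2$ gives $q_k(t+1)^2 \le q_k(t)^2 + (\alpha_{k,t}E_{k,t} - \tfrac{E_k}{T})^2 + 2 q_k(t)(\alpha_{k,t}E_{k,t} - \tfrac{E_k}{T})$. Summing over $k$ and using the uniform per-round deviation bound $|\alpha_{k,t}E_{k,t} - \tfrac{E_k}{T}| \le \zeta_k$ yields $\Delta(t) \le \zeta_0 + \sum_{k=1}^K q_k(t)(\alpha_{k,t}E_{k,t} - \tfrac{E_k}{T})$, where $\zeta_0 = \frac{1}{2}\sum_{k=1}^K \zeta_k^2$ absorbs the quadratic remainder.

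For the data-volume lower bound \eqref{eq:datavolume_bound}, I would add the penalty $-V\gamma_t\sum_k \alpha_{k,t}D_k$ to the drift bound and isolate the quantity $\sum_k q_k(t)\alpha_{k,t}E_{k,t} - V\gamma_t\sum_k \alpha_{k,t}D_k$, which is precisely the per-round objective \eqref{eq:lya_obj} that Algorithm \ref{alg:two} minimizes. Because the offline optimal schedule $\{\alpha_{k,t}^*\}$ together with its optimal resource allocation is feasible for the per-round constraints \eqref{cons:P_2}--\eqref{cons:P_5}, the minimized value cannot exceed its value at $\{\alpha_{k,t}^*\}$, giving $\Delta(t) - V\gamma_t\sum_k \alpha_{k,t}D_k \le \zeta_0 - \sum_k q_k(t)\tfrac{E_k}{T} + \sum_k q_k(t)\alpha_{k,t}^* E_{k,t}^* - V\gamma_t\sum_k \alpha_{k,t}^* D_k$. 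Telescoping over $t = 0,\dots,T-1$, discarding $\Phi(T)\ge 0$, using $\Phi(0)=0$, and dividing by $V$ then delivers the stated bound.

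The step I expect to be the main obstacle is controlling the residual cross term $\sum_{t=0}^{T-1}\sum_{k=1}^K q_k(t)(\alpha_{k,t}^* E_{k,t}^* - \tfrac{E_k}{T})$ left by the telescoping, since the queue lengths are themselves growing over time. The key is to combine the per-round deviation bound $\alpha_{k,t}^* E_{k,t}^* - \tfrac{E_k}{T} \le \zeta_k$ with a linear queue-growth estimate $q_k(t) \le t\zeta_k$, which follows by induction from $q_k(0)=0$ and the fact that each round increases a queue by at most $\zeta_k$. Substituting and using $\sum_{t=0}^{T-1} t = \tfrac{T(T-1)}{2}$ produces the $\tfrac{T(T-1)}{2V}\sum_k \zeta_k^2$ term; this also pins down why $\zeta_k$ must be taken as a uniform bound covering both the online and offline energy deviations.

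Finally, for the energy bound \eqref{eq:energy_bound}, I would bound the drift through a different comparator: since the empty schedule (all $\alpha_{k,t}=0$) is feasible and gives objective value $0$, the minimizer of \eqref{eq:lya_obj} satisfies $\sum_k q_k(t)\alpha_{k,t}E_{k,t} \le V\gamma_t\sum_k \alpha_{k,t}D_k \le V\gamma_t D$, so $\Delta(t) \le \zeta_0 + V\gamma_t D$ after dropping the nonpositive $-\sum_k q_k(t)\tfrac{E_k}{T}$. Telescoping with $\Phi(0)=0$ gives $\frac{1}{2}\sum_k q_k(T)^2 \le T\zeta_0 + V\sum_t \gamma_t D$, and Cauchy--Schwarz yields $\sum_k q_k(T) \le \sqrt{2K(T\zeta_0 + V\sum_t \gamma_t D)}$. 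Dropping the $\max$ in \eqref{eq:lyapunov_queue} gives $q_k(T) \ge \sum_t \alpha_{k,t}E_{k,t} - E_k$; summing over $k$ and substituting the queue bound then yields \eqref{eq:energy_bound}.
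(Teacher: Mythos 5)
Your proposal is correct and follows essentially the same route as the paper's proof: the same quadratic Lyapunov function and one-slot drift bound with $\zeta_0$ absorbing the quadratic remainder, the offline-optimal schedule as comparator for the data-volume bound with the residual cross term controlled by $q_k(t)\le t\zeta_k$ and $\sum_{t=0}^{T-1}t=\tfrac{T(T-1)}{2}$, and the empty-schedule comparator plus Jensen/Cauchy--Schwarz and the max-dropped queue recursion for the energy bound. Your explicit observation that $\zeta_k$ must uniformly cover both the online and offline energy deviations is a point the paper leaves implicit.
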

\begin{proof}
Please see Appendix \ref{app:five}
\end{proof}

Proposition \ref{prop:ly_performance} characterizes the performance of the proposed device scheduling algorithm, which shows that 1) the energy constraints of devices are approximately satisfied with the $\mathcal{O}(\sqrt V)$-bounded factor, and 2) the proposed device algorithm is $\mathcal{O}(1/V)$-optimal with respect to the performance of its optimal offline counterpart solution.
Thus, the proposed device scheduling algorithm demonstrates an $\mathcal{O}(\sqrt{V}, 1/V)$ energy-learning trade-off. The worst-case performance of Algorithm \ref{alg:two} can be improved by reducing the upper bound of energy usage bias $\zeta_0$.
In addition, adjusting the weight parameter $V$ is able to achieve the balance between the learning performance and energy consumption of devices.
Specifically, with larger $V$, more emphasis is put on the scheduled data samples to improve the learning performance while more energy is consumed at devices, and vice versa.
In practical systems, one should carefully select the value of $V$ to optimize the learning performance with energy limits and use the energy in a balanced manner to avoid large $\zeta_k$.

\section{Numerical Results}\label{sec:simulation}
In this section, we verify the effectiveness of the proposed KFL algorithm.
If not specified, we consider $K = 100$ devices randomly distributed in a cell with a radius of 500m, and the server is located at the centre of the cell.
The total bandwidth is set to $B = 5$MHz. Similar to \cite{9862981}, the channel gain is modelled as $h_{k,t} = h_0 \rho_k(t) (d_0/d_k)^v$, where $h_0=-30$dBm is the path loss constant; $d_k$ is the distance between device $k$ and the edge server; $d_0=1$m is the reference distance; $\rho_k(t) \sim \text{Exp}(1)$ is exponentially distributed with unit mean, which represents the small-scale fading channel power gain from the device $k$ to the edge server in round $t$;  $d_0/d_k$ represents the large-scale path loss with $v=2$ being the path loss exponent.
The channel noise power spectral density $N_0$ is set to $-174$ dBm.
For all devices in the system, we set their maximal transmit power to $p_{k,\max}=30$dBm, and their CPU frequency are randomly selected from \{0.85, 1.12, 1.2, 1.3\}GHz \cite{9562538, 9509427}.

We evaluate the proposed KFL algorithm on two image classification tasks using MNIST and CIFAR-10 datasets, both of them have 10 classes of data samples.
For the MNIST dataset, we train five-layer multilayer perceptron (MLP) models with the following architecture: four fully connected layers with 784, 512, $d_{\text{MLP}}$, 64 units, each of these layers is activated by the ReLU function; and a 10-unit softmax output layer.
For the CIFAR-10 dataset, we train five-layer CNN models with the following structure: two $5\times5$ convolution layers followed by a $2\times2$ max-pooling layer, in which the first convolution layer possesses 6 channels and the second layer with 16 channels; three fully connected layers with 400, $d_{\text{CNN}}$, and 64 units, respectively; and a 10-unit softmax output layer. The ReLU function activates each convolution or fully connected layer.
When devices are equipped with homogeneous models, we set $d_{\text{MLP}}=256$ and $d_{\text{CNN}}=128$.
The number of FLOPs and parameters of these machine learning models can be estimated using the method in \cite{goodfellow2016deep}. Specifically, the MLP with $d_{\text{MLP}}=256$ possesses 553406 parameters which equal the FLOPs required to process one data sample. The CNN with $d_{\text{CNN}}=128$ has 63106 parameters and requires 1245834 FLOPs to process one data sample.
When devices have heterogeneous models, the value of $d_{\text{MLP}}$ and $d_{\text{CNN}}$ for all devices are randomly selected from $ \{128,192,256,320,384\}$.
For both MLP and CNN, the learning rates, i.e., $\eta_u$ and $\eta_v$, are set to 0.05, a momentum of 0.9 is adopted, and the number of local iterations is set to $\tau=5$, and cross-entropy is adopted as the loss function.
In addition, we first classify the training data samples according to their labels, then split each class of data samples into $mK/10$ shards, and finally randomly distribute
two shards of data samples to each device. That is, each client has a data distribution corresponding to at most $m$ classes.
The data distributions among devices are more skewed for smaller $m$. Due to page limits, we only show results based on $m=2$ in the following experiments. The results based on $m=3$ are presented in Section III of our technical report \cite{chen2022proof}, showing similar results with $m=2$.
For the MNIST dataset, we set the energy budgets $E_k=0.1 \times T$ J ($\forall k \in \mathcal{K}$) and $T_{\max}=1$s. For CIFAR-10, we set $E_k=0.5 \times T$ J ($\forall k \in \mathcal{K}$) and $T_{\max}=2$s.
In the simulations, each device first computes the number of correct predicted data samples on its test dataset by its local model.
Note that the deployed trained models on devices are the same in FedAvg, while each device has a personalized local model in the proposed KFL.
Then, the test accuracy is computed as the total number of correct predicted test data samples on all devices divided by the total number of test data samples on all devices.

In the following sections, we verify the theoretical results in Remark \ref{rem:one} on MNIST and CIFAR-10 datasets by comparing the learning performance of the following three temporal device scheduling patterns.
1) Uniform Scheduling: Ten devices are randomly scheduled in each round to participate in the learning process.
2) Ascend Scheduling: The number of scheduled devices increases from 1 to 20, with an average number of 10 devices scheduled in each round.
3) Descend Scheduling: The number of scheduled devices decreases from 20 to 1, with an average number of 10 devices per round.

\subsection{Performance Evaluation with Homogeneous Models}
We evaluate the performance of the proposed KFL algorithm by comparing it with the following benchmarks. Note that, devices are equipped with homogeneous local models, and we do not consider the energy and bandwidth limitation in this subsection.
1) FedAvg \cite{pmlr-v54-mcmahan17a}: In each round, the scheduled devices upload their model parameters to the edge server for aggregation.
2) FedRep \cite{pmlr-v139-collins21a}: The scheduled devices sequentially train the feature extractor and label predictor parts of their models in each round. Particularly, the scheduled devices only upload feature extractor part of their models to the edge server for aggregation.
3) APFL \cite{deng2020adaptive}: In each round, each scheduled device trains its own local model and the received global model from the edge server. Then APFL incorporates the devices' locally trained model and the updated global model to achieve a device-specific model.
It is worth mentioning that the proposed KFL algorithm requires fewer parameters transmission in each round than the benchmarks and thus reduces the communication cost.
Specifically, for the MNIST dataset, the proposed algorithm only requires devices to upload the knowledge of 10 classes, including $64\times 10=640$ parameters, accounting for 0.12\% of the transmitted parameters by FedAvg or by FedRep. For the CIFAR-10 dataset, the KFL algorithm only requires devices to upload 640 parameters in each round, comprising 1.01\% of the total model parameters.

Fig. \ref{fig:acc_sameM} shows the learning performance of the proposed KFL algorithm and two benchmarks on MNIST and CIFAR-10 datasets. From Fig. \ref{fig:mnist_sameM_S2}, compared to the state-of-art FedRep, it is observed that the proposed algorithm achieves 0.96\% accuracy improvement when 50 devices participate in each round learning process and obtains a 2.1\% accuracy gain when scheduling 10 devices in each round. In addition, the proposed algorithm converges faster than the benchmarks.
A similar experiment conducted on the CIFAR-10 dataset is shown in Fig. \ref{fig:cifar10_sameM_S2}. Similar to the results on the MINIST dataset, the proposed algorithm outperforms the benchmarks. Specifically, it improves 6.65\% accuracy when 10 devices are scheduled in each round. Although the proposed algorithm has similar accuracy to the FedRep when scheduling 50 devices in each round, it converges faster than the latter.
\begin{figure*}
\centering
\subfigure[]{\label{fig:mnist_sameM_S2}
\includegraphics[width=0.32\linewidth]{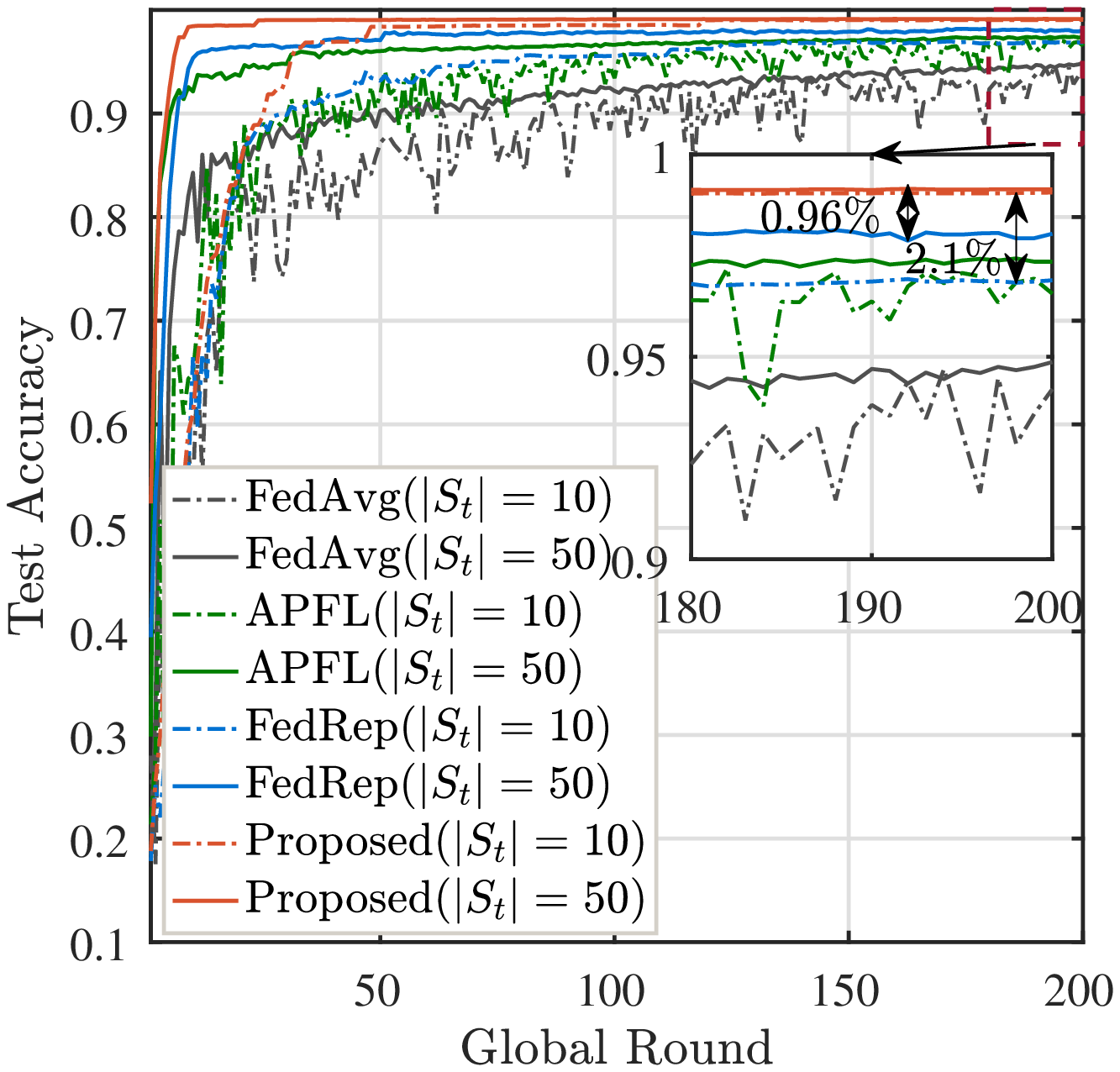}}
\hspace{-0.02\linewidth}
\subfigure[]{\label{fig:cifar10_sameM_S2}
\includegraphics[width=0.32\linewidth]{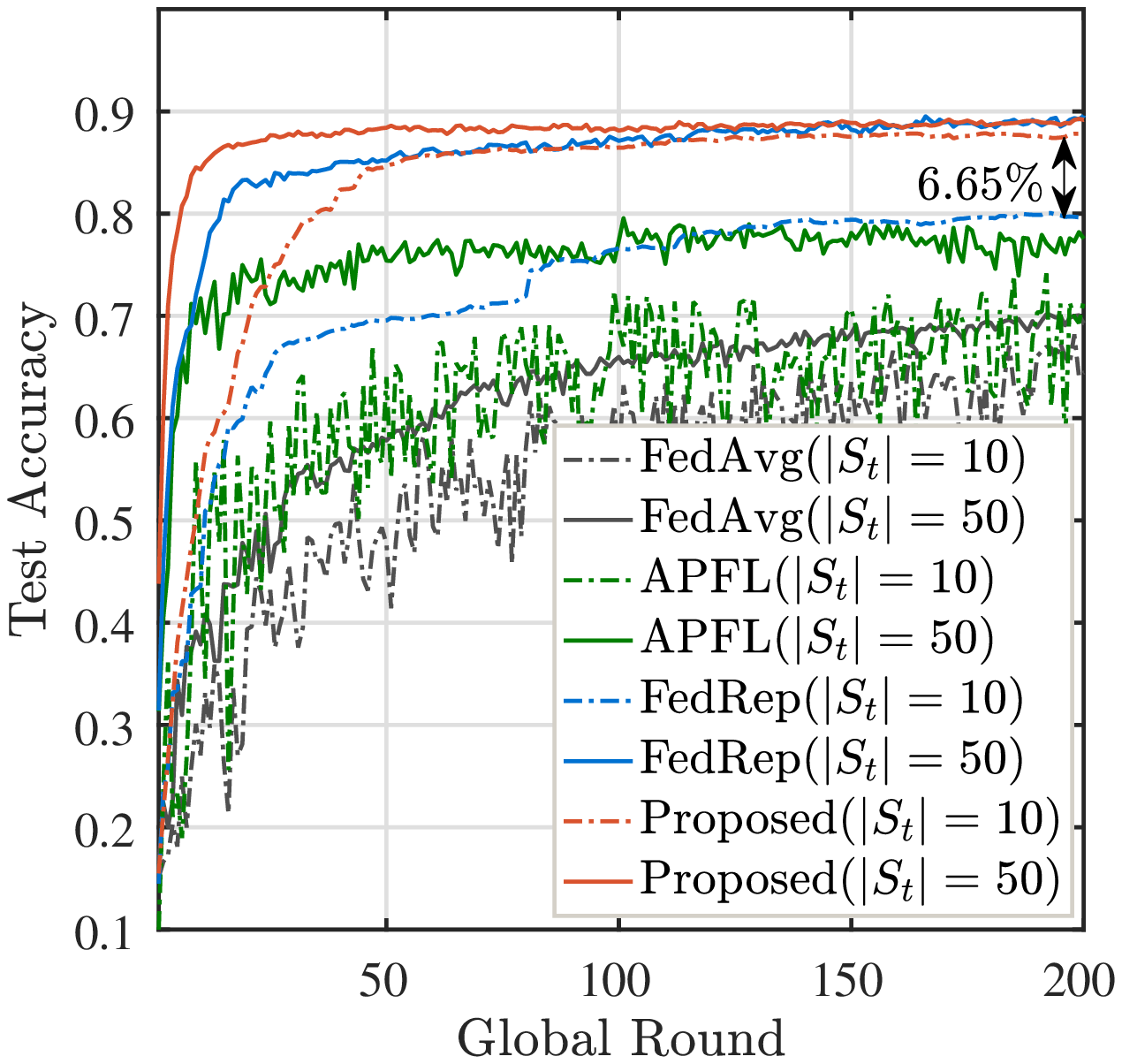}}
\hspace{-0.02\linewidth}
\subfigure[]{\label{fig:schedule_ORsameM_S2}
\includegraphics[width=0.32\linewidth]{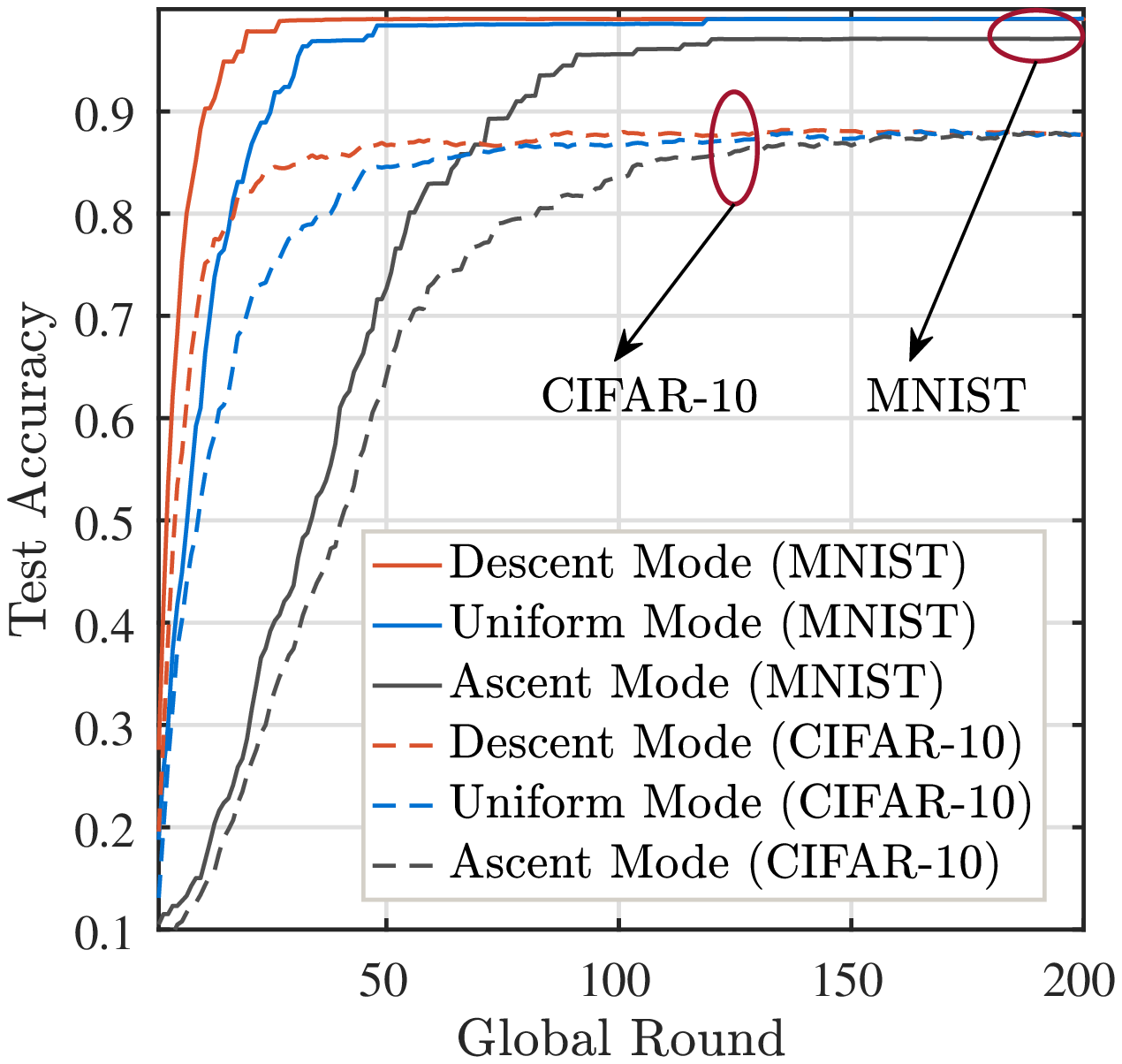}}
\vspace{-0.4cm}
\caption{Comparison of learning performance under homogeneous models (a) different algorithms on the MNIST dataset; (b) different algorithms on the CIFAR-10 dataset; (c) different scheduling patterns on MNIST and CIFAR-10 datasets.}
\label{fig:acc_sameM}
\end{figure*}

Fig. \ref{fig:schedule_ORsameM_S2} presents the test accuracy of the proposed KFL algorithm with different device scheduling patterns on MNIST and CIFAR-10 datasets. It is observed that the descend scheduling pattern converges faster than the other two scheduling patterns on these two datasets. This experimental result verifies the theoretical results in Remark \ref{rem:one}, which indicates that more scheduled data volume should bias to the early rounds if the entire scheduled data volume are fixed.

\subsection{Performance Evaluation with Heterogeneous Models}
In this subsection, we verify the effectiveness of the proposed KFL algorithm by comparing it with the FedKD \cite{li2019fedmd}, which is a knowledge distillation-based FL algorithm.
Note that in this subsection, devices are equipped with heterogeneous local models, and do not consider the energy and bandwidth limitations.
Since the knowledge distillation process requires aggregating devices' model output logits on an additional proxy dataset, we sample 50 data samples from each class (for both MNIST and CIFAR-10) to construct the proxy dataset with 500 data samples.
Note that as stated in the experimental setting, our proposed algorithm only requires devices to upload 640 parameters in each round, reducing 87\% of transmission costs compared with the knowledge distillation-based algorithm because the latter requires devices to upload $500 \times 10 = 5000$ parameters in each round.

Fig. \ref{fig:acc_diffM} presents the test accuracy of the proposed KFL algorithm and the knowledge distillation-based FL algorithm under heterogeneous devices' models. Fig. \ref{fig:mnist_diffM_S2} shows the results of the MNIST dataset. Compared to the FedKD algorithm, the proposed KFL algorithm achieves a slight test accuracy improvement, i.e., 0.61\% when 10 devices and 0.59\% when 50 devices participate in the per-round training. In addition, the proposed KFL algorithm convergences faster than the FedKD algorithm. Fig. \ref{fig:cifar10_diffM_S2} presents the results of the CIFAR-10 dataset which is more complex than the MNIST dataset. The proposed KFL algorithm obtains a more distinct learning performance gain on the CIFAR-10 dataset, i.e., compared to FedKD, improving 4\% and 9.35\% accuracy when 10 and 50 devices participate in per-round training, respectively. In fact, the learning performance of FedKD or other knowledge distillation-based FL algorithms heavily relies on the quality of the proxy dataset. In practical applications, the additional proxy dataset may not always be available, and its quality is usually not very high. Thus, the proposed KFL algorithm is flexible for practical scenarios.
\begin{figure*}
\centering
\subfigure[]{\label{fig:mnist_diffM_S2}
\includegraphics[width=0.32\linewidth]{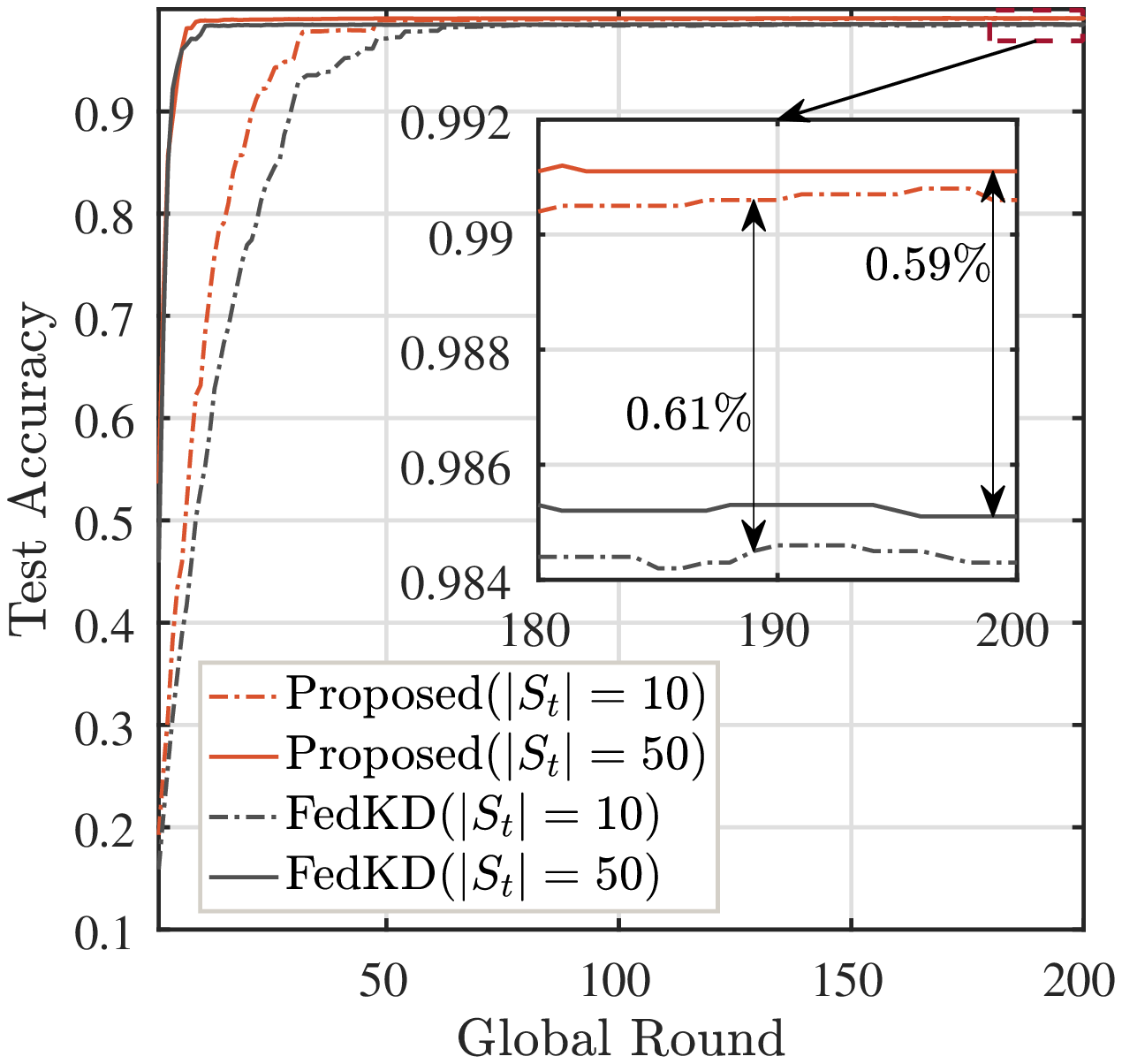}}
\hspace{-0.02\linewidth}
\subfigure[]{\label{fig:cifar10_diffM_S2}
\includegraphics[width=0.32\linewidth]{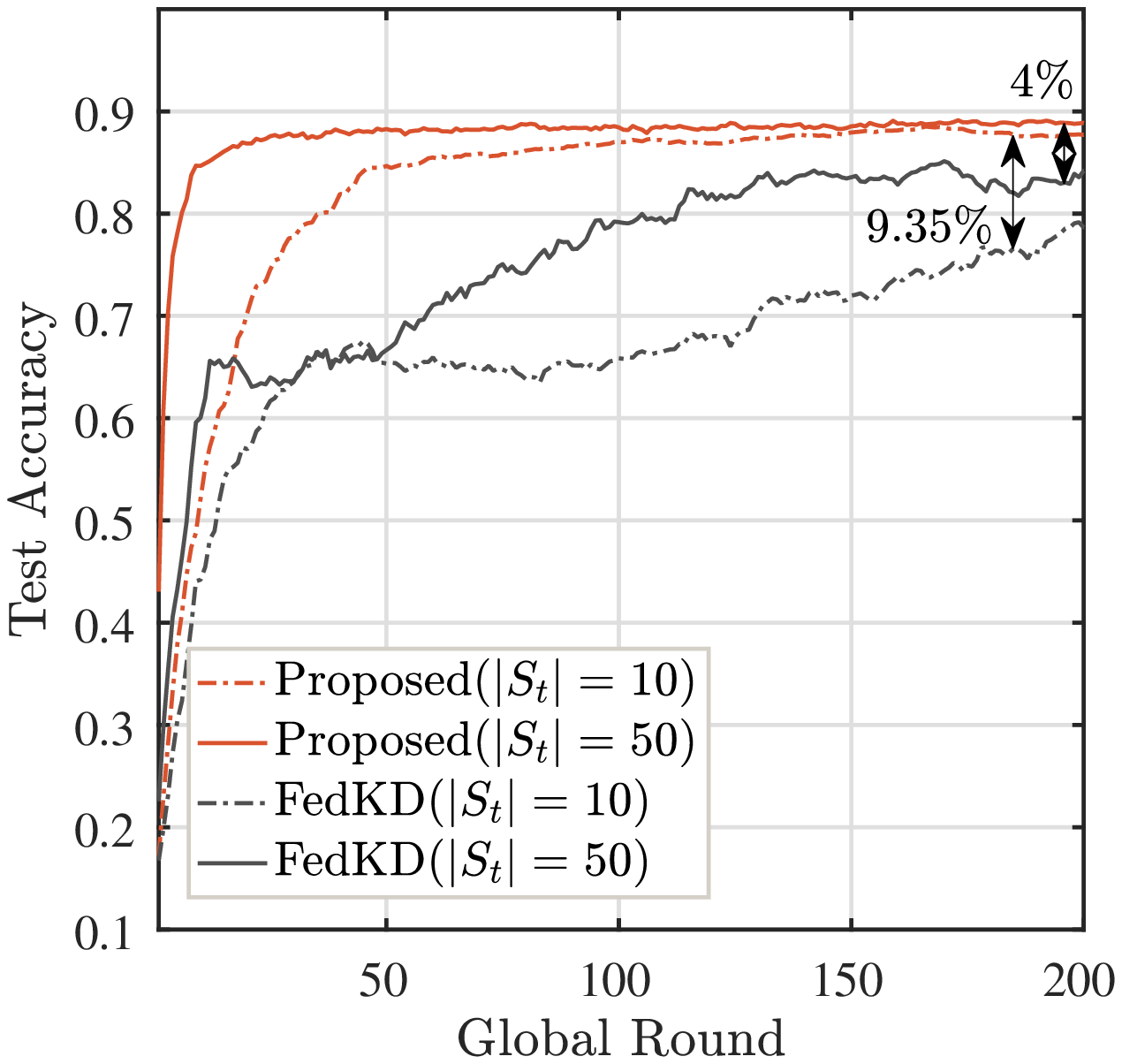}}
\hspace{-0.02\linewidth}
\subfigure[]{\label{fig:schedule_ORdiffM_S2}
\includegraphics[width=0.32\linewidth]{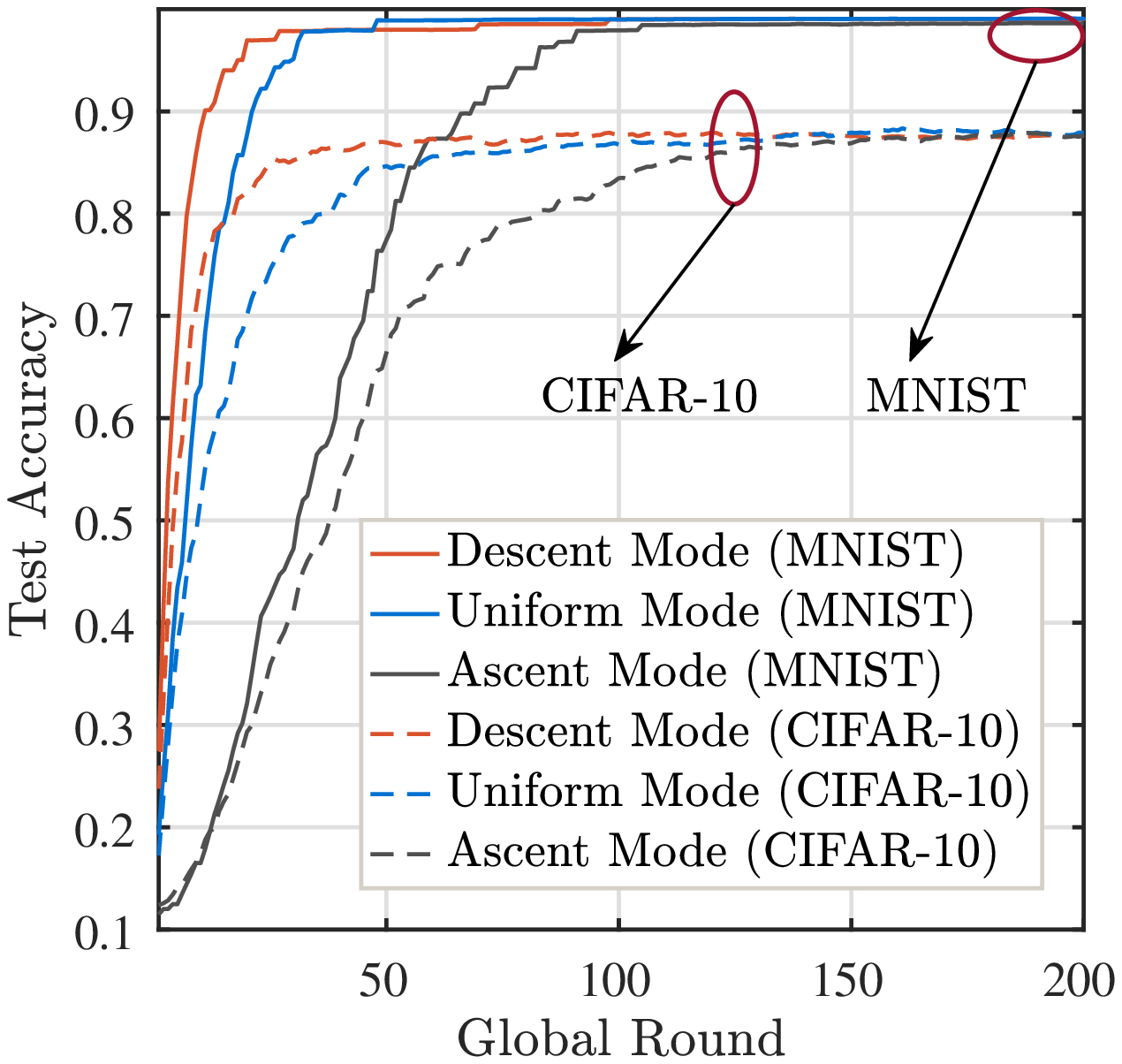}}
\caption{Comparison of learning performance under heterogeneous models (a) different algorithms on the MNIST dataset; (b) different algorithms on the CIFAR-10 dataset; (c) different scheduling patterns on MNIST and CIFAR-10 datasets.}
\label{fig:acc_diffM}
\end{figure*}
Fig. \ref{fig:schedule_ORdiffM_S2} shows a similar result to the experiment under homogeneous models, indicating that more data sample volume should be scheduled in the earlier rounds when the total scheduled data volume in the entire learning course are fixed.

\subsection{Performance of the Proposed Device Scheduling Algorithm}
This subsection evaluates the proposed device scheduling algorithm in the wireless network by comparing it with two benchmarks. Note that devices in this subsection are equipped with heterogeneous models.
1) Round Robin Scheduling Policy \cite{8851249}: In each round, the round robin policy selects a set of devices with the size of 5  (for both MNIST and CIFAR-10 dataset) that have sufficient energy to support its current local training and knowledge uploading to participate in the training process. This policy contributes a fairness scheduling among devices.
The size of the scheduled device set of the round robin is determined by the maximum overall scheduled devices of other scheduling algorithms divided by the number of rounds.
2) Myopic Scheduling Policy \cite{9207871}: For each device $k$, the available energy in round $t$ is given by the remaining energy divided by the remaining number of rounds, i.e.,
$\frac{E_k - \sum\nolimits_{t' = 0}^{t - 1} \alpha_{k,t'}E_{k,t'}}{T - t + 1}$.
Note that, in this subsection, devices are equipped with heterogeneous models.
For the proposed algorithm, we set $\gamma_t = \frac{1}{t}$ ($\forall t \in \{0, 1, \cdots, T-1\}$).
\begin{figure*}
\centering
\subfigure[]{\label{fig:mnistE_accdiffM__S2}
\includegraphics[width=0.25\linewidth]{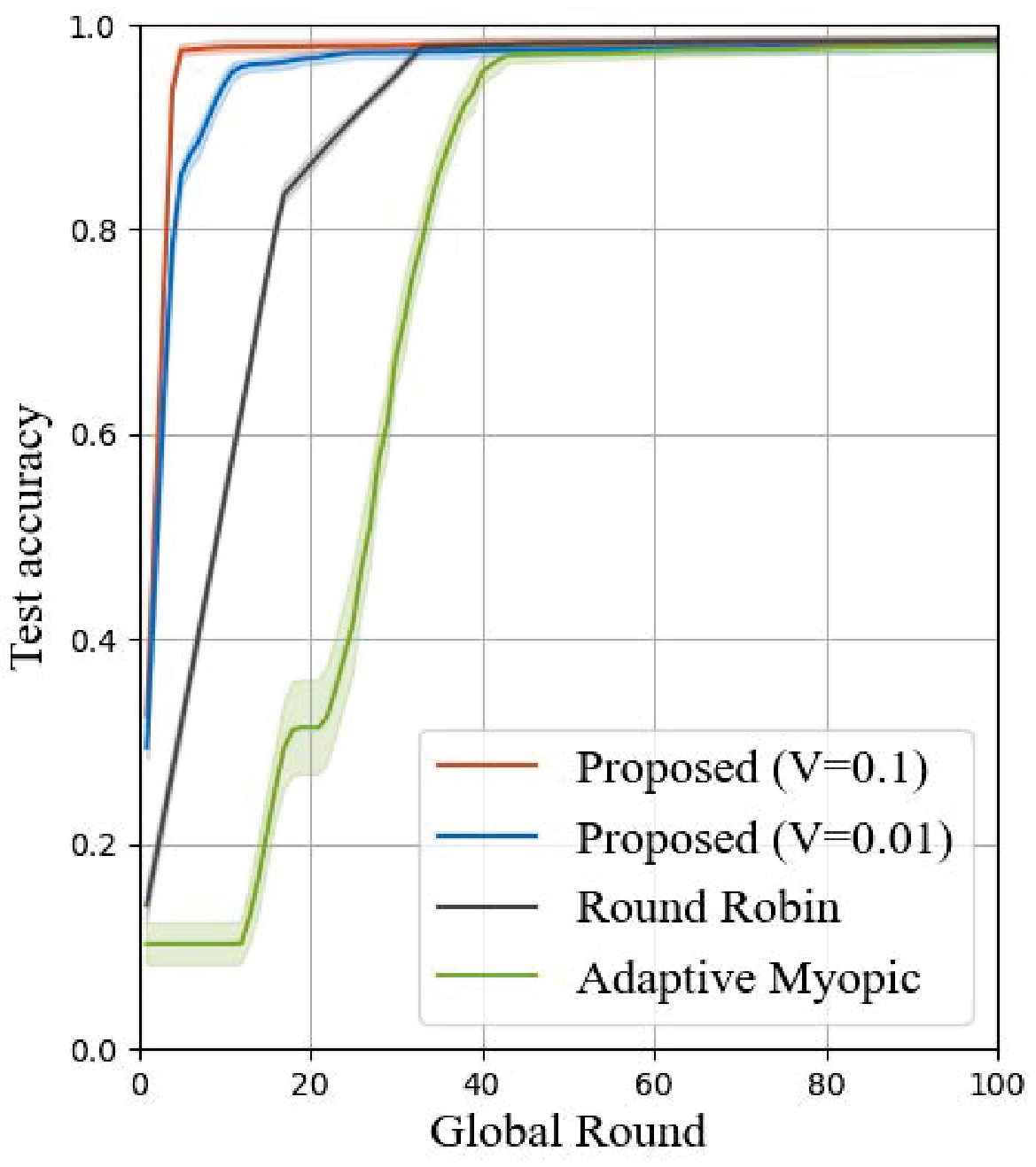}}
\hspace{-0.03\linewidth}
\subfigure[]{\label{fig:mnistE_enerdiffsameM__S2}
\includegraphics[width=0.25\linewidth]{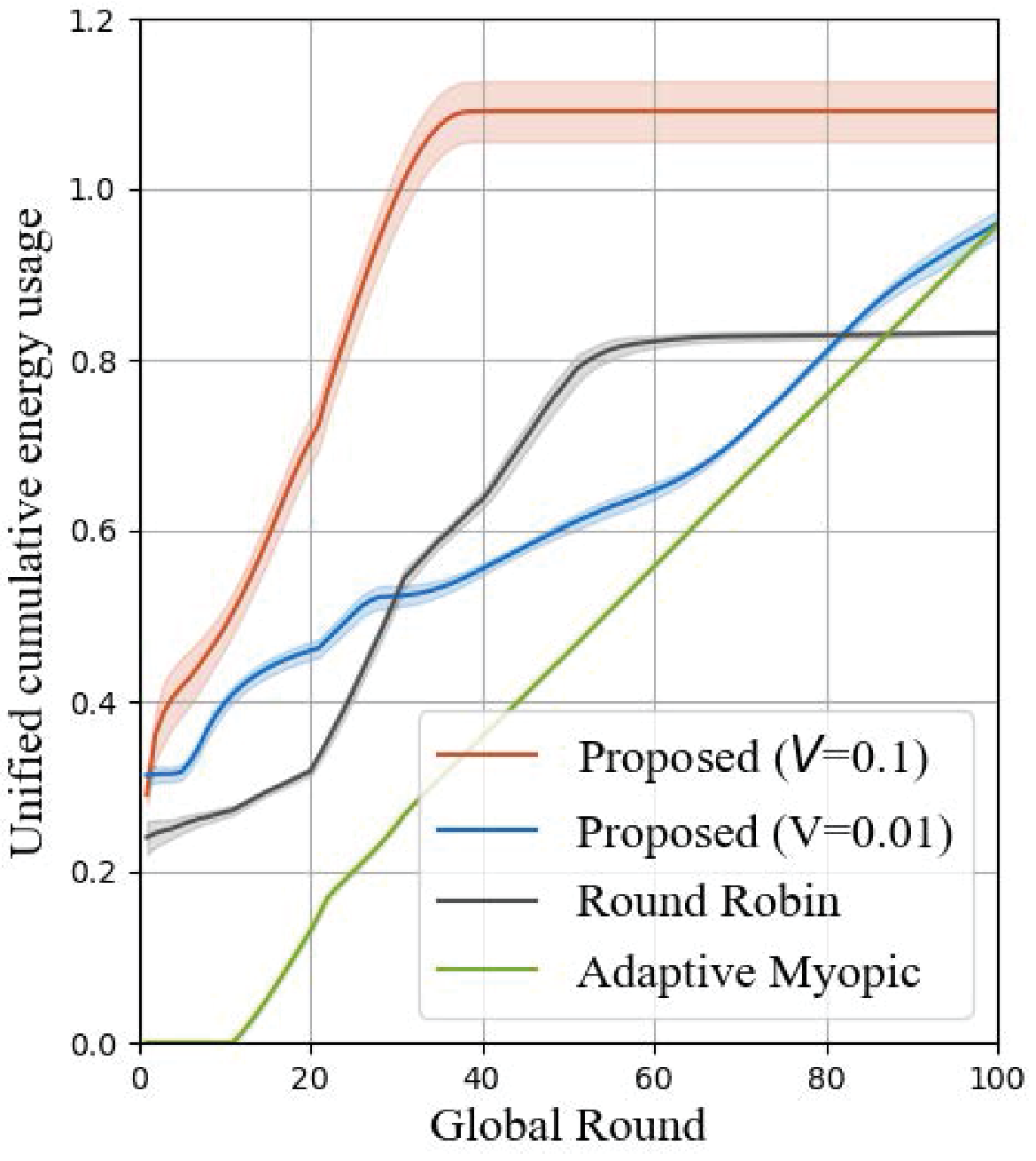}}
\hspace{-0.03\linewidth}
\subfigure[]{\label{fig:cifarE_accdiffM_S2}
\includegraphics[width=0.25\linewidth]{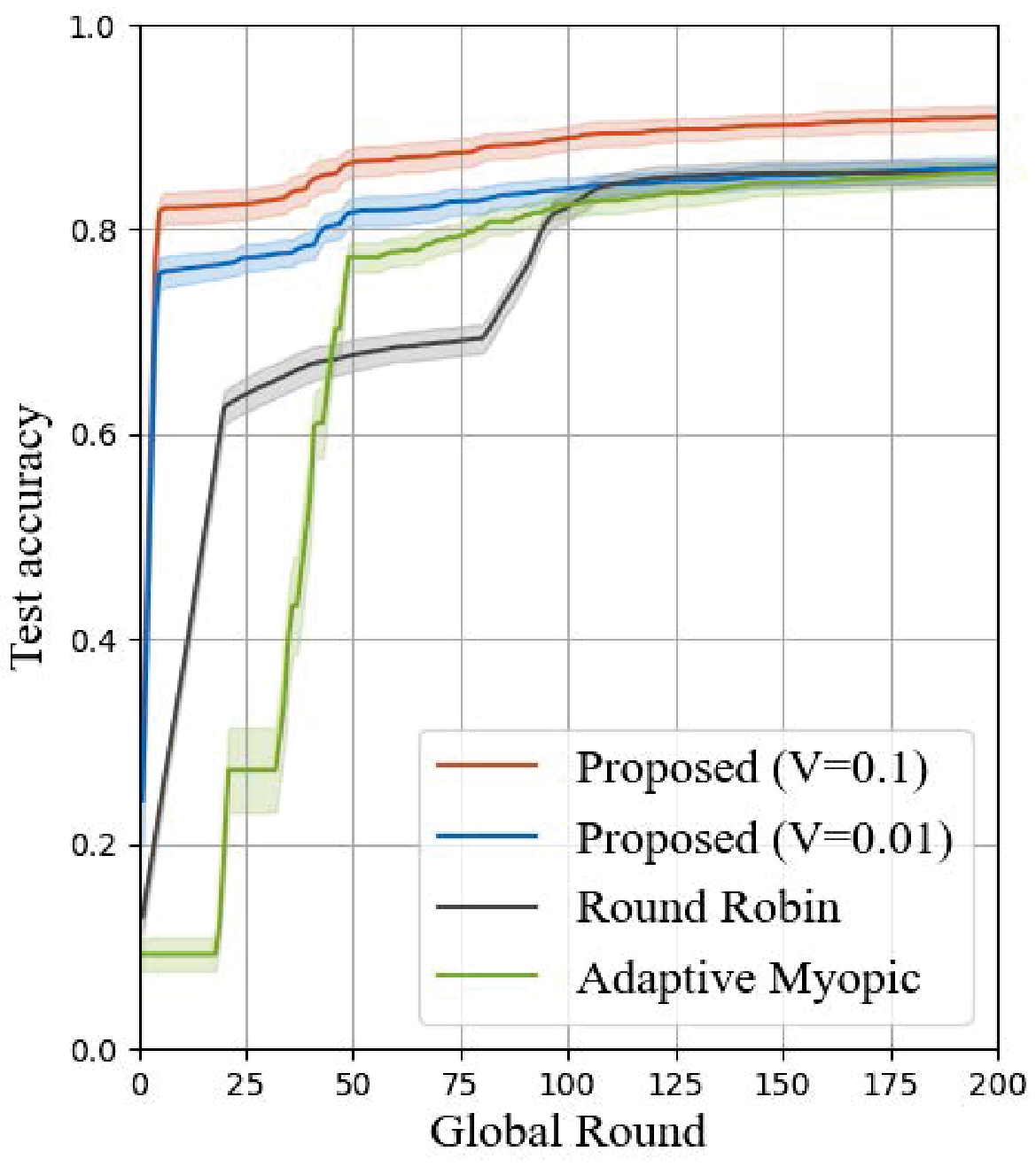}}
\hspace{-0.03\linewidth}
\subfigure[]{\label{fig:cifarE_enerdiffM_S2}
\includegraphics[width=0.25\linewidth]{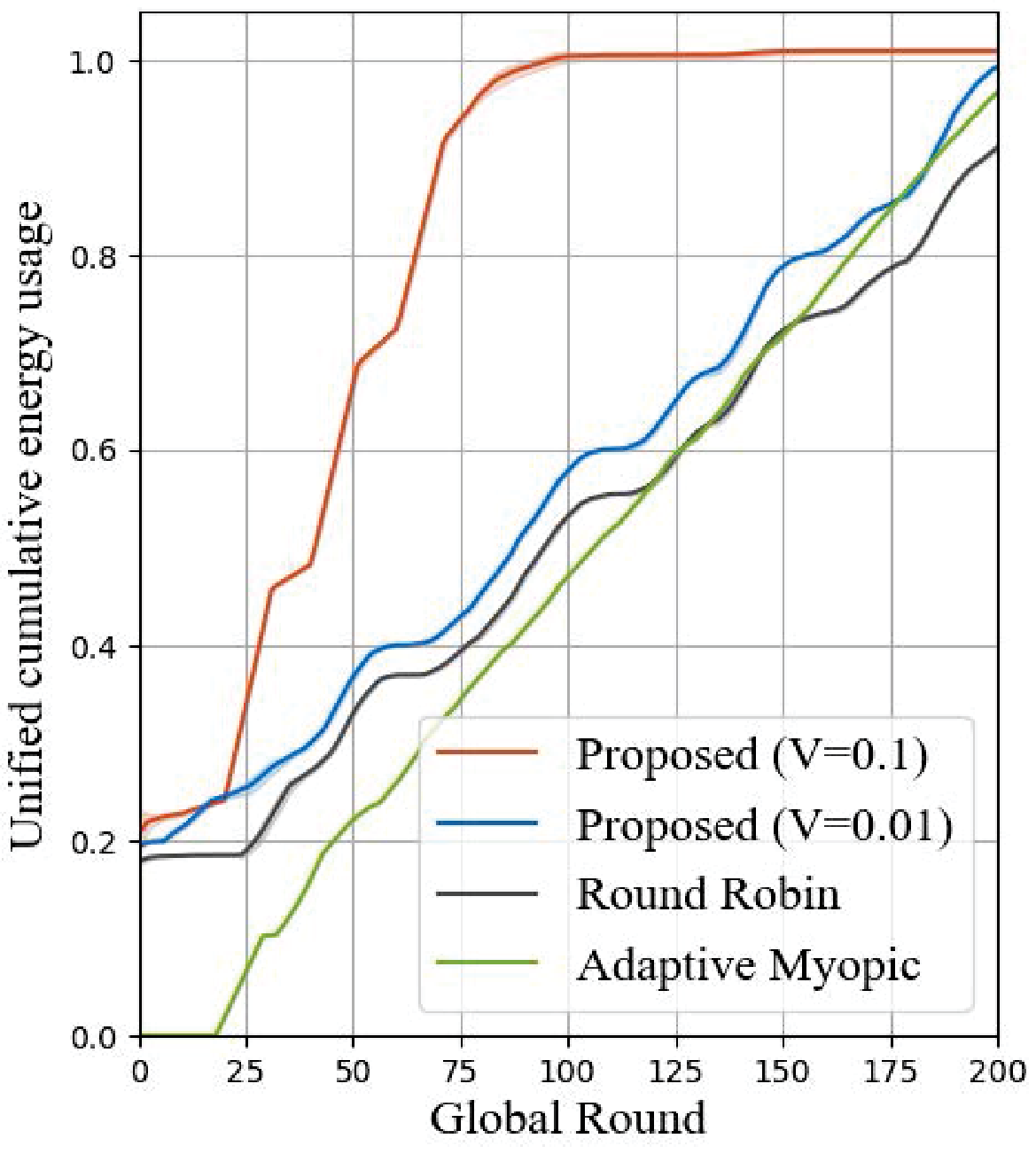}}
\caption{Comparison of learning performance in different device scheduling algorithms on the MNIST and CIFAR-10 datasets.}
\label{fig:scheduleE}
\end{figure*}
Fig. \ref{fig:mnistE_accdiffM__S2} and Fig. \ref{fig:mnistE_enerdiffsameM__S2} compare the test accuracy and cumulative energy usage of the scheduling algorithms on the MNIST dataset. It is observed from Fig. \ref{fig:mnistE_accdiffM__S2} that the proposed algorithm obtains a faster convergence speed and higher test accuracy than the benchmarks. From Fig. \ref{fig:mnistE_enerdiffsameM__S2}, we can see that the proposed scheduling algorithm with $V=0.01$ and $V=0.1$ have higher energy usage in the beginning 30 rounds than benchmarks. This induces a faster convergence speed of the proposed algorithm. Particularly, the proposed algorithm with $V=0.01$ has the same energy usage as the Adaptive Myopic algorithm. Both satisfy the energy constraints of devices (at the end of the training process, the unified energy usage is smaller than 1). However, the proposed algorithm with $V=0.01$ achieves better learning performance. This performance gain comes from the proposed algorithm enabling devices to use energy more flexibly, thus improving the training performance.

Similar comparison is made on CIFAR-10 dataset in Fig. \ref{fig:cifarE_accdiffM_S2} and Fig. \ref{fig:cifarE_enerdiffM_S2}. It is also observed that the proposed online device scheduling algorithm outperforms the baselines in accuracy and convergence speed. From Fig. \ref{fig:cifarE_enerdiffM_S2}, we can see that the proposed algorithm enables devices to consume more energy in the earlier rounds compared to the baselines, which indicates that the proposed algorithm schedules more data samples in the early rounds. Thus, based on Remark 1, the proposed algorithm obtains better learning performance than the baselines. Particularly, the proposed algorithm with $V=0.1$ enables devices to exhaust their energy in the former 100 rounds and achieve the best learning performance. The round robin algorithm enables devices to consume energy uniformly throughout the process. While for the Adaptive Myopic algorithm, the energy consumption at the former rounds exceeds the budget, and thus no devices are scheduled.
In fact, the proposed algorithm schedules devices in the descend scheduling pattern, while Adaptive Myopic schedules devices in the ascend scheduling pattern and Round Robin schedules devices in the uniform scheduling pattern. Thus, the result in Fig. 4(a) and \ref{fig:cifarE_enerdiffM_S2} also verified the correctness of our theoretical analysis in Remark 1, i.e., more data samples should be scheduled in the early rounds under restricted resources budgets.

\section{Conclusion}\label{sec:conclusion}
In this work, we have developed a novel KFL framework which aggregates devices' knowledge to enable collaborative training between devices. The benefits of this framework are three folds: 1) Allowing devices with heterogeneous models to train machine learning models collaboratively. 2) Significantly reducing the communication overhead of devices compared to conventional model aggregation-based FL approaches. 3) Mitigating the impact of non-IID data distribution among devices on learning performance.
Experimental results show that compared to conventional model aggregation-based FL algorithms,  the proposed KFL framework is able to reduce 99\% communication load while boosting 2.1\% and 6.65\% accuracy on MNIST and CIFAR-10 datasets, respectively.
In addition, we have theoretically and experimentally revealed that more scheduled data samples should be biased to the early rounds if the scheduled data samples of the entire learning process are fixed.
With this insight, we have developed an efficient online device scheduling and resource allocation algorithm to improve learning performance under devices' limited energy budgets.
Experimental results show that the proposed online device scheduling algorithm converges faster than the benchmark device scheduling algorithms.
In the future work, we will optimize the local models' design according to the devices' computing capabilities and datasets for further improving the learning performance of KFL.

\appendix
\subsection{Proof of Lemma \rm{\ref{lem:lip_F}}}\label{app:one}
Using $L_u$ smooth of $F_k(\cdot, \bm{v}_k)$ and $L_v$-smooth of $F(\bm{u}_k, \cdot)$, we have
\begin{align}\label{eq:local_Lu}
F_k(\bm{u}_k',\bm{v}_k') - F_k\left( {\bm{u}_k,\bm{v}_k'} \right)
\le \left\langle \nabla_{\bm{u}} F_k\left(\bm{u}_k,\bm{v}_k' \right),\bm{u}_k' - \bm{u}_k \right\rangle  + \frac{L_u}{2} \left\| \bm{u}_k' - \bm{u}_k \right\|^2,
\end{align}
and
\begin{align}\label{eq:local_Lv}
F_k\left( {\bm{u}_k,\bm{v}_k'} \right) - F_k(\bm{u}_k,\bm{v}_k)
 \le \left\langle \nabla_{\bm{v}} F_k(\bm{u}_k,\bm{v}_k),\bm{v}_k' - \bm{v}_k \right\rangle  + \frac{L_v}{2} \left\| \bm{v}_k' - \bm{v}_k \right\|^2.
\end{align}
Summarizing \eqref{eq:local_Lu} and \eqref{eq:local_Lv}, we have
\begin{multline}\label{eq:local_L_temp}
F_k(\bm{u}_k',\bm{v}_k') - F_k(\bm{u}_k,\bm{v}_k)
\le \left\langle {\nabla_{\bm{u}} F_k\left(\bm{u}_k,\bm{v}_k' \right),\bm{u}_k' - \bm{u}_k} \right\rangle  + \frac{L_u}{2} \left\| \bm{u}_k' - \bm{u}_k \right\|^2\\[-0.3cm]
+ \left\langle \nabla_{\bm{v}}F_k(\bm{u}_k,\bm{v}_k),\bm{v}_k' - \bm{v}_k \right\rangle  + \frac{L_v}{2} \left\| \bm{v}_k' - \bm{v}_k \right\|^2.
\end{multline}
We now focus on bounding $\left\langle \nabla_{\bm{u}} F_k\left(\bm{u}_k,\bm{v}_k' \right),\bm{u}_k' - \bm{u}_k \right\rangle$ as follows:
\begin{align}\label{local_Lu_inner}
&\left\langle \nabla_{\bm{u}}F_k(\bm{u}_k,\bm{v}_k'),\bm{u}_k' - \bm{u}_k \right\rangle \notag\\
&\overset{(a)}= \left\langle \nabla_{\bm{u}}F_k(\bm{u}_k,\bm{v}_k),\bm{u}_k' - \bm{u}_k \right\rangle
+ \left\langle \nabla_{\bm{u}}F_k\left( {\bm{u}_k,\bm{v}_k'} \right) - \nabla_{\bm{u}}F_k(\bm{u}_k,\bm{v}_k),\bm{u}_k' - \bm{u}_k \right\rangle \notag\\[-0.3cm]
&\overset{(b)}\le \left\langle {\nabla_{\bm{u}}F_k(\bm{u}_k,\bm{v}_k),\bm{u}_k' - \bm{u}_k} \right\rangle
+ \left\| {\nabla_{\bm{u}}F_k\left(\bm{u}_k,\bm{v}_k' \right) - \nabla_{\bm{u}}F_k(\bm{u}_k,\bm{v}_k)} \right\|\left\| \bm{u}_k' - \bm{u}_k \right\| \notag\\[-0.3cm]
&\overset{(c)}\le \left\langle \nabla_{\bm{u}}F_k(\bm{u}_k,\bm{v}_k),\bm{u}_k' - \bm{u}_k \right\rangle  + L_{uv} \left\| \bm{v}_k' - \bm{v}_k \right\|\left\| {\bm{u}_k' - \bm{u}_k} \right\| \notag\\
&\overset{(c)}\le \left\langle \nabla_{\bm{u}}F_k(\bm{u}_k,\bm{v}_k),\bm{u}_k' - \bm{u}_k \right\rangle  + \frac{1}{2}\chi {L_v}{\left\| \bm{v}_k' - \bm{v}_k \right\|^2} + \frac{1}{2}\chi {L_u}{\left\| \bm{u}_k' - \bm{u}_k \right\|^2},
\end{align}
where (a) is derived by adding and substracting $\nabla_{\bm{u}}F_k(\bm{u}_k,\bm{v}_k)$ into $\nabla_{\bm{u}}F_k(\bm{u}_k,\bm{v}_k')$, (b) follows the Cauchy-Schwarz inequality, (c) comes from Assumption \ref{assump:one}, (d) is due to the definition of $\chi$.
Substituting \eqref{local_Lu_inner} into \eqref{eq:local_L_temp}, the proof completes.

\subsection{Proof of Lemma \rm{\ref{lem:oneR_conver}}}\label{app:two}
According to Lemma \ref{lem:lip_F}, we have
\begin{multline}\label{eq:local_L_t}
F_k(\bm{u}_{k,t+1},\bm{v}_{k,t+1}) - F_k(\bm{u}_{k,t},\bm{v}_{k,t})
\le \left\langle \nabla_{\bm{u}} F_k(\bm{u}_{k,t},\bm{v}_{k,t}),\bm{u}_{k,t+1} - \bm{u}_{k,t} \right\rangle  + \frac{1+\chi}{2}{L_u}{\left\| {\bm{u}_{k,t+1} - \bm{u}_{k,t}} \right\|^2} \\
+ \left\langle \nabla_{\bm{v}}F_k(\bm{u}_{k,t},\bm{v}_{k,t}),\bm{v}_{k,t+1} - \bm{v}_{k,t} \right\rangle  + \frac{1+\chi}{2}{L_v}{\left\| {\bm{v}_{k,t+1} - \bm{v}_{k,t}} \right\|^2}.
\end{multline}
Below we focus on bounding the four terms on the right-hand side (RHS) of \eqref{eq:local_L_t}.
Firstly, we bound $\left\langle {\nabla_{\bm{u}}F_k(\bm{u}_{k,t},\bm{v}_{k,t}),\bm{u}_{k,t+1} - \bm{u}_{k,t}} \right\rangle$ as follows:
\begin{align}\label{eq:local_L_t_u1}
&\left\langle \nabla_{\bm{u}}F_k(\bm{u}_{k,t},\bm{v}_{k,t}),\bm{u}_{k,t+1} - \bm{u}_{k,t} \right\rangle
= - {\eta_u}\sum\limits_{l = 0}^{\tau - 1} \left\langle {\nabla_{\bm{u}}F_k(\bm{u}_{k,t},\bm{v}_{k,t}),\nabla_{\bm{u}}F_k(\bm{u}_{k,t,l},\bm{v}_{k,t,l}) + \lambda \nabla L_k(\bm{u}_{k,t,l})} \right\rangle \notag\\
&\overset{(a)}\le - \frac{\eta_u\tau}{2} \left\| \nabla_{\bm{u}}F_k(\bm{u}_{k,t},\bm{v}_{k,t}) \right\|^2
+ \frac{\eta_u}{2}\sum\nolimits_{l = 0}^{\tau-1} \left\| \nabla_{\bm{u}}F_k(\bm{u}_{k,t,l},\bm{v}_{k,t,l}) \!-\! \nabla_{\bm{u}}F_k(\bm{u}_{k,t},\bm{v}_{k,t}) + \lambda \nabla L_k(\bm{u}_{k,t,l}) \right\|^2 \notag\\
&\overset{(b)}\le - \frac{\eta_u\tau}{2} {\left\| {\nabla_{\bm{u}}F_k(\bm{u}_{k,t},\bm{v}_{k,t})} \right\|^2}
+ {\eta_u}\lambda^2\sum\nolimits_{l = 0}^{\tau - 1} {\left\| {\nabla L_k(\bm{u}_{k,t,l})} \right\|^2}  \notag\\
&~~~~~~~~~~~~~~~~~~~~~~~~~~~~~~~~~~~~~~~~+ {\eta_u}\sum\nolimits_{l = 0}^{\tau - 1} {\left\| \nabla_{\bm{u}}F_k(\bm{u}_{k,t,l},\bm{v}_{k,t,l}) - \nabla_{\bm{u}}F_k(\bm{u}_{k,t},\bm{v}_{k,t}) \right\|^2},
\end{align}
where (a) is derived by adding and substracting $\nabla_{\bm{u}}F_k(\bm{u}_{k,t},\bm{v}_{k,t})$ into $\nabla_{\bm{u}}F_k(\bm{u}_{k,t,l},\bm{v}_{k,t,l})$ and using the triangle inequality, (b) follows the triangle inequality.
For the second term on the RHS of \eqref{eq:local_L_t}, we bound $\left\| {\bm{u}_{k,t+1} - \bm{u}_{k,t}} \right\|^2$ as
\begin{align}\label{eq:local_L_t_u2}
\left\| {\bm{u}_{k,t+1} - \bm{u}_{k,t}} \right\|^2
&= \eta_u^2{\left\| {\sum\nolimits_{l = 0}^{\tau - 1}  \left( {\nabla_{\bm{u}}F_k(\bm{u}_{k,t,l},\bm{v}_{k,t,l}) + \lambda \nabla L_k(\bm{u}_{k,t,l})} \right)} \right\|^2} \notag\\
&\overset{(a)}\le \eta_u^2\tau \sum\nolimits_{l = 0}^{\tau - 1}  {\left\| {\nabla_{\bm{u}}F_k(\bm{u}_{k,t,l},\bm{v}_{k,t,l}) + \lambda \nabla L_k(\bm{u}_{k,t,l})} \right\|^2} \notag\\
&\overset{(b)}\le 2\eta_u^2\tau \sum\nolimits_{l = 0}^{\tau - 1}  {\left\| {\nabla_{\bm{u}}F_k(\bm{u}_{k,t,l},\bm{v}_{k,t,l})} \right\|^2}
+ 2\eta_u^2\tau \sum\nolimits_{l = 0}^{\tau - 1}  {\left\| {\lambda \nabla L_k(\bm{u}_{k,t,l})} \right\|^2} \notag\\
&\overset{(c)}\le 4\eta_u^2{\tau^2}{\left\| {\nabla_{\bm{u}}F_k(\bm{u}_{k,t},\bm{v}_{k,t})} \right\|^2}
+ 2\eta_u^2\tau \lambda^2\sum\nolimits_{l = 0}^{\tau - 1}  {\left\| {\nabla L_k(\bm{u}_{k,t,l})} \right\|^2} \notag\\
&+ 4 \eta_u^2\tau \sum\nolimits_{l = 0}^{\tau - 1}  {\left\| \nabla_{\bm{u}}F_k(\bm{u}_{k,t,l},\bm{v}_{k,t,l}) - \nabla_{\bm{u}}F_k(\bm{u}_{k,t},\bm{v}_{k,t}) \right\|^2},
\end{align}
where (a) is due to Jensen's inequality, (b) follows the triangle inequality, (c) is derived by adding and substracting $\nabla_{\bm{u}}F_k(\bm{u}_{k,t},\bm{v}_{k,t})$ into $\nabla_{\bm{u}}F_k(\bm{u}_{k,t,l},\bm{v}_{k,t,l})$.
We now focus on bounding $\sum\nolimits_{l = 0}^{\tau - 1} \| \nabla_{\bm{u}}F_k(\bm{u}_{k,t,l},\bm{v}_{k,t,l}) \!-\! \nabla_{\bm{u}}F_k(\bm{u}_{k,t},\bm{v}_{k,t}) \|^2$ which appears in both \eqref{eq:local_L_t_u1} and \eqref{eq:local_L_t_u2} as
\begin{align}\label{local_L_t_u_innerterm}
&\sum\nolimits_{l = 0}^{\tau - 1} {\left\| \nabla_{\bm{u}}F_k(\bm{u}_{k,t,l},\bm{v}_{k,t,l}) - \nabla_{\bm{u}}F_k(\bm{u}_{k,t},\bm{v}_{k,t}) \right\|^2} \notag\\
&\overset{(a)}\le 2\sum\limits_{l = 0}^{\tau  - 1}  \left( \left\| \nabla_{\bm{u}} F_k(\bm{u}_{k,t,l},\bm{v}_{k,t,l}) - \nabla_{\bm{u}} F_k(\bm{u}_{k,t},\bm{v}_{k,t,l}) \right\|^2 + \left\| \nabla_{\bm{u}} F_k(\bm{u}_{k,t},\bm{v}_{k,t,l}) - \nabla_{\bm{u}} F_k( \bm{u}_{k,t},\bm{v}_{k,t}) \right\|^2 \right) \notag\\
&\overset{(b)}\le 2\sum\nolimits_{l = 0}^{\tau - 1}  L_u^2{\left\| \bm{u}_{k,t,l}-\bm{u}_{k,t} \right\|^2} + 2\sum\nolimits_{l = 0}^{\tau - 1}  {\chi^2}L_uL_v{\left\| \bm{v}_{k,t,l}-\bm{v}_{k,t} \right\|^2}.
\end{align}
where (a) derived by adding and substracting $\nabla_{\bm{u}} F_k(\bm{u}_{k,t},\bm{v}_{k,t,l})$ and using the triangle inequality, (b) follows Assumption \ref{assump:one} and the definition of $\chi$.

For the last two terms on the RHS of \eqref{eq:local_L_t}, we have
\begin{align}\label{eq:local_L_t_v}
&\left\langle \nabla_{\bm{v}}F_k(\bm{u}_{k,t},\bm{v}_{k,t}),\bm{v}_{k,t+1} - \bm{v}_{k,t} \right\rangle  + \frac{1+\chi}{2}{L_v}{\left\| {\bm{v}_{k,t+1} - \bm{v}_{k,t}} \right\|^2} \notag\\[-0.2cm]
&=  - {\eta_v}\sum\nolimits_{l = 0}^{\tau - 1}  \left\langle {\nabla_{\bm{v}}F_k(\bm{u}_{k,t},\bm{v}_{k,t}),\nabla_{\bm{v}}F_k(\bm{u}_{k,t,l},\bm{v}_{k,t,l})} \right\rangle
+ \frac{1+\chi}{2}{L_v}\eta_v^2{\left\| {\sum\nolimits_{l = 0}^{\tau - 1}  \nabla_{\bm{v}}F_k(\bm{u}_{k,t,l},\bm{v}_{k,t,l})} \right\|^2} \notag\\
&\overset{(a)}\le  - {\eta_v}\sum\nolimits_{l = 0}^{\tau - 1}  \left\langle {\nabla_{\bm{v}}F_k(\bm{u}_{k,t},\bm{v}_{k,t}),\nabla_{\bm{v}}F_k(\bm{u}_{k,t,l},\bm{v}_{k,t,l})} \right\rangle
+ \frac{1+\chi}{2}{L_v}\eta_v^2\tau \sum\nolimits_{l = 0}^{\tau - 1}  {\left\| {\nabla_{\bm{v}}F_k(\bm{u}_{k,t,l},\bm{v}_{k,t,l})} \right\|^2} \notag\\
&\overset{(b)}\le ( (1+\chi)L_v\eta_v^2 \tau^2 - \frac{1}{2}\eta_v\tau){\left\| {\nabla_{\bm{v}}F_k(\bm{u}_{k,t},\bm{v}_{k,t})} \right\|^2} \notag\\
&+ ((1+\chi)L_v\eta_v^2\tau  + \frac{1}{2}\eta_v)\sum\nolimits_{l = 0}^{\tau - 1}  {\left\| \nabla_{\bm{v}}F_k(\bm{u}_{k,t,l},\bm{v}_{k,t,l}) - \nabla_{\bm{v}}F_k(\bm{u}_{k,t},\bm{v}_{k,t}) \right\|^2},
\end{align}
where (a) is due to Jensen's inequality, (b) is derived by adding and substracting $\nabla_{\bm{v}}F_k(\bm{u}_{k,t},\bm{v}_{k,t})$ into $\nabla_{\bm{v}}F_k(\bm{u}_{k,t,l},\bm{v}_{k,t,l})$ and using the triangle inequality.
In \eqref{eq:local_L_t_v}, we bound $\sum\nolimits_{l = 0}^{\tau - 1} \| - \nabla_{\bm{v}}F_k(\bm{u}_{k,t},\bm{v}_{k,t}) + \nabla_{\bm{v}}F_k(\bm{u}_{k,t,l},\bm{v}_{k,t,l}) \|^2$ as
\begin{align}\label{eq:local_L_t_v_innerterm}
&\sum\nolimits_{l = 0}^{\tau - 1}  {\left\| \nabla_{\bm{v}}F_k(\bm{u}_{k,t,l},\bm{v}_{k,t,l}) - \nabla_{\bm{v}}F_k(\bm{u}_{k,t},\bm{v}_{k,t})  \right\|^2} \notag\\
&\le 2\sum\limits_{l = 0}^{\tau - 1} \left\|\nabla_{\bm{v}}F_k(\bm{u}_{k,t,l},\bm{v}_{k,t,l}) - \nabla_{\bm{v}}F_k(\bm{u}_{k,t},\bm{v}_{k,t,l}) \right\|^2
+ 2\sum\limits_{l = 0}^{\tau - 1} \left\|\nabla_{\bm{v}}F_k(\bm{u}_{k,t},\bm{v}_{k,t,l}) - \nabla_{\bm{v}}F_k(\bm{u}_{k,t},\bm{v}_{k,t}) \right\|^2  \notag\\
&\le 2\sum\nolimits_{l = 0}^{\tau - 1} {\chi^2} L_uL_v{\left\| \bm{u}_{k,t,l}-\bm{u}_{k,t} \right\|^2} + 2\sum\nolimits_{l = 0}^{\tau - 1} L_v^2{\left\| \bm{v}_{k,t,l}-\bm{v}_{k,t} \right\|^2}.
\end{align}

Substituting \eqref{eq:local_L_t_u1}, \eqref{eq:local_L_t_u2}, \eqref{local_L_t_u_innerterm}, \eqref{eq:local_L_t_v}, and \eqref{eq:local_L_t_v_innerterm} into \eqref{eq:local_L_t}, and the learning rates satisfy $\eta_u \le \frac{1}{4\tau(1 + \chi)L_u}$ and $\eta_v \le \frac{1}{2\tau(1 + \chi)L_v}$, we have
\begin{multline}\label{eq:local_F_intermidiate}
F_k(\bm{u}_{k,t + 1},\bm{v}_{k,t + 1}) - F_k(\bm{u}_{k,t},\bm{v}_{k,t})
\le \Big{(}2(1 + \chi)L_u\eta_u^2{\tau^2} - \frac{1}{2}{\eta_u}\tau \Big{)} \left\| \nabla_{\bm{u}} F_k(\bm{u}_{k,t},\bm{v}_{k,t}) \right\|^2\\
+ \Big{(}(1 + \chi)L_v\eta_v^2{\tau^2} - \frac{1}{2}{\eta_v}\tau \Big{)}\left\| {{\nabla_{\bm{v}}}F_k(\bm{u}_{k,t},\bm{v}_{k,t})} \right\|^2
+ (3\eta_uL_u^2 + 2\eta_v \chi^2 L_uL_v) \sum\nolimits_{l = 0}^{\tau - 1} {\left\| \bm{u}_{k,t,l} - \bm{u}_{k,t} \right\|^2} \\
+ (3\eta_u\chi^2 L_uL_v + 2{\eta_v}L_v^2)\sum\nolimits_{l = 0}^{\tau - 1}  {\left\| \bm{v}_{k,t,l} - \bm{v}_{k,t} \right\|^2}
+ \frac{5}{4}{\eta_u}\lambda^2\sum\nolimits_{l = 0}^{\tau - 1} {\left\| {\nabla {L_k}(\bm{u}_{k,t,l})} \right\|^2}.
\end{multline}

Below we focus on bounding two terms in \eqref{eq:local_F_intermidiate}, i.e., $\sum\nolimits_{l = 0}^{\tau - 1} {\left\| \bm{v}_{k,t,l} - \bm{v}_{k,t} \right\|^2}$ and $\sum\nolimits_{l = 0}^{\tau - 1}  {\left\| \bm{u}_{k,t,l} - \bm{u}_{k,t} \right\|^2}$. Firstly, for $\sum\nolimits_{l = 0}^{\tau - 1} {\left\| \bm{v}_{k,t,l} - \bm{v}_{k,t} \right\|^2}$, we have
\begin{multline}\label{eq:local_u_oneR}
\sum\nolimits_{l = 0}^{\tau - 1}  {\left\| \bm{v}_{k,t,l} - \bm{v}_{k,t} \right\|^2} = \sum\nolimits_{l = 0}^{\tau - 1} \eta_v^2 \left\| \sum\nolimits_{n = 0}^{l - 1}  \nabla_{\bm{v}}F_k(\bm{u}_{k,t,n},\bm{v}_{k,t,n}) \right\|^2 \\
\overset{(a)}\le \sum\nolimits_{l = 0}^{\tau - 1} \eta_v^2l\sum\nolimits_{n = 0}^{l - 1} {\left\| \nabla_{\bm{v}}F_k(\bm{u}_{k,t,n},\bm{v}_{k,t,n}) \right\|^2}
\overset{(b)}\le \eta_v^2G_2^2\frac{\tau (\tau  + 1)(2\tau + 1)}{6},
\end{multline}
where (a) comes from the Jensen's inequality, (b) follows the bounded gradient assumption in Assumption \ref{assump:two}.
For $\sum\nolimits_{l = 0}^{\tau - 1}  {\left\| \bm{u}_{k,t,l} - \bm{u}_{k,t} \right\|^2}$, we have
\begin{multline}\label{eq:local_v_oneR}
\sum\nolimits_{l = 0}^{\tau - 1} {\left\| \bm{u}_{k,t,l} - \bm{u}_{k,t} \right\|^2}
= \sum\nolimits_{l = 0}^{\tau - 1} \eta_u^2{\left\| \sum\nolimits_{n = 0}^{l - 1} \left(\nabla_{\bm{u}}F_k(\bm{u}_{k,t,n},\bm{v}_{k,t,n}) + \lambda \nabla{L_k}(\bm{u}_{k,t,n}) \right) \right\|^2} \\
\overset{(a)}\le \sum\nolimits_{l = 0}^{\tau - 1} \eta_u^2l\sum\nolimits_{n = 0}^{l - 1} {\left\|{\nabla_{\bm{u}}}F_k(\bm{u}_{k,t,n},\bm{v}_{k,t,n}) + \lambda \nabla{L_k}(\bm{u}_{k,t,n}) \right\|^2} \\
\overset{(b)}\le \sum\nolimits_{l = 0}^{\tau - 1} \eta_u^2l\sum\nolimits_{n = 0}^{l - 1} 2{\left\| {\nabla_{\bm{u}}}F_k(\bm{u}_{k,t,n},\bm{v}_{k,t,n}) \right\|^2}  + \sum\nolimits_{l = 0}^{\tau - 1} \eta_u^2l\sum\nolimits_{n = 0}^{l - 1} 2\left\| {\lambda \nabla{L_k}(\bm{u}_{k,t,n})} \right\|^2 \\
\overset{(c)}\le \frac{\tau (\tau + 1)(2\tau + 1)}{3}\eta_u^2G_1^2 + 2\eta_u^2\lambda^2\sum\nolimits_{l = 0}^{\tau - 1} l\sum\nolimits_{n = 0}^{l - 1} {\left\|\nabla{L_k}(\bm{u}_{k,t,n}) \right\|^2},
\end{multline}
where (a) is due to the Jensen's inequality, (b) follows the triangle inequlity, (c) is due to Assumption \ref{assump:two}.
Substituting \eqref{eq:local_u_oneR} and \eqref{eq:local_v_oneR} into \eqref{eq:local_F_intermidiate}, the proof is completed.

\subsection{Proof of Theorem \rm{\ref{thm:one}} }\label{app:three}
By substituting \eqref{eq:lemma_eq} into \eqref{eq:glo_EM_loss}, we have the one-round convergence bounded of the global loss as follows:
\begin{multline}\label{eq:diff_F_oneR}
F(\bm{W}_{t + 1}) - F(\bm{W}_t)
\le \sum\nolimits_{k = 1}^K \frac{D_k}{D}(2(1 + \chi)L_u\eta_u^2 \tau^2 - \frac{1}{2}\eta_u\tau) \left\| \nabla_{\bm{u}} F_k(\bm{u}_{k,t},\bm{v}_{k,t}) \right\|^2 \\
+ \sum\limits_{k = 1}^K \frac{D_k}{D}((1 + \chi)L_v\eta_v^2{\tau^2} - \frac{1}{2}\eta_v\tau)\left\|\nabla_v F_k(\bm{u}_{k,t},\bm{v}_{k,t}) \right\|^2
+ \frac{5}{4}\eta_u \lambda^2\sum\limits_{k = 1}^K \frac{D_k}{D}\sum\limits_{l = 0}^{\tau - 1} \left\| \nabla {L_k}(\bm{u}_{k,t,l}) \right\|^2 \\
+ A_1
+ 2\eta_u^2 \lambda^2(3\eta_uL_u^2 + 2\eta_v\chi^2 L_uL_v)\sum\nolimits_{k = 1}^K \frac{D_k}{D}\sum\nolimits_{l = 0}^{\tau - 1} (\tau - l)\left\| \nabla L_k(\bm{u}_{k,t,l}) \right\|^2,
\end{multline}
Below we bound $\left\| \nabla {L_k}(\bm{u}_{k,t,l}) \right\|^2$. For ease of proof, we introduce an auxiliary variable ${\bar{\bm{\Omega}}}_{c,t}= \frac{\sum\nolimits_{k \in \mathcal{K}} D_{k,c} \bm{\Omega}_{k,c,t}}{\sum\nolimits_{k \in \mathcal{K}} D_{k,c}}$, which aggregates all devices's knowledge about class $c$ ($\forall c \in \mathcal{C}$).
\begin{multline}\label{eq:grad_prop_loss}
\left\| \nabla {L_k}(\bm{u}_{k,t,l}) \right\|^2 = \left\| \frac{1}{D_k}\sum\nolimits_{c = 1}^C \sum\nolimits_{(\bm{x},y) \in \mathcal{D}_{k,c}} \left\| h_k(\bm{u}_{k,t,l};\bm{x}) - \bm{\Omega}_{c,t} \right\|\nabla h_k(\bm{u}_{k,t,l};\bm{x}) \right\|^2 \\
\overset{(a)}\le \frac{1}{D_k^2}C\sum\nolimits_{c = 1}^C D_{k,c}\sum\nolimits_{(\bm{x},y) \in \mathcal{D}_{k,c}} {\left\| h_k(\bm{u}_{k,t,l};x) - \bm{\Omega}_{c,t} \right\|^2}{\left\| \nabla h_k(\bm{u}_{k,t,l};\bm{x}) \right\|^2} \\
\overset{(b)}\le \frac{1}{D_k^2}C\sum\nolimits_{c = 1}^C D_{k,c}\sum\nolimits_{(\bm{x},y) \in \mathcal{D}_{k,c}} \left\| h_k(\bm{u}_{k,t,l};\bm{x}) - \bm{\Omega}_{c,t} \right\|^2 \vartheta^2 \\
\overset{(c)}\le 2\frac{1}{D_k^2}\vartheta^2C\sum\limits_{c = 1}^C {D_{k,c}}\sum\limits_{(\bm{x},y) \in \mathcal{D}_{k,c}}  {\left\| h_k(\bm{u}_{k,t,l};\bm{x}) - {\bar{\bm{\Omega}}}_{c,t} \right\|^2}
+ 2\frac{1}{D_k^2}\vartheta^2C\sum\limits_{c = 1}^C D_{k,c}^2  \left\| {\bar{\bm{\Omega}}}_{c,t} - \bm{\Omega}_{c,t} \right\|^2,
\end{multline}
where (a) follows Jensen's inequality, (b) is due to Assumption \ref{assump:three}, (c) derived by adding and substracting ${\bar{\bm{\Omega}}}_{c,t}$ into $\bm{\Omega}_{c,t}$ and using the triangle inequality.

Below we focus on bounding the two terms on the RHS of \eqref{eq:grad_prop_loss}, where the first term is bounded as
\begin{align}\label{eq:diff_local_fullProp}
&2\frac{1}{{D_k^2}}\vartheta^2C\sum\nolimits_{c = 1}^C D_{k,c}\sum\nolimits_{(\bm{x},y) \in \mathcal{D}_{k,c}} \left\| h_k(\bm{u}_{k,t,l};\bm{x}) - {\bar{\bm{\Omega}}}_{c,t} \right\|^2 \notag\\
&= 2\frac{1}{D_k^2}\vartheta^2C\sum\nolimits_{c = 1}^C  D_{k,c}\sum\nolimits_{(\bm{x},y) \in \mathcal{D}_{k,c}} \Big\| \frac{1}{D_c}\sum\nolimits_{h = 1}^K \sum\nolimits_{(\bm{x}_1,y_1) \in \mathcal{D}_{n,c}} \left(h_k(\bm{u}_{k,t,l};\bm{x}) - h_n(\bm{u}_{h,t};\bm{x}_1) \right) \Big\|^2 \notag\\
&\le 8\vartheta^2 \varsigma^2,
\end{align}
where the inequality is due to Jensen's inequality and Assumption \ref{assump:three}.
For the second term on the RHS of \eqref{eq:grad_prop_loss}, we have
\begin{align}\label{eq:diff_full_part_prop}
&\left\| \bar{\bm{\Omega}}_{c,t} - \bm{\Omega}_{c,t} \right\|^2
= \Big\| \frac{\sum\nolimits_{k = 1}^K \sum\nolimits_{(\bm{x}_1,y_1) \in \mathcal{D}_{k,c}} h_k(\bm{u}_{k,t};\bm{x}_1)}{D_c} - \frac{\sum\nolimits_{k = 1}^K \alpha _{k,t-1}\sum\nolimits_{(\bm{x}_1,y_1) \in \mathcal{D}_{k,c}} h_k(\bm{u}_{k,t};\bm{x}_1)}{\sum\nolimits_{k = 1}^K  \alpha _{k,t-1}D_{k,c}} \Big\|^2 \notag\\
&= \Bigg\| \frac{\sum\nolimits_{k = 1}^K \! (1 \!-\! \alpha_{k,t-1})\!\!\!\!\!\sum\limits_{(\bm{x}_1,y_1) \in \mathcal{D}_{k,c}} \!\!\!\!\!\!h_k(\bm{u}_{k,t};\bm{x}_1)}{D_c} \!\!-\! \frac{(D_c \!-\!\! \sum\nolimits_{k = 1}^K \! \alpha_{k,t-1}D_{k,c})\sum\nolimits_{k = 1}^K \! \alpha_{k,t-1}~\sum\limits_{\mathclap{(\bm{x}_1,y_1) \in \mathcal{D}_{k,c}}}~ h_k(\bm{u}_{k,t};\bm{x}_1)}{D_c\sum\nolimits_{k = 1}^K \alpha_{k,t-1}D_{k,c}} \Bigg\|^2 \notag\\
&\le \Bigg{(}\frac{\sum\nolimits_{k = 1}^K \! (1 \!-\! \alpha_{k,t-1}) \sum\limits_{\mathclap{(\bm{x}_1,y_1) \in \mathcal{D}_{k,c}}}  \left\| h_k(\bm{u}_{k,t};\bm{x}_1) \right\| }{D_c} \!+\! \frac{(D_c \!-\! \! \sum\nolimits_{k = 1}^K \! \alpha_{k,t - 1}D_{k,c})\sum\nolimits_{k = 1}^K \!\! \alpha_{k,t - 1}\sum\limits_{\mathclap{(\bm{x}_1,y_1) \in \mathcal{D}_{k,c}}} \left\| h_k(\bm{u}_{k,t};\bm{x}_1) \right\|}{D_c\sum\nolimits_{k = 1}^K \alpha_{k,t-1}\mathcal{D}_{k,c}} \Bigg{)}^2 \notag\\
&\overset{(a)}\le 4\varsigma^2\Big{(} \frac{D_c - \sum\nolimits_{k = 1}^K \alpha_{k,t - 1}D_{k,c}}{D_c} \Big{)}^2,
\end{align}
where (a) is due to Assumption \ref{assump:three}.
Substituting \eqref{eq:grad_prop_loss}, \eqref{eq:diff_local_fullProp}, and \eqref{eq:diff_full_part_prop} into \eqref{eq:diff_F_oneR}, then substracting $F(\bm{W}^*)$ into both $F(\bm{W}_{t + 1})$ and $F(\bm{W}_t)$, we have
\begin{multline}\label{eq:global_op_diff1}
F(\bm{W}_{t + 1}) \!-\! F(\bm{W}^*)
\le F(\bm{W}_t) \!-\! F(\bm{W}^*) \!+\! \Big{(}2(1 + \chi)L_u\eta_u^2{\tau^2} - \frac{1}{2}{\eta_u}\tau  \Big{)} \sum\limits_{k = 1}^K \frac{D_k}{D}{\left\| {{\nabla_{\bm{u}}}F_k(\bm{u}_{k,t},\bm{v}_{k,t})} \right\|^2} \\
+ \Big{(}(1 + \chi)L_v\eta_v^2{\tau ^2} - \frac{1}{2}{\eta_v}\tau \Big{)}\sum\nolimits_{k = 1}^K \frac{D_k}{D}{\left\| {{\nabla_{\bm{v}}}F_k(\bm{u}_{k,t},\bm{v}_{k,t})} \right\|^2} \\
+A_1 + A_2
+ A_2 \sum\nolimits_{k = 1}^K \frac{1}{D}\frac{1}{D_k}C\sum\nolimits_{c = 1}^C  D_{k,c}^2 \Big{(}\frac{D_c - \sum\nolimits_{k = 1}^K \alpha _{k,t - 1}D_{k,c}}{D_c} \Big{)}^2,
\end{multline}
where $A_2 = 10\eta_u\lambda^2\tau \vartheta^2\varsigma^2 + 8\eta_u^2\lambda^2 \vartheta^2\varsigma^2\left( 3\eta_uL_u^2 + 2\eta_v\chi^2L_uL_v \right)\tau (\tau + 1)$.

By using the L-smooth of loss functions, we have
\begin{align}\label{eq:L_S_u_op}
\left\| \nabla_{\bm{u}} F_k(\bm{u}_{k,t},\bm{v}_{k,t})\right\|^2 \le 2L_u\left( F_k(\bm{u}_{k,t},\bm{v}_{k,t}) - F_k(\bm{u}_k^*,\bm{v}_k^*) \right),
\end{align}
and
\begin{align}\label{eq:L_S_v_op}
\left\| \nabla_{\bm{v}}F_k(\bm{u}_{k,t},\bm{v}_{k,t}) \right\|^2 \le 2L_v\left(F_k(\bm{u}_{k,t},\bm{v}_{k,t}) - F_k(\bm{u}_k^*,\bm{v}_k^*) \right).
\end{align}
Substituting \eqref{eq:L_S_u_op} and \eqref{eq:L_S_v_op} into \eqref{eq:global_op_diff1}, we have
\begin{multline}
F(\bm{W}_{t + 1}) - F(\bm{W}^*)
\le A_3(F(\bm{W}_t) - F(\bm{W}^*))
+ A_1
+ A_2 \\
 + A_2 \sum\nolimits_{k = 1}^K \frac{1}{D}\frac{1}{D_k}C\sum\nolimits_{c = 1}^C D_{k,c}^2 \Big{(}\frac{D_c - \sum\nolimits_{k = 1}^K \alpha_{k,t - 1}D_{k,c}}{D_c} \Big{)}^2,
\end{multline}
where $A_3 = 1 + (4L_u^2\eta_u^2 + 2L_v^2\eta_v^2)(1 + \chi)\tau^2 - (\eta_u L_u+\eta_v L_v)\tau $.
By telescoping the above inequality, we have
\begin{multline}\label{eq:oneR_temp}
F(\bm{W}_T) - F(\bm{W}^*) \le {A_3^T}( F(\bm{W}_0) - F(\bm{W}^*) ) + \frac{1 - A_3^T}{1 - A_3}(A_1 + A_2) \\
+ A_2\sum\nolimits_{t = 1}^{T - 1} A_3^{T - 1 - t}\sum\nolimits_{k = 1}^K \frac{1}{D}\frac{1}{D_k}C\sum\nolimits_{c = 1}^C \frac{D_{k,c}^2}{D_c^2} \Big{(} D_c - \sum\nolimits_{k = 1}^K \alpha_{k,t - 1}D_{k,c} \Big{)}^2.
\end{multline}
Below we bounding the last term on the RHS of \eqref{eq:oneR_temp} as
\begin{multline}\label{eq:oneR_im_t1}
A_2\sum\nolimits_{t = 1}^{T - 1} {A_3^{T - 1 - t}}\sum\nolimits_{k = 1}^K \frac{1}{D}\frac{1}{D_k}C\sum\nolimits_{c = 1}^C \frac{D_{k,c}^2}{D_c^2} \Big{(} D_c - \sum\nolimits_{k = 1}^K \alpha_{k,t-1}D_{k,c} \Big{)}^2 \\
= A_2\sum\nolimits_{t = 0}^{T - 2} {} A_3^{T - 2 - t}\sum\nolimits_{k = 1}^K \frac{1}{D}\frac{1}{D_k}C\sum\nolimits_{c = 1}^C \frac{D_{k,c}^2}{D_c^2}{\left(D_c - \sum\nolimits_{k = 1}^K \alpha_{k,t}D_{k,c} \right)^2} \\
\overset{(a)}\le \frac{A_2 KC}{D}\sum\nolimits_{t = 0}^{T - 2} A_3^{T - 2 - t}\sum\nolimits_{k = 1}^K \sum\nolimits_{c = 1}^C \frac{D_{k,c}^2}{D_k D_c^2}\sum\nolimits_{k = 1}^K (1 - \alpha_{k,t})D_{k,c}^2 \\
\!= \!\frac{A_2 CK}{D}\!\sum\limits_{t = 0}^{T - 2} \!A_3^{T - 2 - t}\!\sum\limits_{k = 1}^K \!\sum\limits_{c = 1}^C \!\frac{D_{k,c}^2}{D_k D_c^2}\!\sum\limits_{k = 1}^K \!D_{k,c}^2
- \frac{A_2 K}{D}\!\sum\limits_{t = 0}^{T - 2} \!A_3^{T - 2 - t}\!\sum\limits_{k = 1}^K \!C\!\sum\limits_{c = 1}^C \!\frac{D_{k,c}^2}{D_k D_c^2}\!\sum\limits_{k = 1}^K\! {\alpha _{k,t}}D_{k,c}^2,
\end{multline}
where (a) is due to Jensen's inequality and $(1 - \alpha_{k,t})^2 = 1 - \alpha_{k,t}$.
For the last term on the RHS of \eqref{eq:oneR_im_t1}, we have
\begin{multline}\label{eq:oneR_im_t2}
\frac{A_2K}{D}\sum\nolimits_{t = 0}^{T - 2} A_3^{T - 2 - t}\sum\nolimits_{k = 1}^K C\sum\nolimits_{c = 1}^C \frac{D_{k,c}^2}{D_k D_c^2}\sum\nolimits_{k = 1}^K {\alpha _{k,t}}D_{k,c}^2 \\
\overset{(a)}\ge A_2\frac{1}{D K (T - 1)}\left(\sum\nolimits_{t = 0}^{T - 2} A_3^{T - 2 - t}\sum\nolimits_{c = 1}^C \sum\nolimits_{k = 1}^K \frac{D_{k,c}}{\sqrt {D_k} D_c}\sum\nolimits_{k = 1}^K {\alpha _{k,t}}D_{k,c} \right)^2  \\
\ge A_2\frac{1}{D K (T - 1)}\frac{1}{\max_{1\le k \le K} D_k}\left(\sum\nolimits_{t = 0}^{T - 2} A_3^{T - 2 - t}\sum\nolimits_{k = 1}^K {\alpha _{k,t}}\sum\nolimits_{c = 1}^C D_{k,c} \right)^2 \\
= A_2\frac{1}{D K (T - 1)}\frac{1}{\max_{1\le k \le K} D_k}{\left(\sum\nolimits_{t = 0}^{T - 2} A_3^{T - 2 - t}\sum\nolimits_{k = 1}^K {\alpha_{k,t}}{D_k} \right)^2},
\end{multline}
where (a) follows Jensen's inequality.
Substituting \eqref{eq:oneR_im_t1} and \eqref{eq:oneR_im_t2} into \eqref{eq:oneR_temp}, the proof is completed.

\subsection{Proof of Proposition \rm{\ref{prop:ly_performance}} }\label{app:five}
For the ease of presentation, we define the Lyapunov function as $\mathcal{V}(t) = \sum\nolimits_{k = 1}^K {\frac{1}{2}} q_{k,t}^2$, the Lyapunov drift of round $t$ as ${\Delta_1}(t) = \mathcal{V}(t + 1) - \mathcal{V}(t)$. According to the evolution of the virtual queue defined in \eqref{eq:lyapunov_queue}, we have $q_{k,t + 1}^2 \le {\left( {{q_{k,t}} + \alpha_{k,t}E_{k,t} - \frac{E_k}{T}} \right)^2}$. For ${\Delta_1}(t)$, we have
\begin{multline}\label{eq:delta_1}
\Delta_1(t) = \frac{1}{2} \sum\nolimits_{k = 1}^K (q_{k,t + 1}^2 - q_{k,t}^2)
\le \sum\nolimits_{k = 1}^K  \Big( \frac{1}{2} ( q_{k,t} + \alpha_{k,t}E_{k,t} - \frac{E_k}{T} )^2 - \frac{1}{2}q_{k,t}^2 \Big) \\[-0.2cm]
\le \zeta_0 + \sum\nolimits_{k = 1}^K  q_{k,t} \Big{(} \alpha_{k,t} E_{k,t} - \frac{E_k}{T}\Big{)},
\end{multline}
where $\zeta_0 = \frac{1}{2}\sum\nolimits_{k = 1}^K \zeta_k^2$, ${\zeta_k} = {\max_t}\left\{ \left| \alpha_{k,t}E_{k,t} - \frac{E_k}{T} \right| \right\}$.
By adding $- V{\gamma_t}\sum\nolimits_{k = 1}^K  \alpha_{k,t}D_k$ on both sides of \eqref{eq:delta_1}, an upper bound of the one-round drift-plus-penalty function is given by
\begin{align}
\Delta_1(t) - V{\gamma_t}\sum\nolimits_{k = 1}^K  \alpha_{k,t}D_k \le \zeta_0 + \sum\nolimits_{k = 1}^K  {q_{k,t}}(\alpha_{k,t}E_{k,t} - \frac{E_k}{T}) - V{\gamma_t}\sum\nolimits_{k = 1}^K  \alpha_{k,t}D_k.
\end{align}
The drift-plus-penalty algorithm of Lyapunov optimization aims to minimize the upper bound of $\Delta_1(t) - {\gamma_t}V\sum\nolimits_{k = 1}^K  \alpha_{k,t}D_k$.
Define the $T$-round drift as $\Delta_T = \mathcal{V}(T - 1) - \mathcal{V}(0) = \sum\nolimits_{k = 1}^K \frac{1}{2}q_{k,T - 1}^2$. Then, the $T$-round drift-plus-penalty function can be bounded by
\begin{align}\label{eq:delta_T}
\Delta_T - V\sum\limits_{t = 0}^{T - 1}  \gamma_t\sum\limits_{k = 1}^K  \alpha_{k,t}D_k
\le T\zeta_0 + \sum\limits_{t = 0}^{T - 1}  \Big( \sum\limits_{k = 1}^K  q_{k,t}(\alpha_{k,t}E_{k,t} - \frac{E_k}{T}) - V\gamma_t\sum\limits_{k = 1}^K \alpha_{k,t}D_k \Big).
\end{align}
Based on the above analysis, we first prove the feasibility of the proposed algorithm. We use superscript * to denote the optimal offline solution of problem $\widetilde{\mathcal{P}}$, superscript $\dagger$ to represent the solution of the proposed drift-plus-penalty algorithm. For a feasible solution with ${\alpha_{k,t}} = 0$ and ${E_{k,t}} = 0$, we have
\begin{align}
\Delta_T = \sum\nolimits_{k = 1}^K  \frac{1}{2}q_{k,T - 1}^2 \le T\zeta_0 + V\sum\nolimits_{t = 0}^{T - 1} \gamma_t D.
\end{align}
Thus, we have
\begin{align}
\left( \sum\nolimits_{k = 1}^K  q_{k,T - 1} \right)^2 \le K\sum\nolimits_{k = 1}^K  q_{k,T - 1}^2 \le 2K\left( T\zeta_0 + V\sum\nolimits_{t = 0}^{T - 1} \gamma_t D \right),
\end{align}
where the first inequation comes from Jensen's inequality. According to the evolution of the virtual queue defined in \eqref{eq:lyapunov_queue}, we have $\alpha_{k,t}E_{k,t} - \frac{E_k}{T} \le {q_{k,t + 1}} - {q_{k,t}}$, summing this inequation over $T$ rounds, we have
\begin{align}
\sum\nolimits_{t = 0}^{T - 1} \! \sum\nolimits_{k = 1}^K \! \Big{(}\alpha_{k,t}E_{k,t} - \frac{E_k}{T} \Big{)}
\le \sum\limits_{t = 0}^{T - 1}  \sum\limits_{k = 1}^K ( q_{k,t + 1} - q_{k,t})
 \le \sqrt {2K\left( T\zeta_0 + V\sum\nolimits_{t = 0}^{T - 1}  \gamma_t D \right)}.
\end{align}
By rearranging the above inequation, the energy consumption bound in \eqref{eq:energy_bound} is derived. Below we analyze the optimality of the proposed drift-plus-penalty algorithm, which minimize the RHS in \eqref{eq:delta_T}. Since $\Delta_T$ is positive, based on \eqref{eq:delta_T}, we have
\begin{align}\label{eq:data_B_temp}
- V\sum\limits_{t = 0}^{T - 1}  {\gamma_t}\sum\limits_{k = 1}^K  \alpha_{k,t}^{\dagger} {D_{k,c}}
\le T\zeta_0 + \sum\limits_{t = 0}^{T - 1}  \sum\nolimits_{k = 1}^K  q_{k,t}\Big{(} \alpha_{k,t}^* E_{k,t} - \frac{E_k^*}{T} \Big{)} - V\sum\limits_{t = 0}^{T - 1} \gamma_t\sum\limits_{k = 1}^K \alpha_{k,t}^* D_{k,c},
\end{align}
Next, we bound the second term in the RHS of \eqref{eq:data_B_temp} as
\begin{multline}\label{eq:term_bnound}
\sum\limits_{t = 0}^{T - 1} \! \sum\limits_{k = 1}^K \!  q_{k,t} \! \Big{(} \alpha_{k,t}^*{E_{k,t}} \!-\! \frac{E_k}{T} \Big{)}
\!\!\!= \!\sum\limits_{t = 0}^{T - 1} \! \sum\limits_{k = 1}^K \!(q_{k,t} \!-\! q_{k,0})\Big{(} \alpha_{k,t}^* E_{k,t} \!-\! \frac{E_k}{T} \Big{)}
\!\!\le\! \frac{T(T - 1)}{2}\!\sum\limits_{k = 1}^K \! \zeta_k^2.
\end{multline}
Substituting \eqref{eq:term_bnound} into \eqref{eq:data_B_temp}, the inequation \eqref{eq:datavolume_bound} is derived, and the proof is completed.

\bibliographystyle{IEEEtran}
\bibliography{IEEEabrv,cited}

\end{titlepage}

\clearpage
\begin{titlepage}
\title{\Large{The Proofs and Additional Experimental Results in the Paper Titled ``Knowledge-aided Federated Learning for Energy-limited Wireless Networks"}\\
}
\author{Zhixiong~Chen,~\IEEEmembership{Student Member,~IEEE},
Wenqiang~Yi,~\IEEEmembership{Member,~IEEE},\\
Yuanwei~Liu,~\IEEEmembership{Senior Member,~IEEE},
and Arumugam~Nallanathan,~\IEEEmembership{Fellow,~IEEE}
\thanks{Zhixiong Chen, Wenqiang Yi, Yuanwei Liu, and Arumugam Nallanathan are with the School of Electronic Engineering and Computer Science, Queen Mary University of London, London, U.K. (emails: \{zhixiong.chen, w.yi, yuanwei.liu, a.nallanathan\}@qmul.ac.uk).}
}

\maketitle
\begin{abstract}
The conventional model/gradient aggregation-based federated learning (FL) approaches require all local models to be of the same architecture and thus may be inapplicable for many practical scenarios. Moreover, the frequent model/gradient exchange is costly for resource-limited wireless networks since modern deep neural networks usually have over-million parameters.
To tackle these challenges, we first devise a novel FL framework that aggregates light high-level data features, namely knowledge, in the per-round learning process. This design allows devices to design their machine models independently and remarkably reduces the communication overhead in the training process.
We then theoretically analyze the convergence bound of the framework under a non-convex loss function setting, revealing that scheduling more data volumes in each round helps improve the learning performance.  In addition, more scheduled data volumes should be biased towards the early rounds if the total data volumes during the entire learning course are fixed.
Inspired by this, we formulate an optimization problem to maximize the weighted scheduled data volumes for global loss minimization under the energy constraints of devices through device scheduling, bandwidth allocation and power control.
This paper provides the proof and additional experimental results of the journal version, namely ``Knowledge-aided Federated Learning for Energy-limited Wireless Networks". This paper
provides the proofs of Proposition 1, Lemma 3, and additional experimental results based on another heterogeneous data distribution setting. The other proposition, lemmas, and experimental results have been provided in the journal version or similar to the provided proofs.
\end{abstract}

\begin{IEEEkeywords}
Device scheduling, federated Learning, Lyapunov optimization, resource allocation
\end{IEEEkeywords}

\clearpage

\section{Proof A: The Proof of Proposition 1}
\renewcommand{\theequation}{A.\arabic{equation}}
\setcounter{equation}{0}
The first-order and second-order derivatives of the objective function (23) with respect to $\mathcal{T}_{k,t}^{\rm{U}}$ are
\begin{align}
\frac{\partial \left( \sum\nolimits_{k \in \bm{S}_t}  q_k(t)E_{k,t} \right)}{\partial \mathcal{T}_{k,t}^{\rm{U}}} = q_k(t)\frac{\theta_{k,t}B N_0 \mathcal{T}_{k,t}^{\rm{U}} - N_0 Qq\ln 2}{h_{k,t}\mathcal{T}_{k,t}^{\rm{U}}}{2^{\frac{Qq}{\theta_{k,t}B\mathcal{T}_{k,t}^{\rm{U}}}}} - \frac{q_k(t)\theta_{k,t}B N_0}{h_{k,t}},
\end{align}
and
\begin{align}
\frac{{{\partial ^2}\left(\sum\nolimits_{k \in \bm{S}_t}  q_k(t)E_{k,t} \right)}}{\partial (\mathcal{T}_{k,t}^{\rm{U}})^2} = \frac{q_k(t)Q^2 q^2 N_0{(\ln 2)^2}}{\theta_{k,t}Bh_{k,t}{\left( {\mathcal{T}_{k,t}^{\rm{U}}} \right)^3}}{2^{\frac{Qq}{\theta_{k,t}B\mathcal{T}_{k,t}^{\rm{U}}}}} \ge 0.
\end{align}
Thus, $\frac{\partial \left( \sum\nolimits_{k \in \bm{S}_t}  q_k(t)E_{k,t} \right)}{\partial \mathcal{T}_{k,t}^{\rm{U}}}$ is an increasing function with respect to $\mathcal{T}_{k,t}^{\rm{U}}$. Since $\mathop {\lim}\nolimits_{\mathcal{T}_{k,t}^{\rm{U}} \to \infty } \frac{dE_{k,t}^{\rm{U}}}{d\mathcal{T}_{k,t}^{\rm{U}}} = 0$, we have $\frac{\partial \left( \sum\nolimits_{k \in \bm{S}_t}  q_k(t)E_{k,t} \right)}{\partial \mathcal{T}_{k,t}^{\rm{U}}} \le 0$. That is, the objective function (23) is an non-increasing function with respect to the communication time $\mathcal{T}_{k,t}^{\rm{U}}$. The optimal completion time of device $k$ is $\mathcal{T}_{k,t}^{\rm{U}} = \mathcal{T}_{\max}-\mathcal{T}_k^{\rm{L}}$. Thus, the optimal transmit power of device $k$ satisfy (24).

\section{Proof B: Proof of Lemma 3 }
\renewcommand{\theequation}{B.\arabic{equation}}
\setcounter{equation}{0}
Problem $\mathcal{P}_2$ is a typical convex optimization problem, its proof is similar to the proof of Proposition 1, and thus omitted for brevity. By using KKT conditions, the Lagrange function of problem $\mathcal{P}_2$ is
\begin{align}
\mathcal{L}(\theta_t,\mu) = \sum\limits_{k \in \bm{S}_t}  q_k(t)\frac{\theta_{k,t}BN_0(\mathcal{T}_{\max} - \mathcal{T}_k^{\rm{L}})}{h_{k,t}} \mathcal{I}(\theta_{k,t}) + \mu \left( \sum\nolimits_{k = 1}^K \theta_{k,t} - 1 \right),
\end{align}
where $\mu$ is the Lagrange multiplier associated with constraint (14c). The first-order derivative of $\mathcal{L}(\theta_t,\mu)$ is
\begin{align}
\frac{\partial \mathcal{L}(\theta_t,\mu)}{\partial \theta_{k,t}} = \frac{B N_0 q_k(t)(\mathcal{T}_{\max} - \mathcal{T}_k^{\rm{L}})}{h_{k,t}}\left(\mathcal{I}(\theta_{k,t}) + \theta_{k,t}\mathcal{I}'(\theta_{k,t}) \right) + \mu.
\end{align}
Let $\frac{\partial \mathcal{L}(\theta_t,\mu)}{\partial \theta_{k,t}} = 0$, we have
\begin{align}
\mathcal{I}(\theta_{k,t}) + \theta_{k,t}\mathcal{I}'(\theta_{k,t}) = \frac{-\mu h_{k,t}}{{B N_0q_k(t)(\mathcal{T}_{\max} - \mathcal{T}_k^{\rm{L}})}}.
\end{align}
Its inverse function is shown to be (28). Given constraint (25a), the optimal bandwidth allocation policy is given as (27). In addition, similar to the proof of Proposition 1, one can prove that the objective function (25) is a decreasing function of $\theta_{k,t}$. Thus, $\sum\nolimits_{k = 1}^K \theta_{k,t}^* = 1$ always holds for the optimal solution.

\section{Additional Numerical Results}
In this section, we present the additional experiments based on the data heterogeneity setting of $m=3$, which shows a similar result to the setting of $m=2$.

Fig. \ref{fig:acc_sameM_S3} shows the learning performance of the proposed FL algorithm and two benchmarks on MNIST and CIFAR-10 datasets, where all the devices are equipped with homogeneous machine learning models.
Fig. \ref{fig:mnist_sameM_S3} presents the test accuracy on MNIST dataset. Compared to the baselines, the proposed algorithm achieves a 1.54\% accuracy improvement when 50 devices participate in each round learning process and obtains a 1.28\% accuracy gain when scheduling 10 devices in each round.
Fig. \ref{fig:cifar10_sameM_S3} presents the learning performance of these algorithms on the CIFAR-10 dataset, which also indicates that the proposed algorithm outperforms the benchmarks.
Fig. \ref{fig:schedule_ORsameM_S3} verifies the correctness of Remark 1, indicating that more scheduled data samples should be biased to the earlier rounds when the total scheduled data volumes in the entire learning course are fixed.
\begin{figure*}[ht]
\centering
\subfigure[]{\label{fig:mnist_sameM_S3}
\includegraphics[width=0.32\linewidth]{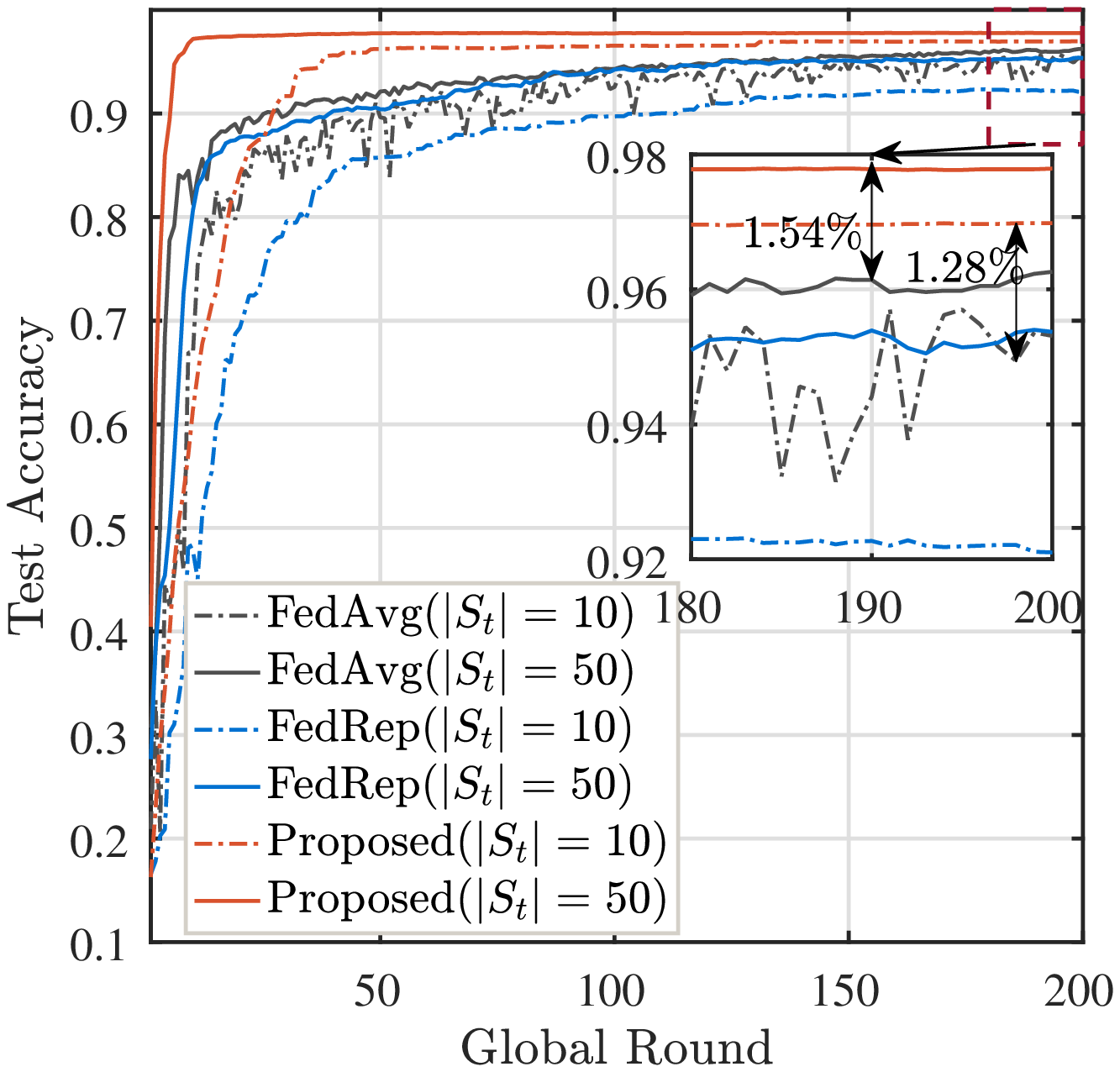}}
\hspace{-0.02\linewidth}
\subfigure[]{\label{fig:cifar10_sameM_S3}
\includegraphics[width=0.32\linewidth]{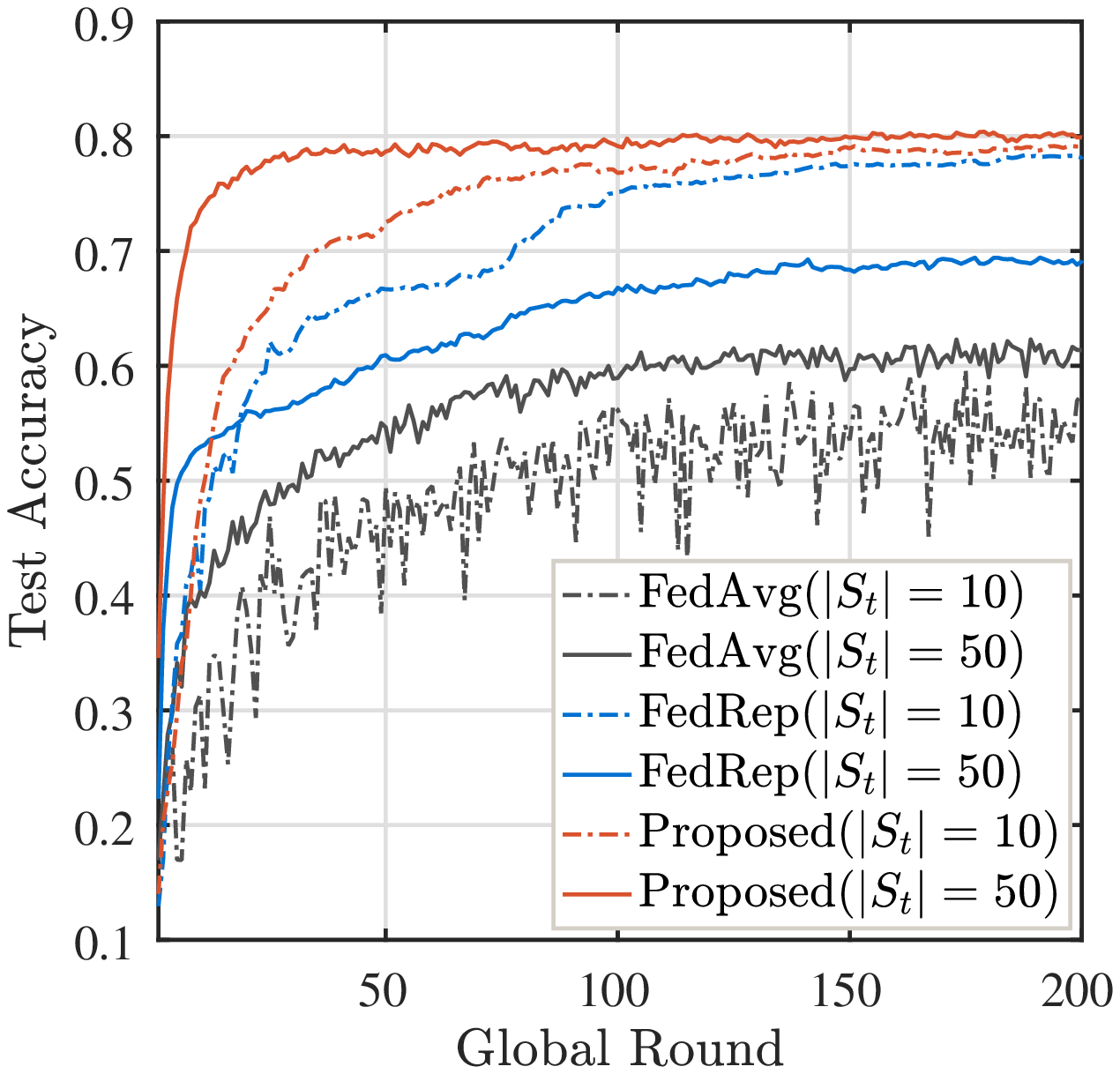}}
\hspace{-0.02\linewidth}
\subfigure[]{\label{fig:schedule_ORsameM_S3}
\includegraphics[width=0.32\linewidth]{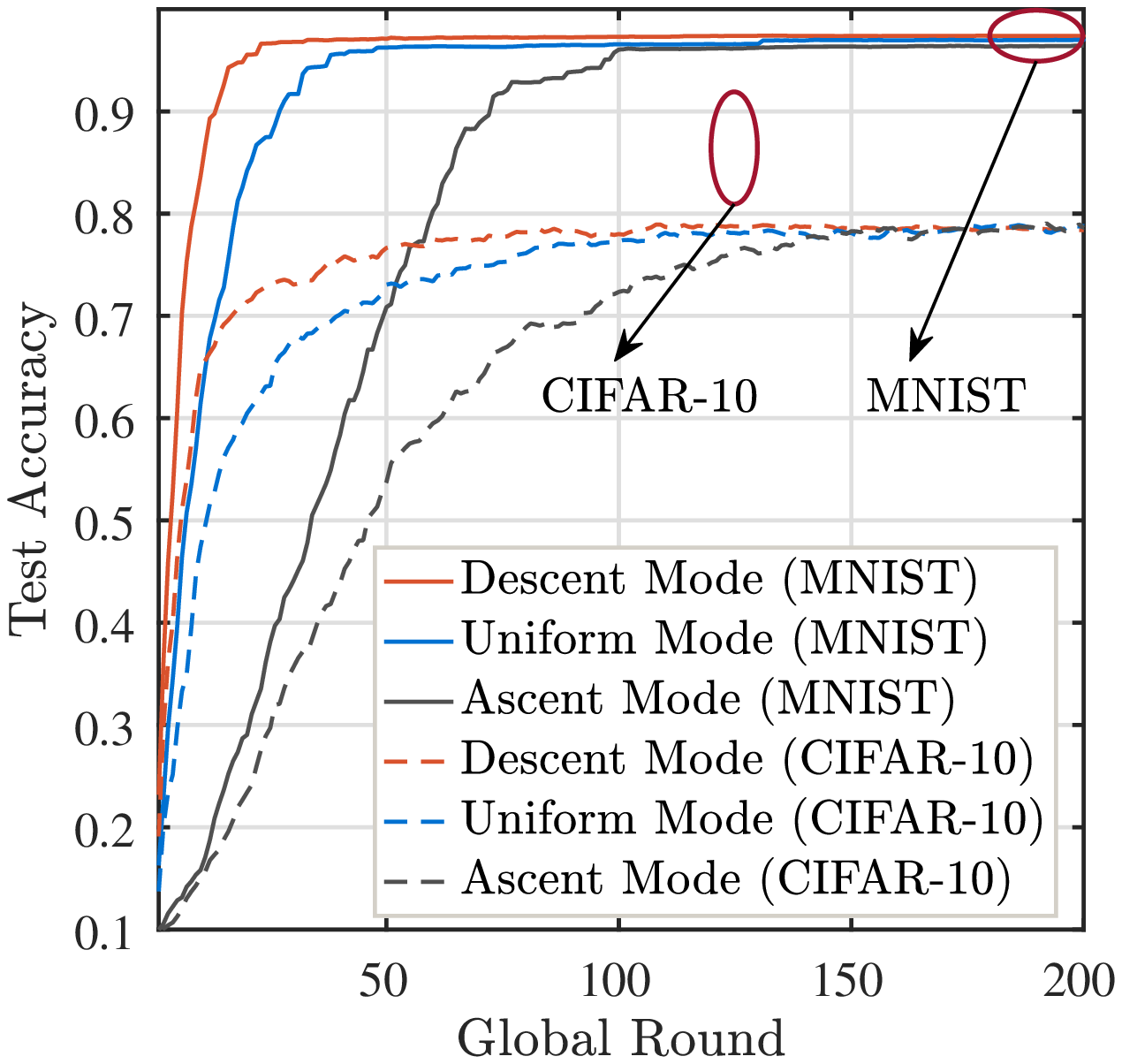}}
\caption{Comparison of learning performance under homogeneous models (a) different algorithms on the MNIST dataset; (b) different algorithms on the CIFAR-10 dataset; (c) different scheduling patterns on MNIST and CIFAR-10 datasets.}
\label{fig:acc_sameM_S3}
\end{figure*}

\begin{figure*}[ht]
\centering
\subfigure[]{\label{fig:mnist_diffM_S3}
\includegraphics[width=0.32\linewidth]{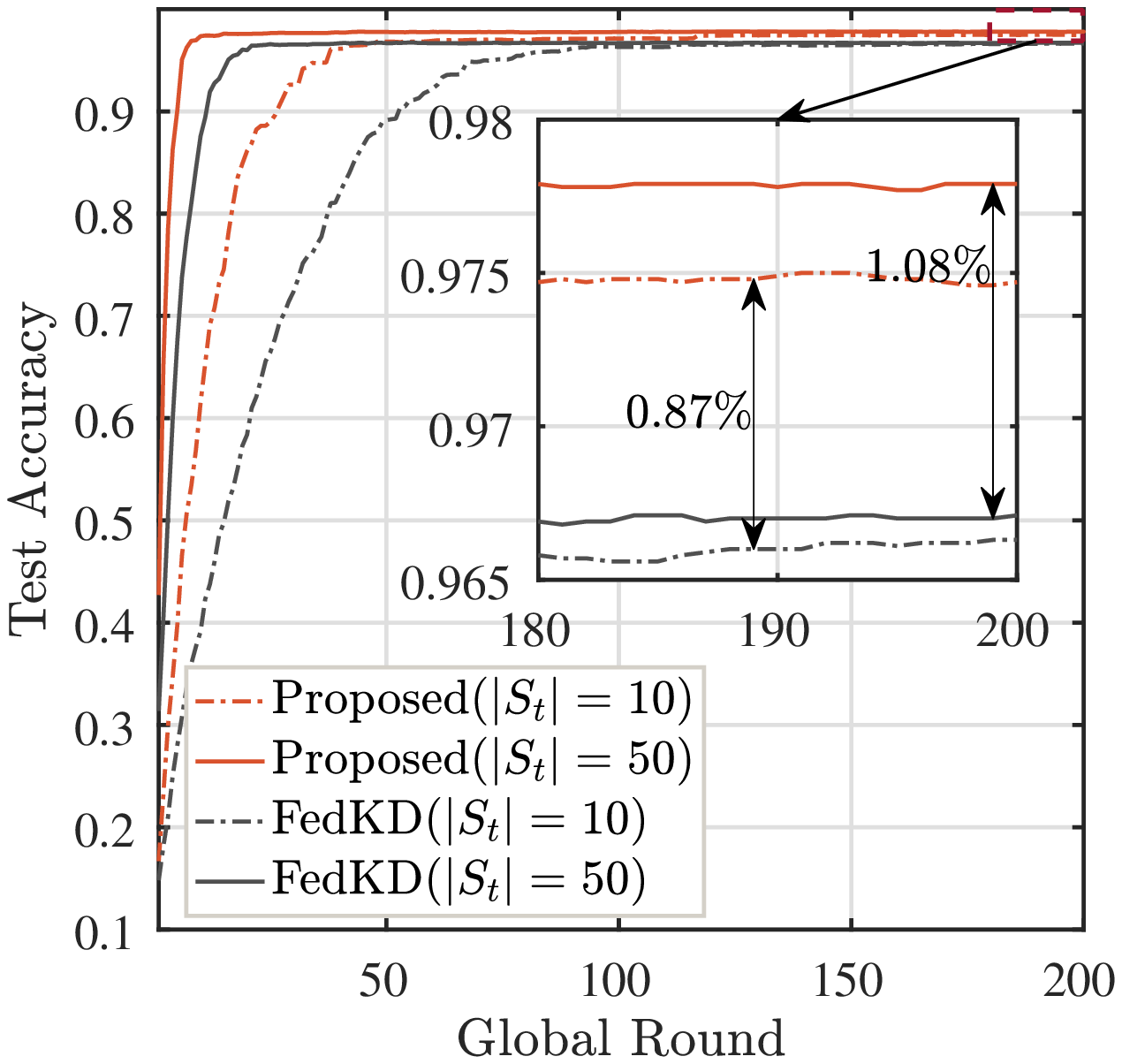}}
\hspace{-0.02\linewidth}
\subfigure[]{\label{fig:cifar10_diffM_S3}
\includegraphics[width=0.32\linewidth]{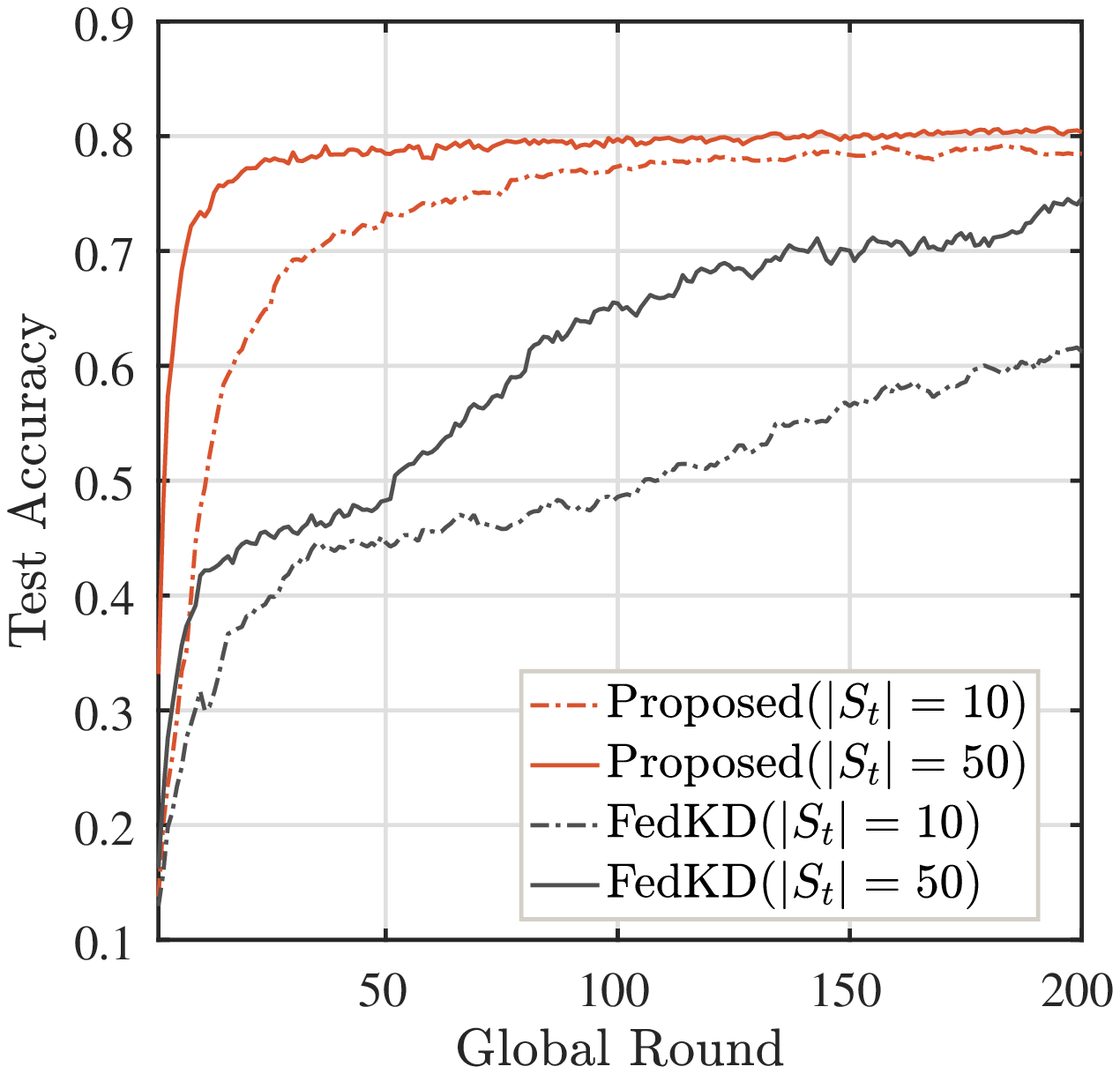}}
\hspace{-0.02\linewidth}
\subfigure[]{\label{fig:schedule_ORdiffM_S3}
\includegraphics[width=0.32\linewidth]{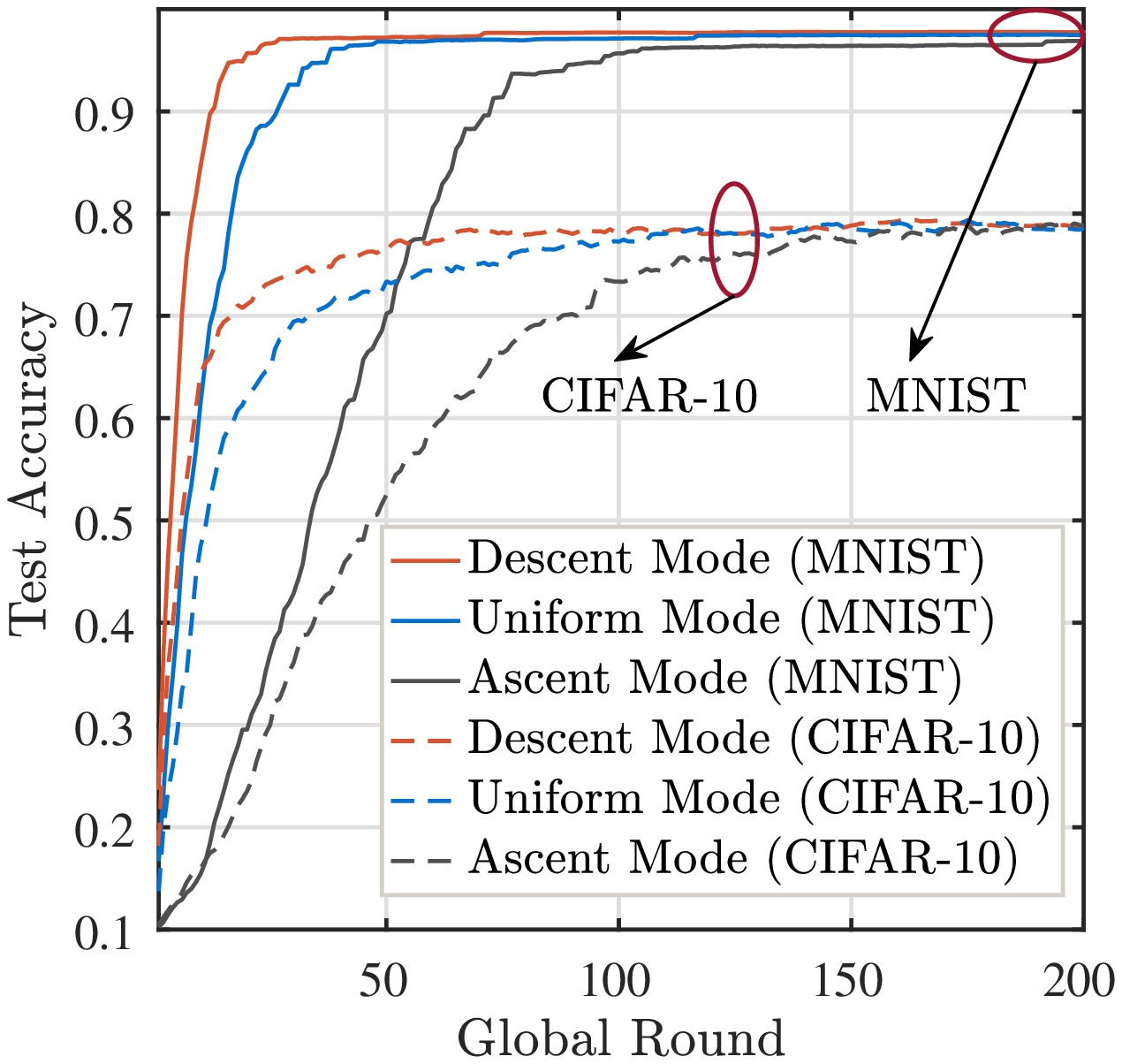}}
\vspace{-0.3cm}
\caption{Comparison of learning performance under heterogeneous models (a) different algorithms on the MNIST dataset; (b) different algorithms on the CIFAR-10 dataset; (c) different scheduling patterns on MNIST and CIFAR-10 datasets.}
\label{fig:acc_diffM_S3}
\end{figure*}

Fig. \ref{fig:acc_diffM_S3} compares the learning performance of the proposed knowledge aggregation-based FL algorithm and FedKD, where devices are equipped with heterogeneous models.
Fig. \ref{fig:cifar10_diffM_S3} presents the results on MNIST dataset. The proposed algorithm obtains 1.08\% and 0.87\% accuracy improvement when 50 and 10 devices participate in the learning process.
Fig. \ref{fig:cifar10_diffM_S3} also shows the proposed algorithm achieves better learning performance than the FedKD on the CIFAR-10 dataset.
Fig. \ref{fig:schedule_ORdiffM_S3} further verifies our theoretical results in Remark 1.

Fig. \ref{fig:scheduleE_S3} compare the test accuracy and cumulative energy usage of the scheduling algorithms on the MNIST and CIFAR-10 dataset.
Fig. \ref{fig:mnistE_accdiffM__S3} and Fig. \ref{fig:mnistE_enerdiffsameM__S3} presents the results on the MNIST dataset, where $\bar{E}=0.1$J and $T_{\max}=1$s. We can see that the proposed online device scheduling algorithm obtains a faster convergence speed and higher test accuracy than the benchmarks.
In particular, the proposed algorithm with $V=0.01$ has the same energy usage as the Adaptive Myopic algorithm, yet achieves better learning performance.
Fig. \ref{fig:cifarE_accdiffM_S3} and Fig. \ref{fig:cifarE_enerdiffM_S3} present the results on CIFAR-10 dataset, where $\bar{E}=0.5$J and $T_{\max}=2$s. It is also observed that the proposed online device scheduling algorithm outperforms the baselines in accuracy and convergence speed.

\begin{figure*}[ht]
\centering
\subfigure[]{\label{fig:mnistE_accdiffM__S3}
\includegraphics[width=0.25\linewidth]{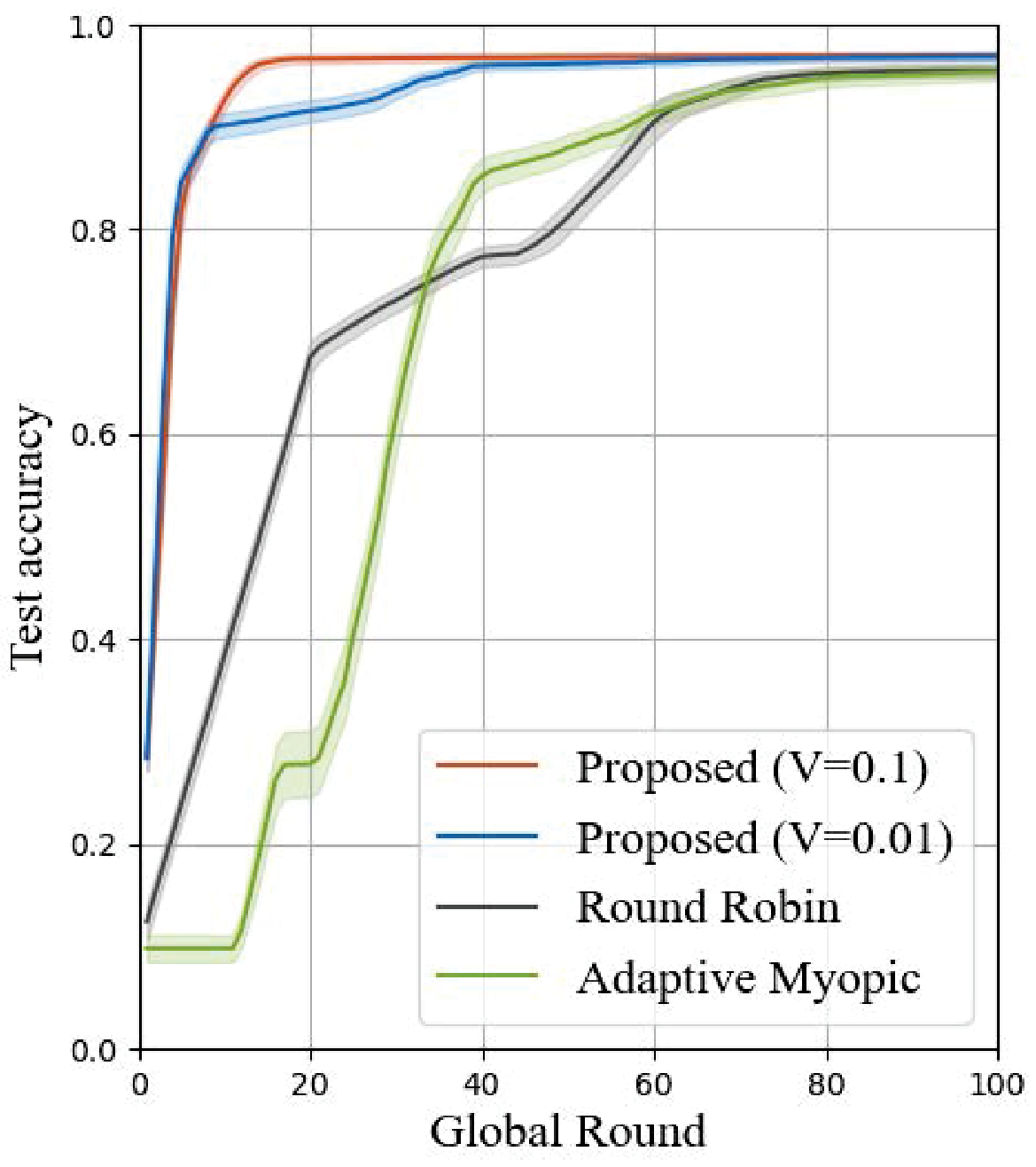}}
\hspace{-0.03\linewidth}
\subfigure[]{\label{fig:mnistE_enerdiffsameM__S3}
\includegraphics[width=0.25\linewidth]{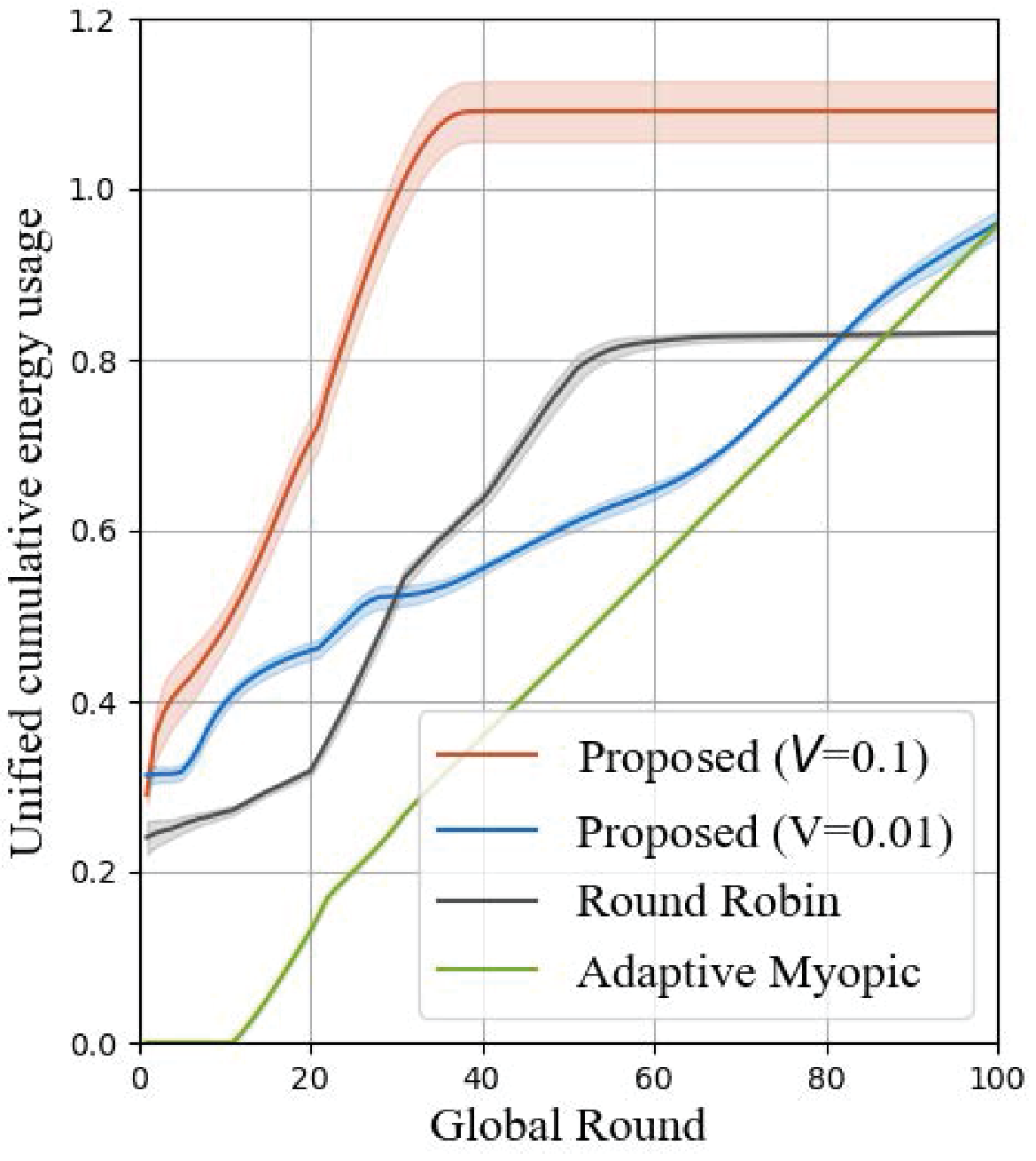}}
\hspace{-0.03\linewidth}
\subfigure[]{\label{fig:cifarE_accdiffM_S3}
\includegraphics[width=0.25\linewidth]{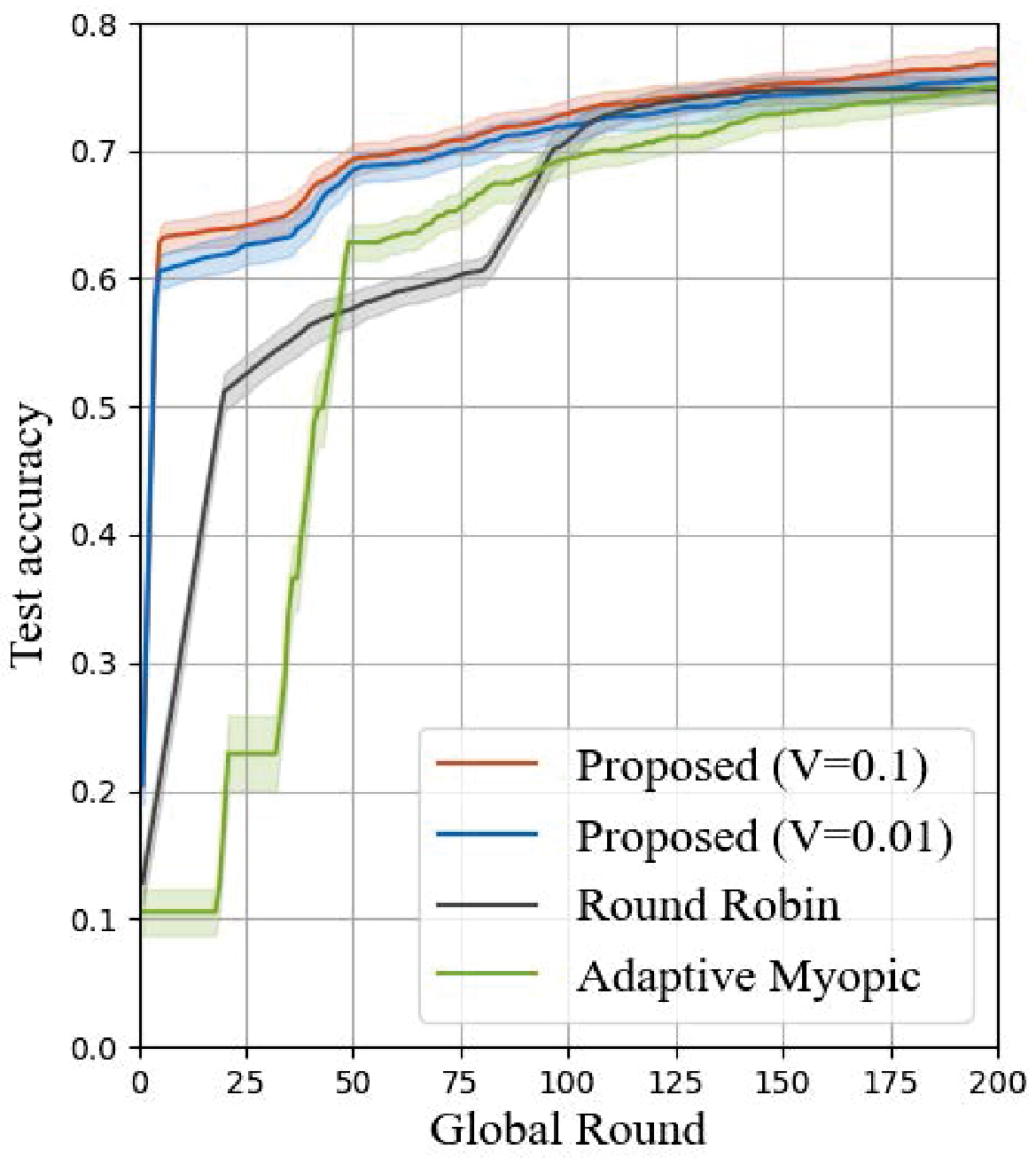}}
\hspace{-0.03\linewidth}
\subfigure[]{\label{fig:cifarE_enerdiffM_S3}
\includegraphics[width=0.25\linewidth]{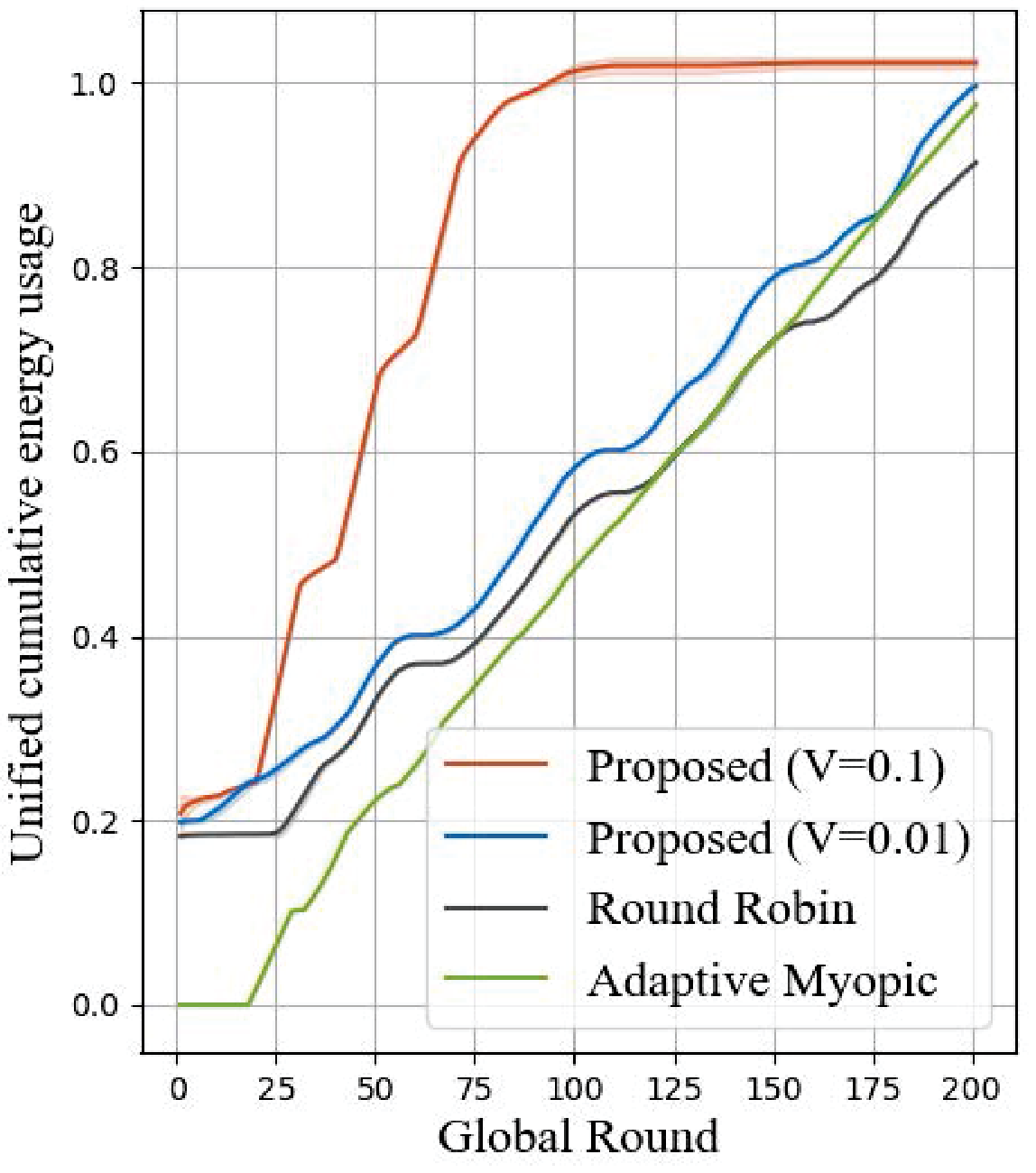}}
\vspace{-0.3cm}
\caption{Comparison of learning performance in different device scheduling algorithms on the MNIST and CIFAR-10 datasets.}
\label{fig:scheduleE_S3}
\end{figure*}

\end{titlepage}

\end{document}